\newcommand{\RN}[1]{%
  \textup{\expandafter{\romannumeral#1}}%
}
\tikzset{
  block/.style    = {draw, thick, rectangle, minimum width = 3em},
  sblock/.style      = {draw, thick, rectangle, minimum height = 3em,
    minimum width = 3em}, 
}
\tikzset{XOR/.style={draw,circle,append after command={
        [shorten >=\pgflinewidth, shorten <=\pgflinewidth,]
        (\tikzlastnode.north) edge (\tikzlastnode.south)
        (\tikzlastnode.east) edge (\tikzlastnode.west)
        }
    }
}
\newcommand\remove[1]{}
\newtheorem{theorem}{Theorem}    \setcounter{theorem}{-1}
\newtheorem{definition}{Definition}
\newtheorem{proposition}{Proposition}
\newtheorem{lemma}{Lemma}
\newtheorem{corollary}{Corollary}
\newtheorem{example}{Example}
\newtheorem{remark}{Remark}
\newtheorem{cnstr}{Construction}
\def\mathbi#1{{\textbf{\textit #1}}}
\newcommand{\cA}{\mathcal{A}}
\newcommand{\cB}{\mathcal{B}}
\newcommand{\cG}{\mathcal{G}}
\newcommand{\cR}{\mathcal{R}}
\newcommand{\cS}{\mathcal{S}}
\newcommand{\cT}{\mathcal{T}}
\newcommand{\cY}{\mathcal{Y}}
\DeclareMathOperator{\odd}{odd}
\DeclareMathOperator{\even}{even}
\DeclareMathOperator{\poly}{poly}
\DeclareMathOperator{\avg}{avg}
\DeclareMathOperator{\BSC}{BSC}
\DeclareMathOperator{\BEC}{BEC}
\begin{document}
\title{Reed-Muller codes polarize}

\author{Emmanuel Abbe \and \hspace*{.4in} Min Ye}

\maketitle
{\renewcommand{\thefootnote}{}\footnotetext{

\vspace{-.2in}
 
\noindent\rule{1.5in}{.4pt}

E. Abbe is with the Mathematics Institute and the School of Computer and Communication Sciences at EPFL, Switzerland, and the Program in Applied and Computational Mathematics and the Department of Electrical Engineering in Princeton University, USA. M. Ye is with Department of Electrical Engineering, Princeton University, Princeton, NJ, email: yeemmi@gmail.com. 
}

\renewcommand{\thefootnote}{\arabic{footnote}}
\setcounter{footnote}{0}

\begin{abstract}
Reed-Muller (RM) codes and polar codes are generated by the same matrix $G_m= \bigl[\begin{smallmatrix}1 & 0 \\ 1 & 1 \\ \end{smallmatrix}\bigr]^{\otimes m}$ but using different subset of rows.  RM codes select simply rows having largest weights. Polar codes select instead rows having the largest conditional mutual information proceeding top to down in $G_m$; while this is a more elaborate and channel-dependent rule, the top-to-down ordering has the advantage of making the conditional mutual information polarize, giving directly a capacity-achieving code on any binary memoryless symmetric channel (BMSC). RM codes are yet to be proved to have such property.

In this paper, we reconnect RM codes to polarization theory. It is shown that proceeding in the RM code ordering, i.e., not top-to-down but from the lightest to the heaviest rows in $G_m$, the conditional mutual information again polarizes. We further demonstrate that it does so faster than for polar codes. This implies that $G_m$ contains another code, different than the polar code and called here the twin code, that is provably capacity-achieving on any BMSC. This proves a necessary condition for RM codes to achieve capacity on BMSCs. It further gives a sufficient condition if the rows with largest conditional mutual information correspond to the heaviest rows, i.e., if the twin code is the RM code. We show here that the two codes bare similarity with each other and give further evidence that they are likely the same. 
\end{abstract}

\section{Introduction}\label{sect:intro}
Reed-Muller codes have long been conjectured to be capacity-achieving.\footnote{See \cite{Kudekar17} for accounts on this conjecture.} This was recently settled in \cite{Kudekar16STOC,Kudekar17} for the special case of the binary erasure channel (BEC), and in \cite{Abbe15,Sberlo18} for special cases of extremal rates on both the BEC and the binary symmetric channel (BSC). The general conjecture of achieving capacity on the BSC and more generally any binary memoryless symmetric (BMS)  channel\footnote{Recall that a BMS channel is a channel $W:\{0,1\}\to\cY$ such that there is a permutation $\pi$ on the output alphabet $\cY$ satisfying i) $\pi^{-1}=\pi$ and ii) $W(y|1)=W(\pi(y)|0)$ for all $y\in\cY$.} at constant rate remains open to date.

The research activity on RM codes has resurged in part due to the development of polar codes \cite{Arikan09,ArikanTelatar}. Both RM codes and polar codes are generated by selecting subset of rows from the same base matrix $G_m=    \bigl[\begin{smallmatrix}
      1 & 0 \\
      1 & 1 \\
   \end{smallmatrix}\bigr]^{\otimes m}$. Polar codes select the rows by tracking the conditional mutual information of each row given the past rows when proceeding top to down in $G_m$ (see Section \ref{sect:bp} for precise definitions). In this specific ordering, Ar\i kan was able to show a polarization result \cite{Arikan09}, i.e., that most of the rows have a conditional mutual information that tend to either 0 or 1. This in turn implies fairly directly that the code resulting from keeping the high conditional mutual information rows is capacity-achieving on any BMS channel.

A first drawback of polar codes is that the code construction, i.e., identifying the rows having high conditional mutual information, is non-trivial. In particular, there is to date no known explicit characterization of the row selection except for the BEC. This is however not an algorithmic limitation as there are known efficient algorithms that approximate arbitrarily closely the values of the conditional mutual information for each row \cite{Tal13}. Two more important drawbacks are that polar codes are not universal \cite{Hassani09}, as their row selection is channel dependent, and their scaling law is sub-optimal compared to that of random codes \cite{Hassani14} or RM codes \cite{Hassani18}, making their error probability at short block length not as competitive as could be \cite{Mondelli14}. On the flip side, polar codes benefit from a powerful analytical framework, the polarization framework \cite{Arikan09,Guruswami15,Blasiok18}, which allows to give performance guarantees, and from an efficient successive decoding algorithm. Their performance at short block length has also been improved with the addition of outer codes and list decoding algorithms \cite{Tal15}. With these attributes, polar codes are in  position to enter the 5G standards \cite{3gpp}.

On the other side, RM codes benefit from a  simple and universal code construction: selecting the heaviest rows is trivial and depends only on the capacity of the channel and not the actual channel. Further, it is already known that RM codes would have an optimal scaling law {\it if} they were proved to be capacity-achieving  \cite{Hassani18}. Performance improvements over polar codes at short block length were also demonstrated in \cite{Mondelli14}. On the flip side, the main challenges of RM codes are (i) their analytical framework, with the difficulty of obtaining performance guarantees, (ii) the absence of an efficient decoding algorithm that succeeds up to capacity for the constant rate regime.

\subsection{Recent progress}
As mentioned earlier, progress has recently been made on both points (i) and (ii). We mention briefly here a few references for decoding algorithms \cite{Reed54,Dumer04,Saptharishi17,Santi18}, as this not the main focus of this paper. We refer to our parallel paper with a new decoding algorithm \cite{YA18} for a more detailed discussion of those. 

We now discuss performance guarantees. In \cite{Kudekar17}, the case of the BEC is settled by exploiting results on the threshold of monotone Boolean functions, benefiting from the fact that the events of decoding failures for erasures correspond to monotone properties of Boolean functions. With this link, general results from Boolean function analysis \cite{Kahn88,Talagrand94,Friedgut96} come to rescue and allow to close the conjecture for the BEC. While this gives an elegant proof, it has the downside of relying on a ``Hammer'' result \cite{Kahn88,Bourgain92} that does not seem to generalize easily beyond erasures due to the loss of the monotonicity property. The approach of \cite{Abbe15} relies instead on the polynomial characterization of RM codes (whose codewords can be viewed as the evaluation of bounded degree multivariate Boolean polynomials) and on the weight enumerator of RM codes. A downside of that approach is that it is currently not reaching the constant rate regime; although some recent progress towards that goal was  made in \cite{Sberlo18}.

Moreover, none of the above seem to shed light on the connection between polar and RM codes, which remains a recurrent question. A first attempt to connect RM codes to polar code was made in \cite{Wigderson16}, using the double conditional rank measure in relation to the algebraic view of polarization \cite{Yuval15}; conjectures based on this approach were left in \cite{Wigderson16}.

\subsection{This paper}
Considering the developments so far, it may appear that the simplicity of the RM code construction fires back in the complexity of their analysis, in contrast to polar codes, where a more elaborate construction allows to benefit from the powerful polarization framework.

The goal of this paper is to show that this is not a necessary limitation, and that RM codes benefit too from a polarization phenomenon, slightly different but potentially more effective than that of polar codes.  
We view RM codes as the evaluation of multivariate polynomials and make use of the recursive Plotkin construction\footnote{Any $d$-degree polynomial can be decomposed with two $(d-1)$-degree polynomials as $f(x^d)=x_d f_1(x^{d-1}) + f_0(x^{d-1})$.}  $(\mathbi{u},\mathbi{u}+\mathbi{v})$  
\cite{Macwilliams77}, which is similar in nature to the recursive construction of polar codes. 
Together with the establishment of an ordering on the conditional mutual information of RM codes, we derive a new polarization result for the RM code ordering, and obtain consequently the capacity-achieving result for the resulting code (called the twin code) that selects high conditional mutual information rows in the RM code ordering. The proof relies solely on classical polarization properties and some algebra. 

This result gives in particular a necessary condition for RM codes to achieve capacity on any BMS channel. It gives also a sufficient  condition if the  rows  with  largest  conditional mutual information correspond to the heaviest rows, i.e., if the twin code is the RM code. We give a relaxed version of the latter, showing that the twin code is similar to the RM code, and give further evidence that it is in fact the RM code. Note that in the contrary case, i.e., if the twin code were not equivalent to the RM code, then RM codes would not achieve capacity on some BMS channels.

\section{Background}
\subsection{RM codes}

Let us consider the polynomial ring $\mathbb{F}_2[Z_1,Z_2,\dots,Z_m]$ of $m$ variables over $\mathbb{F}_2$. Since $Z^2=Z$ in $\mathbb{F}_2$, the following set of $2^m$ monomials forms a basis of $\mathbb{F}_2[Z_1,Z_2,\dots,Z_m]$:
$$
\{\prod_{i\in A}Z_i: A\subseteq [m] \}, \text{~where~} \prod_{i\in \emptyset}Z_i :=1.
$$
Next we associate every subset $A\subseteq [m]$ with a row vector $\mathbi{v}_m(A)$ of length $2^m$, whose components are indexed by binary vectors $\mathbi{z}=(z_1,z_2,\dots,z_m) \in \{0,1\}^m$.
The vector $\mathbi{v}_m(A)$ is defined as follows:
\begin{equation}\label{eq:gg}
\mathbi{v}_m(A,\mathbi{z}) = \prod_{i\in A} z_i,
\end{equation}
where $\mathbi{v}_m(A,\mathbi{z})$ is the component of $\mathbi{v}_m(A)$ indexed by $\mathbi{z}$,
i.e., $\mathbi{v}_m(A,\mathbi{z})$ is the evaluation of the monomial $\prod_{i\in A}Z_i$ at $\mathbi{z}$.
For $0\le r \le m$, the set of vectors 
$$
\{\mathbi{v}_m(A):A\subseteq[m],|A|\le r\}
$$
forms a basis of the $r$-th order Reed-Muller code $\cR(m,r)$ of length $n:=2^m$
and dimension $\sum_{i=0}^r \binom{m}{i}$.

\begin{definition}
 The $r$-th order Reed-Muller code $\cR(m,r)$ code is defined as the following set of binary vectors 
$$
\cR(m,r) := \left\{\sum_{A\subseteq[m],|A|\le r}u(A) \mathbi{v}_m(A): u(A)\in\{0,1\} 
\text{~~for all~} A\subseteq[m],|A|\le r\right\}.
$$
\end{definition}

\begin{example}\label{ex:m3}
We write out a basis of $\cR(3,3)$ as follows:
$$
\begin{array}{ccccccccc}
 (z_1,z_2,z_3) & (1,1,1) & (1,1,0) & (1,0,1) & (1,0,0) & (0,1,1) & (0,1,0) & (0,0,1) & (0,0,0) \\
A=\{3,2,1\} & 1 & 0 & 0 & 0 & 0 & 0 & 0 & 0 \\
A=\{2,1\} & 1 & 1 & 0 & 0 & 0 & 0 & 0 & 0 \\
A=\{3,1\} & 1 & 0 & 1 & 0 & 0 & 0 & 0 & 0 \\
A=\{3,2\} & 1 & 0 & 0 & 0 & 1 & 0 & 0 & 0 \\
A=\{1\} & 1 & 1 & 1 & 1 & 0 & 0 & 0 & 0 \\
A=\{2\} & 1 & 1 & 0 & 0 & 1 & 1 & 0 & 0 \\
A=\{3\} & 1 & 0 & 1 & 0 & 1 & 0 & 1 & 0 \\
A=\emptyset & 1 & 1 & 1 & 1 & 1 & 1 & 1 & 1 
\end{array},
$$
where the first row lists the index $\mathbi{z}$ of each component, and the second to the last rows are $\mathbi{v}_3(A), A\subseteq[3]$.
\end{example}

In this paper, we prove a polarization result for Reed-Muller codes. To that end,
we define a total order on all the subsets of $[m]$ as follows: 
\begin{definition} [total order] \label{def:tto}
For $A=\{a_1,a_2,\dots,a_{|A|}\},B=\{b_1,b_2,\dots,b_{|B|}\}\subseteq[m]$, where $a_1>a_2>\dots>a_{|A|}$ and $b_1>b_2>\dots>b_{|B|}$,
 we write $A<B$ if either of the following two conditions is satisfied:
\begin{enumerate}
\item $|A|>|B|$;
\item $|A|=|B|$, and there is an integer $i\in\{1,2,\dots,|A|\}$ such that $a_j=b_j\forall j<i$ and $a_i<b_i$.
\end{enumerate}
\end{definition}
It is easy to check that for any two sets $A,B\subseteq[m]$, one of the following three relations must hold: $A<B,A=B$ or $A>B$. Therefore, this is indeed a total order on all the subsets of $[m]$.
Note that condition $1)$ ensures that picking the `largest' sets layer by layer gives the RM code. Condition $2)$ says how to order the rows within a layer (e.g., if the code dimension requires breaking a layer), but any ordering resulting from a permutation of the elements in $[m]$ would be equivalent. We pick this convention as we like to see the $m$-th element as the `new element' when running the forthcoming inductions.

For $m=3$, the rows in Example~\ref{ex:m3} are listed in the increasing order of the set $A$.
Let $(U_A^{(m)}:A\subseteq[m])$ be $n:=2^m$ i.i.d. Bernoulli-$1/2$ random variables.
We use the shorthand notation $U_{<A}^{(m)}:=(U_{A'}^{(m)}:A'\subseteq[m],A'<A)$ and $U^{(m)}:=(U_A^{(m)}:A\subseteq[m])$.
Next we define another $n$ i.i.d. Bernoulli-$1/2$ random variables $X_{\mathbi{z}}^{(m)},\mathbi{z}\in\{0,1\}^m$ by
$$
(X_{\mathbi{z}}^{(m)},\mathbi{z}\in\{0,1\}^m) :=
\sum_{A\subseteq[m]} U_A^{(m)} \mathbi{v}_m(A).
$$
We transmit $X_{\mathbi{z}}^{(m)},\mathbi{z}\in\{0,1\}^m$ through $n$ independent copies of a BMS channel $W:\{0,1\}\to\cY$, and we denote the corresponding channel outputs as $Y_{\mathbi{z}}^{(m,W)},\mathbi{z}\in\{0,1\}^m$. 
Let 
$$
X^{(m)}:=(X_{\mathbi{z}}^{(m)}:\mathbi{z}\in\{0,1\}^m) \quad \text{~and~} \quad
Y^{(m,W)}:=(Y_{\mathbi{z}}^{(m,W)}:\mathbi{z}\in\{0,1\}^m).
$$
Since $W$ is symmetric and $(X_{\mathbi{z}}^{(m)},\mathbi{z}\in\{0,1\}^m)$ are also i.i.d.\ Bernoulli-$1/2$ random variables, we have for all $\mathbi{z}\in\{0,1\}^m$, 
  $H(X_{\mathbi{z}}^{(m)}|Y_{\mathbi{z}}^{(m,W)})=1-I(W)$, and therefore
$H(U^{(m)}|Y^{(m,W)})=H(X^{(m)}|Y^{(m,W)})=nH(X_{\mathbi{z}}^{(m)}|Y_{\mathbi{z}}^{(m,W)})=n(1-I(W))$, where
$H(\cdot|\cdot)$ is conditional entropy and $I(\cdot)$ is the channel capacity (or the symmetric capacity for channels that are not BMS).
Thus 
$$
\sum_{A\subseteq[m]} H(U_A^{(m)}|Y^{(m,W)},U_{<A}^{(m)}) = n(1-I(W)).
$$
For convenience, we use the notation
\begin{equation}\label{eq:ha}
H_A^{(m,W)}:=H(U_A^{(m)}|Y^{(m,W)},U_{<A}^{(m)}).
\end{equation}
From now on, we omit to specify $W$ from the notation $H_A^{(m,W)}, Y^{(m,W)}$ and $Y_{\mathbi{z}}^{(m,W)}$ when the underlying channel is not important, i.e., we write them as $H_A^{(m)}, Y^{(m)}$ and $Y_{\mathbi{z}}^{(m)}$.
Therefore,
\begin{equation}\label{eq:kw}
\sum_{A\subseteq[m]} H_A^{(m)} = n(1-I(W)).
\end{equation}
We also define the channel $W_A^{(m)}$ as the binary-input channel that takes $U_A^{(m)}$ as input and $Y^{(m)},U_{<A}^{(m)}$ as outputs, i.e., $W_A^{(m)}$ is the channel seen by the successive decoder when decoding $U_A^{(m)}$. 

In order to state our main results, we also need the definition of the Bhattacharyya parameter.
Let $(X,Y)$ be a pair of random variables such that $X$ has Bernoulli-$1/2$ distribution and $Y$ takes values from a finite alphabet $\cY$. The Bhattacharyya parameter is defined as 
$$
Z(X|Y):= \sum_{y\in\cY} \sqrt{P_{Y|X}(y|0) P_{Y|X}(y|1)}.
$$
Similarly to $H_A^{(m)}$, for a subset $A\subseteq[m]$ and a BMS channel $W$ we use the shorthand notation
$$
Z_A^{(m)}=Z_A^{(m,W)}:=Z(U_A^{(m)}|Y^{(m,W)},U_{<A}^{(m)}).
$$

\subsection{Polarization}  \label{sect:bp}
The polar coding transform is given by the following $n\times n$ matrix
\begin{equation}\label{eq:tbd}
G_m:=\left[\begin{array}{cc} 1 & 0 \\ 1 & 1 \end{array} \right]^{\otimes m},
\end{equation}
where $\otimes$ is the Kronecker product and $n=2^m$.

\begin{figure}[h]
\centering
\begin{subfigure}{.45\linewidth}
\centering
\begin{tikzpicture}
\draw
 node at (0,10.5) [] (u1)  {$U_1$}
 node at (0,9) [] (u2)  {$U_2$}
 node at (1.5,10.5) [XOR,scale=1.2] (x1) {}
  node at (2.5,10.5) [] (xx1)  {$X_1$}
 node at (2.5,9) [] (xx2)  {$X_2$}
 node at (3.8,10.5) [block] (v1)  {$W$}
 node at (3.8,9) [block] (v2)  {$W$}
 node at (5.5,10.5) [] (y1)  {$Y_1$}
 node at (5.5,9) [] (y2)  {$Y_2$};
 \draw[very thick,->](u1) -- node {}(x1);
 \draw[very thick,->](u2) -| node {}(x1);
 \draw[very thick,->](x1) -- (xx1);
 \draw[very thick,->](u2) -- (xx2);
 \draw[very thick,->](xx1) -- (v1);
 \draw[very thick,->](xx2) -- (v2);
 \draw[very thick,->](v1) -- node {}(y1);
 \draw[very thick,->](v2) -- node {}(y2);
\end{tikzpicture}
\caption{Apply the polar matrix $G_1$ to i.i.d. uniform random variables $(U_1,U_2)$, and then transmit the results through two copies of $W$. Under successive decoder, this transforms two copies of $W$ into $W^-:U_1\to Y_1,Y_2$ and $W^+:U_2\to U_1,Y_1,Y_2$.}
\end{subfigure}
\hspace*{0.2in}
\begin{subfigure}{.45\linewidth}
\centering
\begin{tikzpicture}
\draw
 node at (0,1.5) [] (u1)  {$W$}
 node at (0,0) [] (u2)  {$W$}
 node at (1.2,1.5) [] (v1)  {}
 node at (1.2,0) [] (v2)  {}
 node at (3.7,1.5) [] (x1)  {}
 node at (3.7,0) [] (x2)  {}
 node at (3.7,1.5) [] (x1)  {}
 node at (3.7,0) [] (x2)  {}
 node at (5.2,1.5) [] (y1)  {$W^-$}
 node at (5.2,0) [] (y2)  {$W^+$}
 node at (2.45,0.75) [text width=2cm,align=center] {Polar\\Transform};
 \draw[thick] (1.2,-0.5) rectangle (3.7,2);
 \draw[very thick,->](u1) -- node {}(v1);
 \draw[very thick,->](u2) -- node {}(v2);
 \draw[very thick,->](x1) -- node[above] {``$-$"}(y1);
 \draw[very thick,->](x2) -- node [above] {``$+$"} (y2);
\end{tikzpicture}
\caption{We take two independent copies of $W$ as inputs. After the transform, we obtain a ``worse" channel $W^-:U_1\to Y_1,Y_2$ and a ``better" channel $W^+:U_2\to U_1,Y_1,Y_2$.}
\end{subfigure}
\caption{Polar Transform}
\label{fig:ptb}
\end{figure}
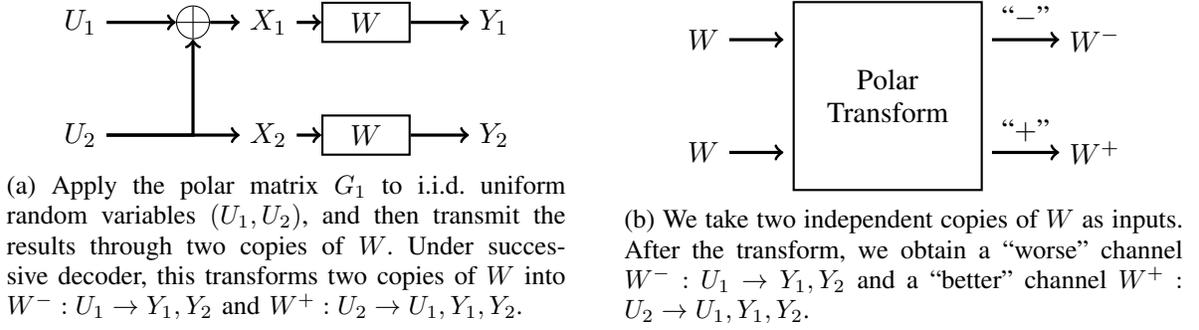

The basic idea of polar coding is that after applying the polar matrix $G_1$ defined in \eqref{eq:tbd}, we obtain a ``worse" channel $W^-:U_1\to Y_1,Y_2$ and a ``better" channel $W^+:U_2\to U_1,Y_1,Y_2$; see Fig.~\ref{fig:ptb} for an illustration. 
This statement can be quantified by the relations among the conditional entropy:
\begin{align}
H(U_1|Y_1,Y_2) \ge H(X_1|Y_1) \ge H(U_2|U_1,Y_1,Y_2), \label{eq:kbd} \\
H(U_1|Y_1,Y_2) + H(U_2|U_1,Y_1,Y_2) = 2 H(X_1|Y_1).  \label{eq:ndm}
\end{align}
Similar relations among the Bhattacharyya parameters were also proved in \cite[Proposition 5]{Arikan09}:
\begin{align}
Z(U_2|Y_1,Y_2,U_1) &= (Z(X_1|Y_1))^2, \label{eq:ks}\\
 Z(U_1|Y_1,Y_2)  &\ge Z(X_1|Y_1) . \label{eq:lsl}
\end{align}

Moreover, if $H(X_1|Y_1)$ is bounded away from $0$ and $1$, then the gap between $H(U_1|Y_1,Y_2)$ and $H(X_1|Y_1)$ is bounded away from $0$, and so does the gap between $H(X_1|Y_1)$ and $H(U_2|U_1,Y_1,Y_2)$. (By \eqref{eq:ndm}, these two gaps are the same.) In other words, if $W$ is neither noiseless nor completely noisy, then $W^-$ is strictly worse than $W$, and $W^+$ is strictly better. The rigorous statement is as follows.
\begin{lemma}[\cite{Arikan09}] \label{lm:sas}
Let $(X_1,Y_1)$ and $(X_2,Y_2)$ be an independent pair of random variables, where $X_1$ and $X_2$ take values in $\{0,1\}$. 
For all $\epsilon>0$, there is $\delta(\epsilon)>0$ such that 
$$
H(X_1|Y_1),H(X_2|Y_2)\in(\epsilon,1-\epsilon)
$$
implies
$$
H(X_1+X_2|Y_1,Y_2)\ge \max(H(X_1|Y_1),H(X_2|Y_2)) + \delta(\epsilon).
$$
\end{lemma}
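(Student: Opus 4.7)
The plan is to exploit the bijection $(X_1, X_2) \leftrightarrow (X_1+X_2, X_2)$ together with a sufficient-statistic reduction, and turn the quantitative strict inequality into a compactness statement. First, I would reduce to the canonical form in which each observation $Y_i$ is its own posterior $Q_i := P(X_i=1 \mid Y_i) \in [0,1]$: since $Q_i$ is a sufficient statistic for $X_i$ given $Y_i$, the three quantities $H(X_1|Y_1)$, $H(X_2|Y_2)$ and $H(X_1+X_2 \mid Y_1, Y_2)$ are unchanged under this reduction, so I may assume without loss of generality that $Y_i \in [0,1]$ and $P(X_i=1 \mid Y_i=y_i) = y_i$.

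Second, I would write down the chain-rule identity
\[
H(X_1|Y_1) + H(X_2|Y_2) \;=\; H(X_1+X_2 \mid Y_1,Y_2) + H(X_2 \mid X_1+X_2, Y_1, Y_2),
\]
which comes from the bijection $(X_1, X_2) \leftrightarrow (X_1+X_2, X_2)$ and the independence of the two pairs, and use it to translate the lemma into an upper bound on $H(X_2 \mid X_1+X_2, Y_1, Y_2)$. Conditioning on $(Y_1, Y_2) = (y_1, y_2)$, the sum entropy becomes $h(p_1 * p_2)$ with $p_i = P(X_i=1\mid Y_i=y_i)$ and $p_1 * p_2 := p_1 + p_2 - 2p_1 p_2$. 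The elementary identity $p_1 * p_2 - \tfrac12 = -2(p_1 - \tfrac12)(p_2 - \tfrac12)$, combined with concavity of $h$ and its symmetry around $\tfrac12$, yields the pointwise bound $h(p_1 * p_2) \geq \max(h(p_1), h(p_2))$, and hence the non-strict form of the lemma after taking expectations.

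Third, I would upgrade this non-strict bound to a uniform strict bound by compactness. The space of joint laws of independent $(Y_1, Y_2) \in [0,1]^2$ is weakly compact; the functionals $\bE[h(Y_i)]$ and $\bE[h(Y_1 * Y_2)]$ are weakly continuous since $h$ and $*$ are continuous and bounded on $[0,1]^2$; and the level set where both marginal entropies lie in $[\e, 1-\e]$ is closed. Thus
\[
\delta(\e) \;:=\; \inf\bigl\{\, H(X_1+X_2|Y_1,Y_2) - \max(H(X_1|Y_1), H(X_2|Y_2)) \,\bigr\}
\]
is attained. Analyzing the equality case in the pointwise bound shows that $h(p_1 * p_2) = h(p_1)$ forces $p_1 = \tfrac12$ or $p_2 \in \{0,1\}$; pushing this through the expectation and using independence of $Y_1$ and $Y_2$ shows that $\delta(\e) = 0$ would require $H(X_1|Y_1) \in \{0,1\}$ or $H(X_2|Y_2) \in \{0,1\}$, both of which violate the hypothesis. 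Hence $\delta(\e) > 0$, as required.

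The main obstacle I anticipate is keeping the compactness argument clean: one has to state the sufficient-statistic reduction carefully (so that $H(X_1+X_2 \mid Y_1, Y_2)$ really does reduce to a functional of the two $[0,1]$-valued posteriors) and then verify weak continuity of the relevant functionals on the product space. Beyond those routine checks, the argument is essentially the chain-rule identity plus elementary properties of the binary entropy function and the $*$-operation.
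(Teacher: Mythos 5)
The paper does not prove this lemma; it cites it directly from \cite{Arikan09}, so there is no internal proof to compare against. Your argument is correct as a self-contained proof. The sufficient-statistic reduction to posteriors, the computation $H(X_1+X_2\mid Y_1=y_1,Y_2=y_2)=h(p_1\ast p_2)$ with $p_1\ast p_2=p_1+p_2-2p_1p_2$, the pointwise bound $h(p_1\ast p_2)\ge\max(h(p_1),h(p_2))$ obtained from $|p_1\ast p_2-\tfrac{1}{2}|=2|p_1-\tfrac{1}{2}|\,|p_2-\tfrac{1}{2}|\le\min(|p_1-\tfrac{1}{2}|,|p_2-\tfrac{1}{2}|)$ together with unimodality of $h$, and the weak-compactness argument to extract a uniform $\delta(\epsilon)$ all go through. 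Your equality-case analysis (using Fubini and the product structure of the law to conclude that either $p_1=\tfrac{1}{2}$ almost surely or $p_2\in\{0,1\}$ almost surely) correctly shows the infimum is strictly positive on the compact constraint set $\{H(X_i\mid Y_i)\in[\epsilon,1-\epsilon]\}$, and this bound transfers to the open interval in the hypothesis.

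Two small remarks. First, the chain-rule identity in your second step is never actually used: you establish the non-strict bound directly from the pointwise inequality and the elementary fact that $\mathbb{E}[\max(h(p_1),h(p_2))]\ge\max(\mathbb{E}[h(p_1)],\mathbb{E}[h(p_2)])$, so the identity can be dropped without loss. Second, the standard route in the polar-coding literature proves this via Mrs.\ Gerber's Lemma, which gives the sharper, explicit bound $H(X_1+X_2\mid Y_1,Y_2)\ge h\bigl(h^{-1}(H(X_1\mid Y_1))\ast h^{-1}(H(X_2\mid Y_2))\bigr)$; one then takes $\delta(\epsilon)$ to be the infimum of the continuous function $(a,b)\mapsto h(h^{-1}(a)\ast h^{-1}(b))-\max(a,b)$ over the compact square $[\epsilon,1-\epsilon]^2$, which is attained and strictly positive by the same equality-case analysis in two real variables. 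Your infinite-dimensional compactness of measures is a valid substitute and avoids invoking Mrs.\ Gerber by name, but it is heavier than necessary and gives no quantitative handle on $\delta(\epsilon)$.
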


Polar coding scheme consists of applying the polar matrix $G_m$ to $n$ i.i.d. uniform random variables and transmitting the results through $n$ copies of $W$. This amounts to iteratively applying the ``$+$" and ``$-$" polar transforms to $W$, and almost all the resulting bit-channels seen by the successive decoder becomes either noiseless or completely noisy.
Let $\{\widetilde{H}_A^{(m)}\}_{A\subseteq[m]}$ (resp., $\{\widetilde{Z}_A^{(m)}\}_{A\subseteq[m]}$) be the conditional entropy (resp., Bhattacharyya parameter) of each row given all the past rows when decoding top to down in $G_m$.

\begin{theorem}[Polarization of polar codes \cite{Arikan09}]
For every BMS channel $W$, almost all elements in the set $\{\widetilde{H}_A^{(m)}\}_{A\subseteq[m]}$ are close to either $0$ or $1$ when $m$ is large. More precisely,
for any $0<\epsilon<1/10$ and any $\delta_n>\exp(-n^{0.499})$, there is a constant $M(\epsilon,\delta_n)$ such that for every $m>M(\epsilon,\delta_n)$,
$$
\frac{\left| \left\{A\subseteq[m]:  \widetilde{H}_A^{(m)} > 1-\epsilon \right\}
\cup \left\{A\subseteq[m]:  \widetilde{Z}_A^{(m)} < \delta_n \right\}
\right|}{2^m}
\ge 1- o(1).
$$
\end{theorem}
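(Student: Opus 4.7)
The plan is to follow Ar\i kan's polarization martingale argument together with the Ar\i kan--Telatar refinement that sharpens weak polarization into the exponential rate needed for $\delta_n > \exp(-n^{0.499})$. First, I would encode the bit-channel decomposition as a random process: let $B_1,B_2,\dots$ be i.i.d.\ uniform on $\{-,+\}$, and identify the sequence $(B_1,\dots,B_m)$ with a subset $A_m\subseteq[m]$ in the natural way. Set $H_m := \widetilde{H}_{A_m}^{(m)}$ and $Z_m := \widetilde{Z}_{A_m}^{(m)}$. By \eqref{eq:ndm}, $(H_m)$ is a $[0,1]$-valued martingale with respect to the filtration generated by $B_1,\dots,B_m$, so by the martingale convergence theorem it converges a.s.\ and in $L^1$ to some $H_\infty\in[0,1]$. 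The fraction-of-subsets statement is equivalent to controlling the joint distribution of $(H_m,Z_m)$ under this uniform random choice.

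Next I would establish \emph{weak polarization}, namely $H_\infty\in\{0,1\}$ a.s. This is exactly where Lemma~\ref{lm:sas} is applied: if $H_\infty$ took a value in $(\epsilon,1-\epsilon)$ with positive probability, then infinitely often $H_{m+1}-H_m$ would have magnitude at least $\delta(\epsilon)>0$ on the $-$ branch (and by \eqref{eq:ndm} also $\delta(\epsilon)$ on the $+$ branch), contradicting $H_m\to H_\infty$. Combined with $\bE[H_m]=1-I(W)$ and dominated convergence, this gives $\Pr(H_\infty=1)=1-I(W)$ and $\Pr(H_\infty=0)=I(W)$. Translating back, for any $\epsilon\in(0,1/10)$ the fraction of $A\subseteq[m]$ with $\widetilde{H}_A^{(m)}\in(\epsilon,1-\epsilon)$ tends to $0$.

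Then I would upgrade the low-entropy side to an exponentially small Bhattacharyya parameter, which is the heart of the theorem. The key input is the one-step Bhattacharyya recursion: $Z(W^+)=Z(W)^2$ from \eqref{eq:ks}, together with the standard complementary bound $Z(W^-)\le 2Z(W)$. Fix a small threshold $\zeta>0$ and let $T$ be the first time $Z_m\le\zeta$. The relations $H\le \log(1+Z)$ and $Z\le\sqrt{H(2-H)}$ together with the previous step guarantee that $T<\infty$ a.s.\ on $\{H_\infty=0\}$ and that $T<m/2$ with probability $1-o(1)$. After time $T$, each $+$ step at least squares $Z_m$ while each $-$ step at most doubles it, so $\log\log(1/Z_m)$ behaves like a biased random walk whose drift is dominated by the number $N^+_{m,T}$ of $+$ steps in $(T,m]$. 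A Chernoff bound applied to $N^+_{m,T}$ (which concentrates around $(m-T)/2$) gives $N^+_{m,T}\ge (\tfrac12-\eta)m$ with probability $1-o(1)$ for any $\eta>0$, hence $Z_m\le 2^{-2^{(1/2-\eta)m}}\le \exp(-n^{1/2-\eta})$ on this event. Choosing $\eta$ small enough produces the exponent $0.499$.

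The main obstacle will be the quantitative bookkeeping in the Chernoff step: one must show the stopping time $T$ is at most $o(m)$ (indeed $\le m/2$) with probability $1-o(1)$ on $\{H_\infty=0\}$, and then ensure the remaining $\Theta(m)$ steps contain enough $+$'s to compound the squaring to the $0.499$ exponent uniformly in $W$. Once these pieces are assembled, translating from the random-path distribution back to the fraction of subsets $A\subseteq[m]$ (each subset being sampled with probability $2^{-m}$ by the uniform $B_i$'s) yields the displayed bound.
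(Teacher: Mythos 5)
The paper does not prove this statement: it is presented as background, cited to \cite{Arikan09} (with the $\exp(-n^{0.499})$ rate implicitly drawing on the Ar\i kan--Telatar refinement). There is therefore no in-paper proof to compare against, and your sketch follows exactly the route taken in those references: (i) build the $(H_m)$ martingale via \eqref{eq:ndm} and invoke Lemma~\ref{lm:sas} to force $H_\infty\in\{0,1\}$; (ii) use $\bE[H_m]=1-I(W)$ to pin down the limiting distribution; (iii) upgrade the $H_\infty=0$ branch to double-exponentially small $Z$ via $Z(W^+)=Z(W)^2$, $Z(W^-)\le 2Z(W)$, and a Chernoff count of $+$ steps. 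This is the standard argument, and the overall structure is sound.

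One step is stated too loosely to close the gap as written: after the stopping time $T$ at which $Z_T\le\zeta$, you cannot immediately treat $\log\log(1/Z_m)$ as a well-behaved random walk, because a run of $-$ steps can push $Z_m$ back up toward $1$, at which point the relation $Z(W^-)\le 2Z(W)$ gives no further control and the ``squaring'' gain vanishes. What Ar\i kan--Telatar (and Korada's thesis) do is a two-stage bootstrap: first prove, using the weaker additive process $|\log_2 Z_m|$ (which gains a factor $2$ on $+$ steps and loses at most $1$ on $-$ steps) and a Chernoff bound, that for most paths $Z_{m_1}\le 2^{-c m_1}$ for some $c>0$ at an intermediate time $m_1=\Theta(m)$; only then, starting from this quantitatively small value, is a second Chernoff bound applied to the $(m_1,m]$ window to compound the squaring to $Z_m\le\exp(-n^{1/2-\eta})$. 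Without the intermediate exponential stage, the claim ``$T<m/2$ with probability $1-o(1)$ suffices'' does not by itself prevent $Z_m$ from returning to the bulk. If you incorporate this two-phase bootstrap, your sketch becomes a faithful account of the Ar\i kan--Telatar proof; as it stands it is the same approach with this one quantitative lacuna.
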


\section{Main results} \label{sect:mm}

Our main results are summarized in the following theorems.

\begin{theorem}[Polarization of RM codes]\label{thm:m3}
For every BMS channel $W$, almost all elements in the set $\{H_A^{(m)}\}_{A\subseteq[m]}$ are close to either $0$ or $1$ when $m$ is large. More precisely,
for any $0<\epsilon<1/10$, any $\delta_n=\poly(1/n)$ and any $0<\gamma<1/2$, there is a constant $M(\epsilon,\delta_n,\gamma)$ such that for every $m>M(\epsilon,\delta_n,\gamma)$,
$$
\frac{\left| \left\{A\subseteq[m]:  H_A^{(m)} > 1-\epsilon \right\}
\cup \left\{A\subseteq[m]:  Z_A^{(m)} < \delta_n \right\}
\right|}{2^m}
\ge 1- m^{\gamma-1/2}.
$$
\end{theorem}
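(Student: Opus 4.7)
I would proceed by induction on $m$, the inductive hypothesis being that the polarization statement of Theorem~\ref{thm:m3} holds at level $m-1$ for every BMS channel, in particular for $W$, $W^+$, and $W^-$ derived from the given $W$. The inductive step rests on the Plotkin $(\mathbi{u},\mathbi{u}+\mathbi{v})$ decomposition of $G_m$. Splitting coordinates by $z_m$, write $V_A:=U_A^{(m)}$ for $A\subseteq[m-1]$ and $W_B:=U_{B\cup\{m\}}^{(m)}$ for $B\subseteq[m-1]$ (the latter not to be confused with the channel $W$), and set $\hat V_{\mathbi{z}'}=\sum_{A\subseteq[m-1]} V_A\,\mathbi{v}_{m-1}(A,\mathbi{z}')$ (similarly $\hat W_{\mathbi{z}'}$). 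Then $X^{(m)}_{(\mathbi{z}',0)}=\hat V_{\mathbi{z}'}$ and $X^{(m)}_{(\mathbi{z}',1)}=\hat V_{\mathbi{z}'}+\hat W_{\mathbi{z}'}$. Since $G_{m-1}$ is invertible, the pairs $(\hat V_{\mathbi{z}'},\hat W_{\mathbi{z}'})$ are i.i.d.\ uniform across $\mathbi{z}'\in\{0,1\}^{m-1}$, so every coordinate pair reproduces the elementary polar transform of two copies of $W$, producing $W^-$ for the $\hat W$-input and $W^+$ for the $\hat V$-input.

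\textbf{Sandwich inequalities and transfer of polarization.} Unpacking Definition~\ref{def:tto}, when processing $V_A$ the RM past equals $\{V_{A'}:A'<_{m-1}A\}\cup\{W_B:|B|\geq|A|\}$, and when processing $W_B$ it equals $\{V_{A'}:|A'|>|B|\}\cup\{W_{B'}:B'<_{m-1}B\}$. Using the output permutation $\pi$ of the BMS channel $W$ to absorb any known $V$- or $W$-side-information into the channel outputs, I would establish the sandwich bounds
\[
H_A^{(m-1,W^+)} \;\leq\; H_A^{(m,W)} \;\leq\; H_A^{(m-1,W)}, \qquad H_B^{(m-1,W)} \;\leq\; H_{B\cup\{m\}}^{(m,W)} \;\leq\; H_B^{(m-1,W^-)},
\]
together with their analogues for the Bhattacharyya parameter (monotone under additional conditioning by a short Cauchy--Schwarz argument). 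The $H$-upper bound and the $Z$-lower bound combine to say: a $V_A$ at level $m$ is already polarized once $H_A^{(m-1,W^+)}>1-\epsilon$ or $Z_A^{(m-1,W)}<\delta_n$ holds at level $m-1$, and a $W_B$ at level $m$ is polarized once $H_B^{(m-1,W)}>1-\epsilon$ or $Z_B^{(m-1,W^-)}<\delta_n$ holds. Because $H^{(m-1,W^+)}\leq H^{(m-1,W)}\leq H^{(m-1,W^-)}$ at level $m-1$, the relevant pair of criteria for $V_A$'s (and for $W_B$'s) cannot be simultaneously satisfied, so applying the inductive hypothesis to each of the three channels $W^+,W,W^-$ pools polarization contributions without double-counting.

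\textbf{Main obstacle.} The remaining difficulty is to bound the fraction of bit-channels at level $m$ escaping all of the sufficient conditions above. A naive one-level accounting gives a shortfall proportional to $I(W^+)-I(W)$ or $I(W)-I(W^-)$, which is $\Theta(1)$ for a non-polarized $W$ and is too weak on its own to yield the advertised $m^{\gamma-1/2}$ rate. To close this gap I would couple the recursion with the binomial concentration of $|A|\sim\text{Binomial}(m,1/2)$: outside a band of width $O(m^{\gamma})$ around $m/2$ — a $1-O(m^{\gamma-1/2})$ fraction of all subsets of $[m]$ — iterating the Plotkin/polar decomposition drives the bit-channel into one of the polar extremes, where the sandwich inequalities deliver polarization in the correct direction; inside the band, the whole population is at most $O(m^{\gamma-1/2})$ and can simply absorb all residual inductive losses. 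Marrying this layer-counting / Berry--Esseen-style bookkeeping to the per-level "transitional" control arising from $I(W^+)-I(W^-)$ is the quantitative heart of the argument and is the step I expect to be the principal obstacle.
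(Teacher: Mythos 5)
Your sandwich bounds are correct and worth recording; unpacking the RM past of $V_A=U_A^{(m)}$ and of $W_B=U_{B\cup\{m\}}^{(m)}$ in the total order of Definition~\ref{def:tto} does show that $H_A^{(m-1,W^+)}\le H_A^{(m,W)}\le H_A^{(m-1,W)}$ and $H_B^{(m-1,W)}\le H_{B\cup\{m\}}^{(m,W)}\le H_B^{(m-1,W^-)}$, with the analogous $Z$ bounds from Lemma~\ref{lm:ox}. However, the inductive step does not close, and you essentially acknowledge this. The shortfall set $\bigl\{A\subseteq[m-1]:\;H_A^{(m-1,W^+)}\le 1-\epsilon<H_A^{(m-1,W)}\bigr\}$ --- sets already polarized high for $W$ but not yet for $W^+$ --- can have constant measure when $W$ is non-degenerate, because $\sum_A\bigl(H_A^{(m-1,W)}-H_A^{(m-1,W^+)}\bigr)=2^{m-1}\bigl(I(W^+)-I(W)\bigr)=\Theta(2^{m-1})$, and there is no per-element lower bound on that difference. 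The proposed binomial-concentration patch does not repair this: the cardinality where the transition sits depends on $I(W)$, so it is centered at $m/2$ only for $I(W)\approx 1/2$, and more fundamentally the width of the transition measured in cardinality is the quantity $\theta_{\min}^{(m)}-\theta_{\max}^{(m)}$, which is precisely what the paper does \emph{not} control (it is only a conjectural sufficient condition; cf.~Theorem~\ref{thm:m4} and the gap property of Corollary~\ref{cr:sv}, verified only numerically).

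The paper's argument is therefore genuinely different and circumvents exactly this obstacle. Rather than inducting with $W^\pm$, it reasons along increasing chains $\emptyset=A_0\subsetneq A_1\subsetneq\cdots\subsetneq A_m=[m]$. The interlacing property $H_{A_i}^{(m+1)}\le H_{A_i}^{(m)}\le H_{A_{i+1}}^{(m+1)}$ of Lemma~\ref{lm:imp}, combined with the strict-gap statement~\eqref{eq:jf} (an RM analogue of Lemma~\ref{lm:sas}), forces a uniformly bounded number of transitional indices along \emph{every} chain (Theorems~\ref{thm:m1}--\ref{thm:m2}), and the Bhattacharyya square law $Z_A^{(m+1)}\le(Z_A^{(m)})^2$ upgrades this to the $m^\gamma$ bound of Theorem~\ref{thm:strong2}, independently of where in cardinality the transition is located. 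There are $m!$ chains, each with at most $m^\gamma$ bad sets, and each $A$ appears in exactly $|A|!(m-|A|)!\ge\lfloor m/2\rfloor!(m-\lfloor m/2\rfloor)!$ of them; dividing and invoking Stirling gives $|\cA|/2^m\le m^\gamma\binom{m}{\lfloor m/2\rfloor}/2^m\le m^{\gamma-1/2}$. That is where the central-binomial factor actually enters --- from the chain multiplicities --- not from a claim that sets far from $m/2$ in cardinality are automatically polarized. If you want to keep the $W^\pm$-sandwich framing, the natural next step is to replace the global $W^\pm$ comparison with the per-bit-channel fast polar transform and introduce the chain bookkeeping; without it, the inductive losses compound and the $m^{\gamma-1/2}$ rate cannot be recovered.
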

As mentioned above, the basis vectors of RM codes $\{\mathbi{v}_m(A):A\subseteq[m]\}$ are exactly the row vectors of the polar matrix $G_m$ defined in \eqref{eq:tbd}.
However, these rows are arranged in different orders for RM codes and polar codes, which makes the polarization of RM codes fundamentally different from that of polar codes.

As an immediate consequence of Theorem~\ref{thm:m3}, we can construct a family of capacity-achieving codes.
\begin{theorem}[Twin codes] \label{thm:cac}
For a BMS channel $W$ and $\delta_n=\poly(1/n)$, let 
$$
\cG(m, \delta_n):= \left\{A\subseteq[m]:  Z_A^{(m)} < \delta_n \right\}
$$
and define the family of twin codes from the codewords
$$
\cT(m,\delta_n):=
\left\{\sum_{A\in \cG}u(A) \mathbi{v}_m(A): u(A)\in\{0,1\} 
\text{~~for all~} A\in \cG(m,\delta_n) \right\},
$$
where $\mathbi{v}_m(A)$ is defined in \eqref{eq:gg}.
Then for $\delta_n=o(n^{-2})$, $\cT(m,\delta_n)$ achieves the capacity of $W$ under successive decoding.
\end{theorem}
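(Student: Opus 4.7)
The plan is to derive Theorem~\ref{thm:cac} from the polarization result of Theorem~\ref{thm:m3} via the standard polar-coding recipe, only now carried out in the RM ordering $<$ of Definition~\ref{def:tto} rather than the top-to-down ordering of Ar\i kan. Two things need to be verified: (a) the rate $|\cG(m,\delta_n)|/n$ approaches $I(W)$, and (b) the block error probability of the successive-cancellation decoder, which treats the indices $A\notin\cG(m,\delta_n)$ as frozen bits, vanishes as $m\to\infty$.

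Part (b) is the easy half. I would appeal to the classical union bound on Bhattacharyya parameters: when successive decoding is performed in the order $<$ and the bits outside $\cG(m,\delta_n)$ are frozen to known values (say zero, which by symmetry of $W$ does not affect the error probability), the block error probability satisfies
$$
P_e \;\le\; \sum_{A\in\cG(m,\delta_n)} Z_A^{(m)} \;\le\; |\cG(m,\delta_n)|\cdot\delta_n \;\le\; n\,\delta_n,
$$
which is $o(n^{-1})\to 0$ under the hypothesis $\delta_n=o(n^{-2})$.

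For part (a), the key observation is that Ar\i kan's inequality $H(X|Y)\le \log_2(1+Z(X|Y))$ forces the events $\{A:H_A^{(m)}>1-\epsilon\}$ and $\cG(m,\delta_n)=\{A:Z_A^{(m)}<\delta_n\}$ to be disjoint as soon as $\log_2(1+\delta_n)<1-\epsilon$, which holds for every fixed $\epsilon>0$ and all sufficiently large $n$. Theorem~\ref{thm:m3} combined with the entropy conservation identity \eqref{eq:kw} then yields
$$
|\cG(m,\delta_n)| \;\ge\; n(1-m^{\gamma-1/2}) \;-\; \frac{n(1-I(W))}{1-\epsilon},
$$
using $(1-\epsilon)\cdot|\{A:H_A^{(m)}>1-\epsilon\}|\le \sum_A H_A^{(m)}=n(1-I(W))$. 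Sending $m\to\infty$ and then $\epsilon\to 0$ gives $|\cG(m,\delta_n)|/n\to I(W)$ from below; the matching upper bound $|\cG(m,\delta_n)|/n\le I(W)+o(1)$ is automatic from Fano's inequality applied to the decoder analyzed in (b).

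Because both the rate and the reliability conditions reduce to direct consequences of Theorem~\ref{thm:m3}, no real obstacle is anticipated in the present corollary; the substance of the paper lies in Theorem~\ref{thm:m3} itself. The only bookkeeping items are the disjointness check for the two polarized sets (valid for $\delta_n\to 0$) and the standard symmetrization argument that removes dependence on the actual frozen values.
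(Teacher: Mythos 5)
Your proposal is correct and follows essentially the same two-step route as the paper: establish rate $\to I(W)$ from Theorem~\ref{thm:m3} together with the entropy-conservation identity \eqref{eq:kw}, and establish vanishing block error via the union bound on Bhattacharyya parameters \eqref{eq:pez}. The only cosmetic difference is that you invoke disjointness of $\{A:H_A^{(m)}>1-\epsilon\}$ and $\cG(m,\delta_n)$ (via $H\le\log_2(1+Z)$), whereas the paper's argument only uses the trivial inequality $|\cG|\ge|\cG\cup\cH|-|\cH|$, which holds without any disjointness; and you add an unnecessary (though correct) remark about the matching upper bound via Fano.
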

This theorem tells us that we can construct capacity achieving codes using successive decoder under the RM ordering (i.e., ordered by weights). Note that none of the above give algorithmic results.

To establish the above, we need the following notion of ordering between the different conditional entropies in the RM ordering.
\begin{definition} [Partial order]
For $A=\{a_1,a_2,\dots,a_{|A|}\},B=\{b_1,b_2,\dots,b_{|B|}\} \subseteq [m]$, $A \ne B$, where $a_1<a_2<\dots<a_{|A|}$ and $b_1<b_2<\dots<b_{|B|}$, we define
\begin{align}
A \prec B \text{ if and only if } |A| \ge |B| \text{ and } a_i \le b_i, \, \forall i \le |B|.     
\end{align}
\end{definition}
The reason why we set the above to be $A \prec B$ and not $A \succ B$ is that this gives an order with $[m]$ as the `first' set and $\emptyset$ as the `last' set, which corresponds to the first and last sets in the RM code ordering.

\begin{theorem}\label{order2}
If $A \prec B$, then $H_A^{(m)} \ge H_B^{(m)}$.
\end{theorem}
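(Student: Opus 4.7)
The plan is to prove Theorem~\ref{order2} by induction on $m$, reducing the partial order $\prec$ to a chain of elementary covering moves. Any $A \prec B$ admits a chain $A = C_0 \prec C_1 \prec \dots \prec C_\ell = B$ whose successive steps are obtained either by (i) deleting one element from the previous set, or (ii) replacing a single element $a$ by $a+1$ (provided $a+1 \notin C_i$ and $a+1 \le m$). By transitivity it then suffices to verify $H_{C_i}^{(m)} \ge H_{C_{i+1}}^{(m)}$ for each elementary step.

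The main structural tool is the Plotkin $(u,u+v)$ decomposition at coordinate $m$. Writing the two halves as $X^{(m)}|_{z_m=0} = \sum_{A \not\ni m} U_A^{(m)} \mathbi{v}_{m-1}(A)$ and $X^{(m)}|_{z_m=1} = X^{(m)}|_{z_m=0} + \sum_{A \ni m} U_A^{(m)} \mathbi{v}_{m-1}(A\setminus\{m\})$ yields two scale-$(m-1)$ RM sub-problems: a ``$-$'' sub-problem on $\{U_A^{(m)}: m \in A\}$ seen through channel $W^-$ and a ``$+$'' sub-problem on $\{U_A^{(m)}: m \not\in A\}$ seen through $W^+$. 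A careful bookkeeping of the RM ordering shows that inside each $[m]$-size class $k$ the $+$-variables at $[m-1]$-size $k$ are processed first, followed by the $-$-variables at $[m-1]$-size $k-1$, so $H_A^{(m)}$ equals the corresponding sub-problem conditional entropy plus an additional cross-conditioning on \emph{all} variables of the opposite sub-problem of size $\ge |A|$. Conditioning either on all or none of the opposite sub-problem's variables then yields the sandwich bounds
\begin{align*}
H_{A\setminus\{m\}}^{(m-1,W)} \le H_A^{(m)} \le H_{A\setminus\{m\}}^{(m-1,W^-)} \quad &\text{if } A \ni m,\\
H_A^{(m-1,W^+)} \le H_A^{(m)} \le H_A^{(m-1,W)} \quad &\text{if } A \not\ni m,
\end{align*}
where the inner $W$-bound appears in both cases because, once every variable of the opposite sub-problem is known, BMS symmetry collapses the two halves of $Y^{(m)}$ to a single fresh copy of $W$.

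The inner $W$-bound is precisely what closes the key elementary move of deleting $m$ (i.e.\ $A\cup\{m\} \prec A$ for $A \subseteq [m-1]$): chaining $H_A^{(m)} \le H_A^{(m-1,W)}$ and $H_{A\cup\{m\}}^{(m)} \ge H_A^{(m-1,W)}$ immediately gives $H_{A\cup\{m\}}^{(m)} \ge H_A^{(m)}$. For elementary moves that stay on a single side of coordinate $m$ (both sets contain $m$, or neither does), I would view $H_A^{(m)}$ as a sub-problem conditional entropy $H_{A'}^{(m-1,\chi_{|A|})}$ for an \emph{effective} channel $\chi_{k}$ arising from the partial cross-conditioning, note that $\chi_k$ is monotone in $k$ under channel degradation (a larger $k$ means less opposite-sub-problem information, hence a worse channel), and then chain the inductive hypothesis applied with a common ``worse'' effective channel together with the degradation-monotonicity of $\chi_k$ to conclude $H_A^{(m)} \ge H_B^{(m)}$.

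The main obstacle I anticipate is the equal-size adjacent increment $A = C \cup \{m-1\} \prec B = C \cup \{m\}$, where $A$ and $B$ occupy opposite sub-problems at the same $[m-1]$-size, so the two sandwiches do not share a common inner bound and the effective-channel chain of the previous paragraph breaks. I would handle this case by exposing both $H_A^{(m)}$ and $H_B^{(m)}$ under a common conditioning (the intersection of their two pasts in the RM ordering) and then applying Ar\i{}kan's local polar inequality~\eqref{eq:kbd} to a two-bit nucleus formed from the coordinate pair $(m-1,m)$; the adjacency of these two coordinates in the innermost Kronecker factor of $G_m$ is what makes such a nucleus realizable as an elementary polar step, reducing the comparison to the bit-level polarization lemma. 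Any non-adjacent increment $a \to a+1 < m$ keeps both sets on the same side of coordinate $m$ and is therefore covered by the induction of the previous paragraph.
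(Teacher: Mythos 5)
Your decomposition of $A \prec B$ into elementary covering moves and your sandwich argument for the ``delete $m$'' step (chaining $H_A^{(m)} \le H_A^{(m-1,W)} \le H_{A\cup\{m\}}^{(m)}$ via the Plotkin split of $Y^{(m)}$ into even and odd halves) is sound and coincides in substance with the paper's interlacing Lemma~\ref{lm:imp}, which yields $H_{A\cup\{j\}}^{(m+1)} \ge H_A^{(m)} \ge H_A^{(m+1)}$. That part is fine.

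The gap is in your treatment of all the \emph{other} elementary moves, and it is a real one. For a same-side move (e.g.\ deleting $j \neq m$, or an increment $a\to a+1<m$), you propose to rewrite $H_A^{(m)}$ as $H_{A'}^{(m-1,\chi_{|A|})}$ for an ``effective'' BMS channel $\chi_k$ and then invoke degradation-monotonicity of $\chi_k$ together with the inductive hypothesis at level $m-1$. But the object $\chi_k$ does not exist as a memoryless channel. After conditioning on only part of the opposite sub-problem, the odd half $Y_{\odd}^{(m)}$ is an observation, through $W^{(m-1)}$, of $X_{\even}^{(m)}$ plus an \emph{unknown additive codeword} drawn uniformly from the sub-code spanned by $\{\mathbi{v}_{m-1}(D): |D|<k\}$ --- a code of positive rate, not i.i.d.\ bit noise. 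The remaining observation therefore does not factor coordinate-by-coordinate, so it cannot be written as $n/2$ uses of any single BMS channel, and the inductive hypothesis (which is a statement about memoryless BMS channels driving an RM ensemble of length $2^{m-1}$) cannot be applied to it. Likewise, the proposed ``two-bit nucleus'' for the increment $C\cup\{m-1\} \prec C\cup\{m\}$ has to contend with the variables $D$ with $C\cup\{m-1\} < D < C\cup\{m\}$ in the RM order; their conditioning spoils the clean two-input Ar\i kan nucleus you are hoping to isolate, and the proposal does not explain how to handle them.

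What the paper uses instead, and what your plan is missing, is the \emph{$S_m$-symmetry of RM codes}. Every permutation of $[m]$ is a code automorphism, which gives the entropy identity \eqref{eq:sym1}. Applying the transposition $\pi=(i_1\, i_2)$ with $i_1<i_2$, one checks (equation \eqref{eq:sct}) that $\pi$ maps the set $\{B : B < A\cup\{i_1\}\}$ \emph{into} $\{B : B < A\cup\{i_2\}\}$, so the conditioning for $A\cup\{i_2\}$ dominates the $\pi$-image of the conditioning for $A\cup\{i_1\}$. This yields Lemma~\ref{lm:is}, $H_{A\cup\{i_1\}}^{(m)} \ge H_{A\cup\{i_2\}}^{(m)}$, in one stroke, with no induction on $m$, no effective channel, and no nucleus. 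Theorem~\ref{order2} then falls out by first applying Lemma~\ref{lm:imp} to shrink $A$ to $A'=\{a_1,\dots,a_{|B|}\}$, and then applying Lemma~\ref{lm:is} coordinate-by-coordinate to push $A'$ up to $B$. Your approach gets the first half right but replaces the second half's one-line symmetry argument with machinery that, as written, does not close.
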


According to Theorem~\ref{thm:cac} and Theorem \ref{order2}, the twin code $\cT(m,\delta_n)$ tend to select sets $A$ with small cardinality, which is similar to RM codes (that exactly selects sets with the smallest cardinality). However, we do not establish here whether this is exactly the RM code or not. We do give a positive indication by establishing that this is exactly the RM code up to $n=16$ for the BSC, and leave the general case for future work; see Section~\ref{sect:tisr} for details.

\section{Proof outline} \label{sect:pst}

In order to explain the main ideas of the proof, we introduce the following definition. 
\begin{definition}[Increasing chain of sets]
Let $A_0=\emptyset$ and $A_m=[m]$. We say that $A_0 \subseteq A_1 \subseteq A_2 \subseteq \dots \subseteq A_m$ is an increasing chain of sets if $|A_i|=i$ for all $i=0,1,2,\dots,m$.
\end{definition}

A main step in our argument consist in proving the following two theorems:
\begin{theorem}\label{thm:m1}
For every BMS channel $W$, every $m>0$ and every increasing chain of sets $\emptyset=A_0 \subseteq A_1 \subseteq A_2 \subseteq \dots \subseteq A_m=[m]$, we have
$$
H_{A_0}^{(m)} \le H_{A_1}^{(m)} \le H_{A_2}^{(m)} \le \dots \le H_{A_m}^{(m)}.
$$
\end{theorem}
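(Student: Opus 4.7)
The plan is to proceed by induction on $m$, using the Plotkin $(u,u+v)$ decomposition along the coordinate $z_m$ together with the one-step polar inequality $H^+\le H^-$ from \eqref{eq:kbd}. For the base case $m=1$, the only chain is $\emptyset\subset\{1\}$, and since $\{1\}<\emptyset$ in the total order, the claim $H_\emptyset^{(1)}\le H_{\{1\}}^{(1)}$ is exactly \eqref{eq:kbd} applied to the two bits $U_{\{1\}}=X_1+X_2$ and $U_\emptyset=X_1$.

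For the inductive step, split the bits by whether they contain $m$: let $V_B:=U_B^{(m)}$ and $W_B:=U_{B\cup\{m\}}^{(m)}$ for $B\subseteq[m-1]$. After one application of the top-level polar transform, at each $\mathbi{z}'\in\{0,1\}^{m-1}$ the coordinate $u_{\mathbi{z}'}=\sum_B V_B\,\mathbi{v}_{m-1}(B,\mathbi{z}')$ sees the transformed channel $W^+$ and $v_{\mathbi{z}'}=\sum_B W_B\,\mathbi{v}_{m-1}(B,\mathbi{z}')$ sees $W^-$. Let $k$ be the unique index with $m\in A_k\setminus A_{k-1}$, and form the reduced chain $\emptyset=B_0\subset B_1\subset\dots\subset B_{m-1}=[m-1]$ by $B_i=A_i$ for $i<k$ and $B_i=A_{i+1}\setminus\{m\}$ for $i\ge k-1$ (the two definitions agree at $i=k-1$ since $A_k=A_{k-1}\cup\{m\}$). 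The inductive hypothesis applied to this reduced chain for each of the channels $W^+$ and $W^-$ gives $H_{B_0}^{(m-1,W^\pm)}\le\dots\le H_{B_{m-1}}^{(m-1,W^\pm)}$.

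It remains to lift these level-$(m-1)$ chains to the level-$m$ chain. There are three types of transitions $A_i\to A_{i+1}$: within the $V$-block ($i+1<k$), across the boundary ($i=k-1$, where $A_i=B_{k-1}$ and $A_{i+1}=B_{k-1}\cup\{m\}$), and within the $W$-block ($i\ge k$). The boundary step $H_{A_{k-1}}^{(m)}\le H_{A_k}^{(m)}$ reflects the bit-channel polar monotonicity $H_{B_{k-1}}^{(m-1,W^+)}\le H_{B_{k-1}}^{(m-1,W^-)}$: the $V$-bit sees the better base channel $W^+$, the $W$-bit sees the worse $W^-$. The within-block transitions would follow from the corresponding inductive monotonicity of $H_{B_i}^{(m-1,W^\pm)}$.

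The main obstacle is the bookkeeping of the conditioning at level $m$. The RM ordering interleaves $V$- and $W$-bits within each size layer: all $V$-bits of size $k$ precede all $W$-bits of the same size $k$. Thus when decoding a level-$m$ $V$-bit $V_B$ of size $k$, the past contains only those $W$-bits with $|B''|\ge k$, rather than all $W$-bits as a pure level-$(m-1,W^+)$ decoder would require (and symmetrically, the past of a level-$m$ $W$-bit carries partial $V$-side-information). Consequently $H_{A_i}^{(m)}$ does not literally equal $H_{B_i}^{(m-1,W^\pm)}$ but differs by this partial side-information. Bridging this gap---e.g., by strengthening the inductive statement to track partial side-information through the chain, or by exploiting the algebraic structure of the Plotkin decomposition to show the missing $W$- (resp.\ $V$-)information is irrelevant to the relevant conditional entropy---is the delicate step of the argument.
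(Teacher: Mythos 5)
Your plan correctly identifies the right tools (Plotkin $(u,u+v)$ split along the coordinate $z_m$, the one-step polar inequality, the $V$-/$W$-bit partition) and, impressively, it also correctly identifies exactly where the naive induction breaks: the RM ordering within a size layer places all $V$-bits of size $k$ before all $W$-bits of size $k$, so the conditioning of a level-$m$ $V$-bit does not match the conditioning needed to identify it with a bit-channel of $W^{(m-1)}$ over $W^+$ (and symmetrically for $W$-bits over $W^-$). However, the proposal stops at naming this obstacle and does not close it, so as written it is not a proof.

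The paper closes the gap by never attempting the identification $H_A^{(m+1)}\stackrel{?}{=}H_A^{(m,W^+)}$ that you correctly observe is false. Instead it compares conditioning sets directly via explicit inclusions. Fix $A\subseteq[m]$ with $|A|=k$ and consider the \emph{clean intermediate} conditioning set $\cC:=\{A\cup\{m+1\}\}\cup\{A':A'\subseteq[m],A'<A\}\cup\{A'\cup\{m+1\}:A'\subseteq[m],A'<A\}$, for which the odd/even decomposition of $Y^{(m+1)}$ yields the polar chain rule \eqref{eq:mtg}: the ``$+$'' and ``$-$'' entropies (with vs.\ without conditioning on $U_{A\cup\{m+1\}}^{(m+1)}$) sum to $2H_A^{(m)}$. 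The decisive observation, which is what your proposal is missing, is that because $|A\cup\{m+1\}|=k+1>k$, the RM order places $A\cup\{m+1\}$ and every $A'\cup\{m+1\}$ with $|A'|\ge k$ \emph{before} $A$; this gives the inclusion \eqref{eq:if1} showing $\cC\subseteq\{A':A'<A\}$, hence $H_A^{(m+1)}\le$ the ``$+$'' entropy $\le H_A^{(m)}$. The reverse inclusion \eqref{eq:if2} gives $H_{A\cup\{m+1\}}^{(m+1)}\ge$ the ``$-$'' entropy $\ge H_A^{(m)}$. In other words, the ``extra'' or ``missing'' side-information you worry about is not irrelevant, but it is always in the \emph{favorable} direction, precisely because larger-cardinality sets come first. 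After proving this interlacing (Lemma \ref{lm:imp}) and the within-layer symmetry (Lemma \ref{lm:is}), the theorem follows by iterating along the chain---no induction on $m$ of the type you propose is needed.
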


\begin{theorem}\label{thm:m2}
For every BMS channel $W$ and every $\epsilon>0$, there is a constant $D(\epsilon)$ (which is independent of $m$ and $W$) such that for every $m>0$ and every increasing chain of sets $\emptyset=A_0 \subseteq A_1 \subseteq A_2 \subseteq \dots \subseteq A_m=[m]$,
$$
\left|\left\{ i\in\{0,1,\dots,m\}:\epsilon < H_{A_i}^{(m)} < 1-\epsilon \right\} \right| \le D(\epsilon). 
$$
\end{theorem}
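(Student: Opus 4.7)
My plan is to derive Theorem~\ref{thm:m2} from the following gap claim: for every $\epsilon > 0$ there is $\delta(\epsilon) > 0$ such that, along any chain, any consecutive pair $(A_i, A_{i+1})$ with both $H_{A_i}^{(m)}$ and $H_{A_{i+1}}^{(m)}$ in $(\epsilon, 1-\epsilon)$ must satisfy $H_{A_{i+1}}^{(m)} - H_{A_i}^{(m)} \ge \delta(\epsilon)$. Combined with Theorem~\ref{thm:m1} (monotonicity), this yields Theorem~\ref{thm:m2} with $D(\epsilon) := \lfloor 1/\delta(\epsilon)\rfloor + 1$ since the chain is non-decreasing in $[0, 1]$.

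For the gap claim, let $k$ be the element added from $A_i$ to $A_{i+1}$, so $A_{i+1} = A_i \cup \{k\}$. Since $|A_{i+1}| > |A_i|$, Definition~\ref{def:tto} gives $A_{i+1} < A_i$ in the total order, so $U_{A_{i+1}}^{(m)}$ is part of the conditioning of $H_{A_i}^{(m)}$. Setting $Z_i := \{U_B^{(m)} : B < A_i,\ B \ne A_{i+1}\}$ and
\[
\tilde H_i := H\bigl(U_{A_{i+1}}^{(m)} \bigm| Y^{(m)}, Z_i\bigr),
\]
one has $H_{A_i}^{(m)} = H(U_{A_i}^{(m)} \mid Y^{(m)}, Z_i, U_{A_{i+1}}^{(m)})$. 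Because $\{B : B < A_{i+1}\}$ is contained in the index set of $Z_i$, conditioning monotonicity gives $\tilde H_i \le H_{A_{i+1}}^{(m)}$.

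The key step is to view $(\tilde H_i, H_{A_i}^{(m)})$ as the pair of polar-transformed entropies of an effective BMS meta-channel $\tilde W$: conditional on $Z_i$, the Plotkin decomposition at index $k$ produces, at each $z' \in \{0, 1\}^{m-1}$ supporting $\mathbi{v}_{m-1}(A_i, \cdot)$, a polar-transform pattern in which $U_{A_{i+1}}^{(m)}$ sits on the ``$U_1$-side'' and $U_{A_i}^{(m)}$ on the ``$U_2$-side'' of a polar pair of $W$. Aggregating over all such $z'$ exhibits $(\tilde H_i, H_{A_i}^{(m)}) = (H(\tilde W^-), H(\tilde W^+))$ for an effective BMS meta-channel $\tilde W$ built from $W$ (composed with the additive Bernoulli-$1/2$ noise from the un-conditioned coefficients $U_B^{(m)}$ with $B > A_i$), and polar sum conservation gives $\tilde H_i + H_{A_i}^{(m)} = 2 H(\tilde W)$. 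Lemma~\ref{lm:sas} applied to $\tilde W$ then yields the polar gap: $\tilde H_i - H_{A_i}^{(m)} \ge 2\delta(\epsilon)$ whenever $H(\tilde W) \in (\epsilon, 1-\epsilon)$.

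When both $H_{A_i}^{(m)}$ and $H_{A_{i+1}}^{(m)}$ are in $(\epsilon, 1-\epsilon)$, the polar inequality \eqref{eq:kbd} (applied to the meta-pair) gives $\tilde H_i \ge H_{A_i}^{(m)}$, which together with $\tilde H_i \le H_{A_{i+1}}^{(m)}$ places $\tilde H_i \in (\epsilon, 1-\epsilon)$. Hence $H(\tilde W) = (\tilde H_i + H_{A_i}^{(m)})/2 \in (\epsilon, 1-\epsilon)$, and
\[
H_{A_{i+1}}^{(m)} - H_{A_i}^{(m)} \;\ge\; \tilde H_i - H_{A_i}^{(m)} \;\ge\; 2\delta(\epsilon),
\]
which proves the gap claim. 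The main obstacle I anticipate is a careful verification that the polar-pair structure for $(U_{A_{i+1}}^{(m)}, U_{A_i}^{(m)})$ conditional on $Z_i$ correctly identifies an effective BMS meta-channel $\tilde W$ for which polar sum conservation and Lemma~\ref{lm:sas} apply; this amounts to a symmetry/mixture argument on the Plotkin decomposition that handles the additive Bernoulli noise contribution from the un-conditioned coefficients $U_B^{(m)}$ with $B > A_i$.
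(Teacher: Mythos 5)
Your reduction of Theorem~\ref{thm:m2} to a per-step gap along the chain, combined with the monotonicity of Theorem~\ref{thm:m1}, is correct and matches the paper's high-level strategy; the bound $D(\epsilon)=\lfloor 1/\delta(\epsilon)\rfloor+1$ is fine. The gap is in the middle step: the polar-pair identity $\tilde H_i + H_{A_i}^{(m)} = 2H(\tilde W)$ does \emph{not} hold for your choice of conditioning set $Z_i$. After Plotkin-splitting $Y^{(m)}$ along the coordinate $k=A_{i+1}\setminus A_i$, the ``even'' half is driven by $\{U_B:k\notin B\}$ and the ``odd'' half by $\{U_B+U_{B\cup\{k\}}:k\notin B\}$; for these two halves to be conditionally independent (which is what sum conservation requires), the conditioning must be \emph{symmetric} in the sense that for every $B\subseteq[m]\setminus\{k\}$ it reveals either both $U_B$ and $U_{B\cup\{k\}}$ or neither. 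Your $Z_i=\{U_B:B<A_i,\,B\ne A_{i+1}\}$ is not symmetric: any $B$ with $k\notin B$, $|B|=|A_i|$, and $B$ lexicographically after $A_i$ has $B\cup\{k\}\in Z_i$ (since $|B\cup\{k\}|>|A_i|$) but $B\notin Z_i$. Conditioning on such a $U_{B\cup\{k\}}$ alone creates a deterministic link between the even-side input $U_B$ and the odd-side input $U_B+U_{B\cup\{k\}}$, so the two halves of the split are correlated given $Z_i$; the joint entropy $H(U_{A_{i+1}},U_{A_i}\mid Y^{(m)},Z_i)$ no longer factors as twice the entropy of a single meta-channel, and Lemma~\ref{lm:sas} cannot be applied as you describe.

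The paper sidesteps exactly this by working across two consecutive levels rather than inside a single level. It decomposes along the \emph{fresh} coordinate $m+1$ and conditions on the genuinely symmetric auxiliary set $\{U_{A'}^{(m+1)}:A'\subseteq[m],A'<A\}\cup\{U_{A'\cup\{m+1\}}^{(m+1)}:A'\subseteq[m],A'<A\}$, which realizes $(U_{A\cup\{m+1\}}^{(m+1)},U_A^{(m+1)})$ as an honest polar pair of $W_A^{(m)}$ (this is \eqref{eq:mtg}). It then uses the set containments \eqref{eq:if1}--\eqref{eq:if2} --- which hold precisely because $m+1$ is the new largest coordinate and thus sits ``at the boundary'' of each cardinality layer in the total order --- together with monotonicity of conditional entropy to sandwich this auxiliary quantity between $H_A^{(m+1)}$ and $H_{A\cup\{m+1\}}^{(m+1)}$, giving the cross-level interlacing \eqref{eq:jjw1} and gap \eqref{eq:jjw2}; Lemma~\ref{lm:is} then upgrades $j=m+1$ to arbitrary $j$. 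The same-level gap you want is then obtained by chaining two cross-level gaps (Fig.~\ref{fig:bvd}). If you want to stay at level $m$ and split along an arbitrary $k$, you would need to replace $Z_i$ by a symmetric set and then prove the analogues of \eqref{eq:if1}--\eqref{eq:if2}; these containments genuinely fail for a generic $k$, which is why the paper introduces the extra coordinate instead.
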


\begin{figure}
\centering
\begin{tikzpicture}
\draw (0,0) -- (15,0);
\draw (0,1) -- (15,1);
\draw [fill] (0.2,0) circle [radius=1.5pt];
\draw [fill] (1,0) circle [radius=1.5pt];
\draw [fill] (2.6,0) circle [radius=1.5pt];
\draw [fill] (5.4,0) circle [radius=1.5pt];
\draw [fill] (9.6,0) circle [radius=1.5pt];
\draw [fill] (12.2,0) circle [radius=1.5pt];

\draw [fill] (14,0) circle [radius=1.5pt];
\draw [fill] (14.8,0) circle [radius=1.5pt];
\draw [fill] (0.6,1) circle [radius=1.5pt];
\draw [fill] (1.7,1) circle [radius=1.5pt];
\draw [fill] (3.7,1) circle [radius=1.5pt];
\draw [fill] (11.1,1) circle [radius=1.5pt];
\draw [fill] (13.1,1) circle [radius=1.5pt];

\draw [fill] (14.4,1) circle [radius=1.5pt];
\draw
node at (7.5,-1) [] {\dots\dots\dots}
node at (7.5,1.9) [] {\dots\dots\dots}
node at (1.3,2.6) [] (e1) {$\epsilon$}
node at (1.3,1) [] (e2) {}
node at (13.75,2.6) [] (e3) {$1-\epsilon$}
node at (13.75,1) [] (e4) {}
node at (0.2,1.7) [] (x1) {}
node at (0,-1) []  {\small $H_{A_0}^{(m+1)}$}
node at (0.6,1.7) [] (x2) {}
node at (0.6,1.9) []  {\small $H_{A_0}^{(m)}$}
node at (1,1.7) [] (x3) {}
node at (1.1,-1) []  {\small $H_{A_1}^{(m+1)}$}

node at (2.15,0.7) [] {\small $>\delta$}
node at (1.7,1.7) [] (x6) {}
node at (1.8,1.9) []  {\small $H_{A_1}^{(m)}$}
node at (2.6,1.7) [] (x7) {}
node at (2.6,-1) []  {\small $H_{A_2}^{(m+1)}$}
node at (3.15,0.7) [] {\small $>\delta$}
node at (3.7,1.7) [] (x8) {}
node at (3.7,1.9) []  {\small $H_{A_2}^{(m)}$}
node at (4.55,0.7) [] {\small $>\delta$}
node at (5.4,1.7) [] (x9) {}
node at (5.4,-1) []  {\small $H_{A_3}^{(m+1)}$}
node at (9.6,1.7) [] (x10) {}
node at (9.6,-1) []  {\small $H_{A_{m-2}}^{(m+1)}$}
node at (10.35,0.7) [] {\small $>\delta$}
node at (11.1,1.7) [] (x11) {}
node at (11.1,1.9) []  {\small $H_{A_{m-2}}^{(m)}$}
node at (11.65,0.7) [] {\small $>\delta$}
node at (12.2,1.7) [] (x12) {}
node at (12.2,-1) []  {\small $H_{A_{m-1}}^{(m+1)}$}
node at (12.65,0.7) [] {\small $>\delta$}
node at (13.1,1.7) [] (x13) {}
node at (13.1,1.9) []  {\small $H_{A_{m-1}}^{(m)}$}

node at (14,1.7) [] (x16) {}
node at (14,-1) []  {\small $H_{A_m}^{(m+1)}$}
node at (14.4,1.7) [] (x17) {}
node at (14.4,1.9) []  {\small $H_{A_m}^{(m)}$}
node at (14.8,1.7) [] (x18) {}
node at (15.1,-1) []  {\small $H_{A_{m+1}}^{(m+1)}$}
node at (0.2,-0.7) [] (y1) {}
node at (0.6,-0.7) [] (y2) {}
node at (1,-0.7) [] (y3) {}

node at (1.6,0.4) [] (m1) {}
node at (2.7,0.4) [] (m2) {}
node at (2.5,0.4) [] (m3) {}
node at (3.8,0.4) [] (m4) {}
node at (3.6,0.4) [] (m5) {}
node at (5.5,0.4) [] (m6) {}

node at (1.7,-0.7) [] (y6) {}
node at (2.6,-0.7) [] (y7) {}
node at (3.7,-0.7) [] (y8) {}
node at (5.4,-0.7) [] (y9) {}
node at (9.6,-0.7) [] (y10) {}
node at (11.1,-0.7) [] (y11) {}
node at (12.2,-0.7) [] (y12) {}
node at (13.1,-0.7) [] (y13) {}

node at (9.5,0.4) [] (m11) {}
node at (11.2,0.4) [] (m12) {}
node at (11,0.4) [] (m13) {}
node at (12.3,0.4) [] (m14) {}
node at (12.1,0.4) [] (m15) {}
node at (13.2,0.4) [] (m16) {}

node at (14,-0.7) [] (y16) {}
node at (14.4,-0.7) [] (y17) {}
node at (14.8,-0.7) [] (y18) {};
\draw [densely dashed] (x1) -- (y1);
\draw [densely dashed] (x2) -- (y2);
\draw [densely dashed] (x3) -- (y3);

\draw [densely dashed] (x6) -- (y6);
\draw [densely dashed] (x7) -- (y7);
\draw [densely dashed] (x8) -- (y8);
\draw [densely dashed] (x9) -- (y9);
\draw [densely dashed] (x10) -- (y10);
\draw [densely dashed] (x11) -- (y11);
\draw [densely dashed] (x12) -- (y12);
\draw [densely dashed] (x13) -- (y13);

\draw [densely dashed] (x16) -- (y16);
\draw [densely dashed] (x17) -- (y17);
\draw [densely dashed] (x18) -- (y18);
\draw [<->] (m1) -- (m2);
\draw [<->] (m3) -- (m4);
\draw [<->] (m5) -- (m6);

\draw [<->] (m11) -- (m12);
\draw [<->] (m13) -- (m14);
\draw [<->] (m15) -- (m16);
\draw [->] (e1) -- (e2);
\draw [->] (e3) -- (e4);
\end{tikzpicture}
\caption{Illustration of the interlacing property in \eqref{eq:jjw1} used in the proofs of Theorem~\ref{thm:m1} and Theorem~\ref{thm:m2}.}
\label{fig:bvd}
\end{figure}
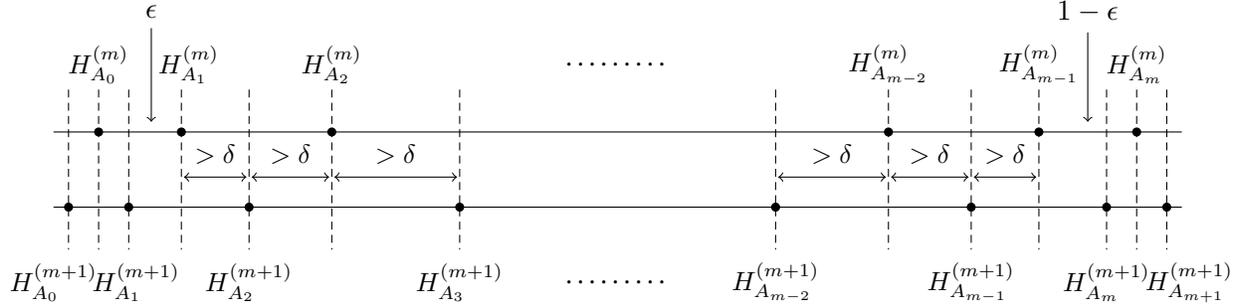

\begin{figure}[h]
\centering
\includegraphics[width=0.8\textwidth]{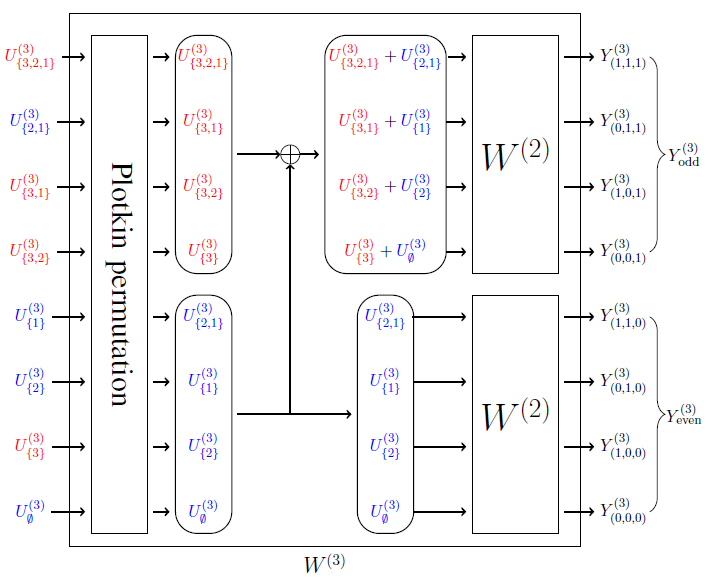}
\caption{The conditional distribution of $Y_{\odd}^{(3)}$ given 
$(U_{A}^{(3)}+U_{A\cup\{3\}}^{(3)}, A\subseteq[2])$ is exactly $W^{(2)}$, and so is the conditional distribution of $Y_{\even}^{(3)}$ given $(U_{A}^{(3)}, A\subseteq[2])$.}
\label{fig:m=2}
\end{figure}

\begin{figure}[h]
\centering
\begin{subfigure}{.45\linewidth}
\begin{tikzpicture}
\draw
 node at (0,10.5) [] (u1)  {\textcolor{red}{$U_{\{3,2,1\}}^{(3)}$}}
 node at (0,9) [] (u2)  {$\textcolor{blue}{U_{\{2,1\}}^{(3)}}$}
 node at (0,7.5) [] (u3) {\textcolor{red}{$U_{\{3,1\}}^{(3)}$}}
 node at (0,6) []  (u4) {$\textcolor{red}{U_{\{3,2\}}^{(3)}}$}
 node at (0,4.5) []  (u5) {$\textcolor{blue}{U_{\{1\}}^{(3)}}$}
 node at (0,3) []  (u6) {$\textcolor{blue}{U_{\{2\}}^{(3)}}$}
 node at (0,1.5) [] (u7)  {$\textcolor{red}{U_{\{3\}}^{(3)}}$}
 node at (0,0) [] (u8)  {$\textcolor{blue}{U_{\emptyset}^{(3)}}$}
 
 node at (2,10.5) [XOR,scale=1.2] (x1) {}
 node at (2, 6.75) [XOR,scale=1.2] (x2) {}
 node at (2, 1.5) [XOR,scale=1.2] (x3) {}
 node at (3.5,10.5) [block] (v1)  {$W_{\{2,1\}}^{(2)}$}
 node at (3.5,9) [block] (v2)  {$W_{\{2,1\}}^{(2)}$}
 node at (5.5,10.5) [] (y1)  {$Y_{2,\odd}^{(3)}$}
 node at (5.5,9) [] (y2)  {$Y_{2,\even}^{(3)}$}
 
 node at (3.5,7.5) []   {$W_{\{1\}}^{(2)}$}
 node at (3.5,6) []   {$W_{\{2\}}^{(2)}$}
 node at (5.5,6.75) [] (y3) {$Y_{1,\odd}^{(3)}$}
 
 node at (3.5,4.5) []   {$W_{\{1\}}^{(2)}$}
 node at (3.5,3) []   {$W_{\{2\}}^{(2)}$}
 node at (5.5,3.75) [] (y5) {$Y_{1,\even}^{(3)}$}
 
 node at (3.5,1.5) [block] (v7)  {$W_{\emptyset}^{(2)}$}
 node at (3.5,0) [block] (v8)  {$W_{\emptyset}^{(2)}$}
 node at (5.5,1.5) [] (y7)  {$Y_{0,\odd}^{(3)}$}
 node at (5.5,0) [] (y8)  {$Y_{0,\even}^{(3)}$};

 \draw [thick, rounded corners=5mm] (-0.6, 5.5) rectangle (0.6,8);
 \draw [thick, rounded corners=5mm] (-0.6, 2.5) rectangle (0.6,5);
 \draw [thick] (2.95, 5.5) rectangle (4.05,8);
 \draw [thick] (2.95, 2.5) rectangle (4.05,5);

 \draw[very thick,->](u1) -- node {}(x1);
 \draw[very thick,->](u2) -| node {}(x1);
 \draw[very thick,->](x1) -- node {}(v1);
 \draw[very thick,->](u2) -- node {}(v2);
 \draw[very thick,->](v1) -- node {}(y1);
 \draw[very thick,->](v2) -- node {}(y2);
 \draw[very thick,->](4.05, 6.75) -- node {}(y3);
 \draw[very thick,->](4.05, 3.75) -- node {}(y5);
 \draw[very thick,->](v7) -- node {}(y7);
 \draw[very thick,->](v8) -- node {}(y8);

 \draw[very thick,->](u7) -- node {}(x3);
 \draw[very thick,->](u8) -| node {}(x3);
 \draw[very thick,->](x3) -- node {}(v7);
 \draw[very thick,->](u8) -- node {}(v8);

 \draw[very thick,->](0.6, 6.75) -- node {}(x2);
 \draw[very thick,->](0.6, 3.75) -| node {}(x2);
 \draw[very thick,->](x2) -- node {}(2.95, 6.75);
 \draw[very thick,->](0.6, 3.75) -- node {}(2.95, 3.75);
 
\end{tikzpicture}
\end{subfigure}
\hspace*{0.2in}
\begin{subfigure}{.45\linewidth}
\begin{tikzpicture}
\draw
 node at (0,10.5) [] (u1)  {$W_{\{2,1\}}^{(2)}$}
 node at (0,9) [] (u2)  {$W_{\{2,1\}}^{(2)}$}
 node at (1.2,10.5) [] (v1)  {}
 node at (1.2,9) [] (v2)  {}
 node at (3.7,10.5) [] (x1)  {}
 node at (3.7,9) [] (x2)  {}
 node at (3.7,10.5) [] (x1)  {}
 node at (3.7,9) [] (x2)  {}
 node at (5.2,10.5) [] (y1)  {$W_{\{3,2,1\}}^{(3)}$}
 node at (5.2,9) [] (y2)  {$W_{\{2,1\}}^{(3)}$}
 node at (2.45,9.75) [text width=2cm,align=center] {Polar\\Transform};
 \draw[thick] (1.2,8.5) rectangle (3.7,11);
 \draw[very thick,->](u1) -- node {}(v1);
 \draw[very thick,->](u2) -- node {}(v2);
 \draw[very thick,->](x1) -- node[above] {``$-$"}(y1);
 \draw[very thick,->](x2) -- node [above] {``$+$"} (y2);

 \draw
 node at (0,7.5) []   {$W_{\{1\}}^{(2)}$}
 node at (0,6) []   {$W_{\{2\}}^{(2)}$}
 node at (0,4.5) []   {$W_{\{1\}}^{(2)}$}
 node at (0,3) []   {$W_{\{2\}}^{(2)}$}

 node at (5.2,7.5) []   {$W_{\{3,1\}}^{(3)}$}
 node at (5.2,6) []   {$W_{\{3,2\}}^{(3)}$}
 node at (5.2,4.5) []   {$W_{\{1\}}^{(3)}$}
 node at (5.2,3) []   {$W_{\{2\}}^{(3)}$}
 node at (2.45,5.25) [text width=2cm,align=center] {Fast\\Polar\\Transform};
 
 \draw [thick, rounded corners=5mm] (-0.6, 5.5) rectangle (0.6,8);
 \draw [thick, rounded corners=5mm] (-0.6, 2.5) rectangle (0.6,5);
 \draw [thick, rounded corners=5mm] (4.6, 5.5) rectangle (5.8,8);
 \draw [thick, rounded corners=5mm] (4.6, 2.5) rectangle (5.8,5);
 
 \draw[thick] (1.2, 3) rectangle (3.7, 7.5);
 \draw[very thick,->](0.6, 6.75) -- node {}(1.2, 6.75);
 \draw[very thick,->](0.6, 3.75) -- node {}(1.2, 3.75);
 \draw[very thick,->](3.7, 6.75) -- node[above] {``$-$"}(4.6, 6.75);
 \draw[very thick,->](3.7, 3.75) -- node [above] {``$+$"} (4.6, 3.75);

 \draw
 node at (0,1.5) [] (u1)  {$W_\emptyset^{(2)}$}
 node at (0,0) [] (u2)  {$W_\emptyset^{(2)}$}
 node at (1.2,1.5) [] (v1)  {}
 node at (1.2,0) [] (v2)  {}
 node at (3.7,1.5) [] (x1)  {}
 node at (3.7,0) [] (x2)  {}
 node at (3.7,1.5) [] (x1)  {}
 node at (3.7,0) [] (x2)  {}
 node at (5.2,1.5) [] (y1)  {$W_{\{3\}}^{(3)}$}
 node at (5.2,0) [] (y2)  {$W_\emptyset^{(3)}$}
 node at (2.45,0.75) [text width=2cm,align=center] {Polar\\Transform};
 \draw[thick] (1.2,-0.5) rectangle (3.7,2);
 \draw[very thick,->](u1) -- node {}(v1);
 \draw[very thick,->](u2) -- node {}(v2);
 \draw[very thick,->](x1) -- node[above] {``$-$"}(y1);
 \draw[very thick,->](x2) -- node [above] {``$+$"} (y2);
\end{tikzpicture}
\end{subfigure}
\caption{The channel outputs are given in \eqref{eq:toe}.
$W_{\{3,2,1\}}^{(3)}$ is the channel mapping from $U_{\{3,2,1\}}^{(3)}$ to $(Y_{2,\odd}^{(3)},Y_{2,\even}^{(3)})$, so it is the ``$-$" polar transform of $W_{\{2,1\}}^{(2)}$. $W_{\{2,1\}}^{(3)}$ is the channel mapping from $U_{\{2,1\}}^{(3)}$ to $(U_{\{3,2,1\}}^{(3)},Y_{2,\odd}^{(3)},Y_{2,\even}^{(3)})$, so it is the ``$+$" polar transform of $W_{\{2,1\}}^{(2)}$.
Similarly, the outputs of $W_{\{3\}}^{(3)}$ are bijections of $(Y_{0,\odd}^{(3)},Y_{0,\even}^{(3)})$, and the outputs of $W_\emptyset^{(3)}$ are bijections of $(U_{\{3\}}^{(3)},Y_{0,\odd}^{(3)},Y_{0,\even}^{(3)})$, so $W_{\{3\}}^{(3)}$ and $W_\emptyset^{(3)}$ are polar transforms of $W_\emptyset^{(2)}$.
See Fig.~\ref{fig:bpt} for explanations of fast polar transform and the bit-channels
$W_{\{3,1\}}^{(3)}, W_{\{3,2\}}^{(3)}, W_{\{1\}}^{(3)}, W_{\{2\}}^{(3)}$.}
\label{fig:fceng}
\end{figure}
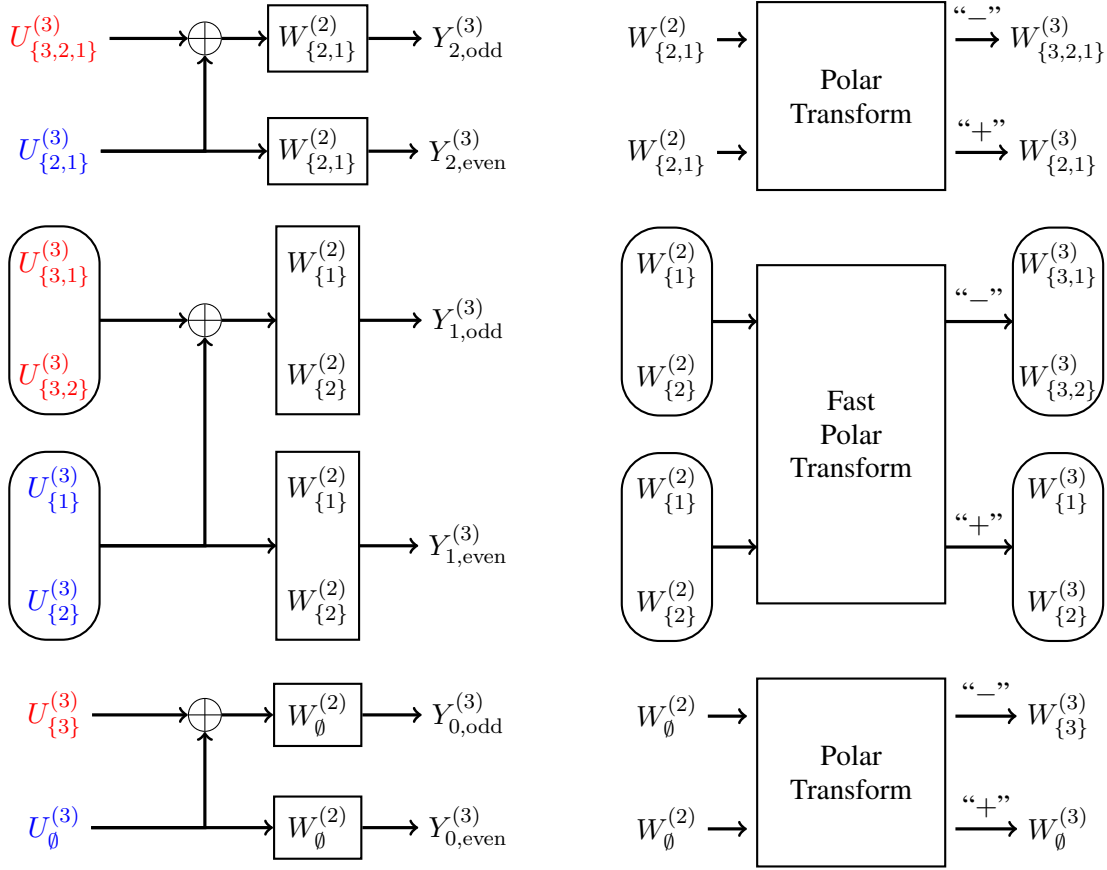

\begin{figure}
    \centering
\begin{tikzpicture}
\draw [thick] (0,0) rectangle (0.45, 0.6);
\draw [thick] (0.45,0) rectangle (2.7, 0.6);
\draw [thick] (2.7,0) rectangle (7.2, 0.6);
\draw [thick] (7.2,0) rectangle (11.7, 0.6);
\draw [thick] (11.7,0) rectangle (13.95, 0.6);
\draw [thick] (13.95,0) rectangle (14.4, 0.6);

\draw [decorate,decoration={brace,amplitude=10pt},xshift=0pt,yshift=0pt] (2.7, 0) -- (0.45, 0) node [black,midway,yshift=-0.6cm]   {Layer $4$};
\draw [decorate,decoration={brace,amplitude=10pt},xshift=0pt,yshift=0pt] (7.2, 0) -- (2.7, 0) node [black,midway,yshift=-0.6cm]   {Layer $3$};
\draw [decorate,decoration={brace,amplitude=10pt},xshift=0pt,yshift=0pt] (11.7, 0) -- (7.2, 0) node [black,midway,yshift=-0.6cm]   {Layer $2$};
\draw [decorate,decoration={brace,amplitude=10pt},xshift=0pt,yshift=0pt] (13.95, 0) -- (11.7, 0) node [black,midway,yshift=-0.6cm]   {Layer $1$};

\draw
node at (0.225, -1.1) [] (l5) {Layer $5$}
node at (14.175, -1.1) [] (l0) {Layer $0$};

\draw [very thick, ->] (l5) -- (0.225, 0);
\draw [very thick, ->] (l0) -- (14.175, 0);

\draw [densely dashed] (0.9, 0) -- (0.9, 0.6);
\draw [densely dashed] (4.5, 0) -- (4.5, 0.6);
\draw [densely dashed] (9.9, 0) -- (9.9, 0.6);
\draw [densely dashed] (13.5, 0) -- (13.5, 0.6);

\draw [thick] (0.225,2) rectangle (0.675, 2.6);
\draw [thick] (1.8,2) rectangle (3.6, 2.6);
\draw [thick] (5.85, 2) rectangle (8.55, 2.6);
\draw [thick] (10.8,2) rectangle (12.6, 2.6);
\draw [thick] (13.725,2) rectangle (14.175, 2.6);

\node at (0.45, 2.9) {Layer $4$};
\node at (2.7, 2.9) {Layer $3$};
\node at (7.2, 2.9) {Layer $2$};
\node at (11.7, 2.9) {Layer $1$};
\node at (13.95, 2.9) {Layer $0$};

\node at (7.2, 2.3) {$W_A^{(4)},|A|=2$};
\node at (5.8, 0.3) {\small $W_{A\cup\{5\}}^{(5)},|A|=2$};
\node at (8.55, 0.3) {$W_A^{(5)},|A|=2$};

\draw [very thick, ->] (0.45, 2) -- node [left] {``$-$"} (0.225, 0.6);
\draw [very thick, ->] (0.45, 2) -- node [right] {``$+$"} (0.675, 0.6);

\draw [very thick, ->] (2.7, 2) -- node [left] {``$-$"} (1.8, 0.6);
\draw [very thick, ->] (2.7, 2) -- node [right] {``$+$"} (3.6, 0.6);

\draw [very thick, ->] (7.2, 2) -- node [left] {``$-$"} (5.85, 0.6);
\draw [very thick, ->] (7.2, 2) -- node [right] {``$+$"} (8.55, 0.6);

\draw [very thick, ->] (11.7, 2) -- node [left] {``$-$"} (10.8, 0.6);
\draw [very thick, ->] (11.7, 2) -- node [right] {``$+$"} (12.6, 0.6);

\draw [very thick, ->] (13.95, 2) -- node [left] {``$-$"} (13.725, 0.6);
\draw [very thick, ->] (13.95, 2) -- node [right] {``$+$"} (14.175, 0.6);

\node at (15.2, 2.3) {$m=4$};
\node at (15.2, 0.3) {$m=5$};
\end{tikzpicture}
\caption{The evolution from level $m=4$ to level $m=5$. The $i$-th layer $\{W_A^{(5)},A\subseteq[5],|A|=i\}$ in the $5$-th level consists of the ``$+$" fast polar transform of the $i$-th layer $\{W_A^{(4)},A\subseteq[4],|A|=i\}$ in the $4$-th level and the ``$-$" fast polar transform of the $(i-1)$-th layer $\{W_A^{(4)},A\subseteq[4],|A|=i-1\}$ in the $4$-th level.}
\label{fig:rs}

\end{figure}
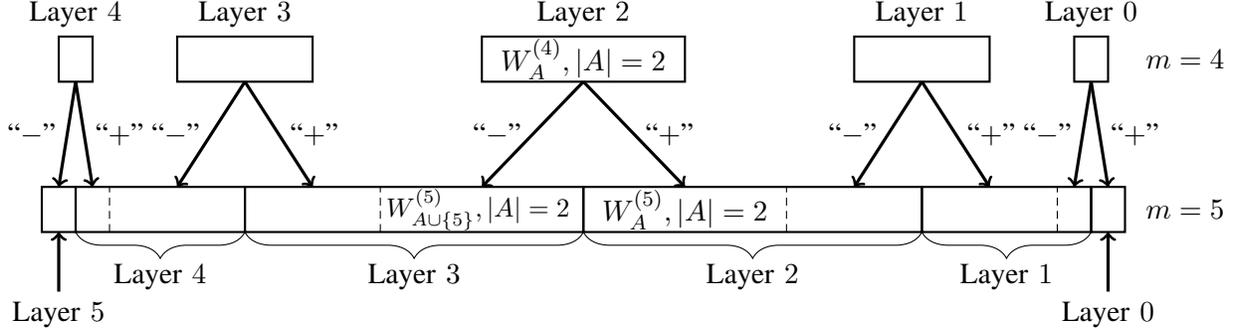

\begin{figure}
\centering
\begin{subfigure}{.45\linewidth}
\begin{tikzpicture}
\node at (0,4) {$\begin{array}{c}
   U_{B_1\cup\{m+1\}}^{(m+1)} \vspace*{0.1in}  \\
   U_{B_2\cup\{m+1\}}^{(m+1)}  \\
    \vdots  \\
    \vdots  \\
   U_{B_j\cup\{m+1\}}^{(m+1)}
\end{array}$};

\node at (0,0) {$\begin{array}{c}
   U_{B_1}^{(m+1)}  \vspace*{0.1in} \\
   U_{B_2}^{(m+1)}  \\
    \vdots  \\
    \vdots  \\
   U_{B_j}^{(m+1)}
\end{array}$};

\node at (3.5,0) {$\begin{array}{c}
   W_{B_1}^{(m)}  \vspace*{0.1in} \\
   W_{B_2}^{(m)}  \\
    \vdots  \\
    \vdots  \\
   W_{B_j}^{(m)}
\end{array}$};

\node at (3.5,4) {$\begin{array}{c}
   W_{B_1}^{(m)}  \vspace*{0.1in} \\
   W_{B_2}^{(m)}  \\
    \vdots  \\
    \vdots  \\
   W_{B_j}^{(m)}
\end{array}$};

 \node at (5.5, 4) [] (y1)  {$Y_{t,\odd}^{(m+1)}$};
 \node at (5.5, 0) [] (y2)  {$Y_{t,\even}^{(m+1)}$};

 \node at (2, 4) [XOR,scale=1.2] (x1) {};
 \draw [thick, rounded corners=5mm] (-0.8, -1.8) rectangle (0.8, 1.8);
 \draw [thick, rounded corners=5mm] (-1, 2.2) rectangle (1, 5.8);
 \draw[thick] (3,-1.8) rectangle (4,1.8);
 \draw[thick] (3,2.2) rectangle (4,5.8);

 \draw[very thick,->](1,4) -- node {}(x1);
 \draw[very thick,->](0.8, 0) -| node {}(x1);
 \draw[very thick,->](x1) -- node {}(3, 4);
 \draw[very thick,->](0.8, 0) -- node {}(3, 0);
 \draw[very thick,->](4, 4) -- node {}(y1);
 \draw[very thick,->](4, 0) -- node {}(y2);
\end{tikzpicture}
\end{subfigure}
\hspace*{0.2in}
\begin{subfigure}{.45\linewidth}
\begin{tikzpicture}
\node at (5.2, 4) {$\begin{array}{c}
   W_{B_1\cup\{m+1\}}^{(m+1)} \vspace*{0.1in}  \\
   W_{B_2\cup\{m+1\}}^{(m+1)}  \\
    \vdots  \\
    \vdots  \\
   W_{B_j\cup\{m+1\}}^{(m+1)}
\end{array}$};

\node at (5, 0) {$\begin{array}{c}
   W_{B_1}^{(m+1)}  \vspace*{0.1in} \\
   W_{B_2}^{(m+1)}  \\
    \vdots  \\
    \vdots  \\
   W_{B_j}^{(m+1)}
\end{array}$};

\node at (0,0) {$\begin{array}{c}
   W_{B_1}^{(m)}  \vspace*{0.1in} \\
   W_{B_2}^{(m)}  \\
    \vdots  \\
    \vdots  \\
   W_{B_j}^{(m)}
\end{array}$};

\node at (0,4) {$\begin{array}{c}
   W_{B_1}^{(m)}  \vspace*{0.1in} \\
   W_{B_2}^{(m)}  \\
    \vdots  \\
    \vdots  \\
   W_{B_j}^{(m)}
\end{array}$};

\node at (2.4, 2) [text width=2cm,align=center] {Fast\\Polar\\Transform};
 \draw[thick] (1.4,-1) rectangle (3.4, 5);
 \draw [thick, rounded corners=5mm] (-0.6, -1.8) rectangle (0.6, 1.8);
 \draw [thick, rounded corners=5mm] (-0.6, 2.2) rectangle (0.6, 5.8);
 \draw [thick, rounded corners=5mm] (4.2, -1.8) rectangle (5.8, 1.8);
 \draw [thick, rounded corners=5mm] (4.2, 2.2) rectangle (6.3, 5.8);

 \draw[very thick,->](0.6, 4) -- node {}(1.4, 4);
 \draw[very thick,->](0.6, 0) -- node {}(1.4, 0);
 \draw[very thick,->](3.4, 4) -- node [above] {``$-$"}(4.2, 4);
 \draw[very thick,->](3.4, 0) -- node [above] {``$+$"}(4.2, 0);

\end{tikzpicture}
\end{subfigure}
\caption{Fast Polar Transform. $B_1<B_2<\dots<B_j$ are all the subsets of $[m]$ with cardinality $t$. $Y_{t,\odd}^{(m+1)}$ and $Y_{t,\even}^{(m+1)}$ are given in \eqref{eq:toe}.}
\label{fig:bpt}
\end{figure}
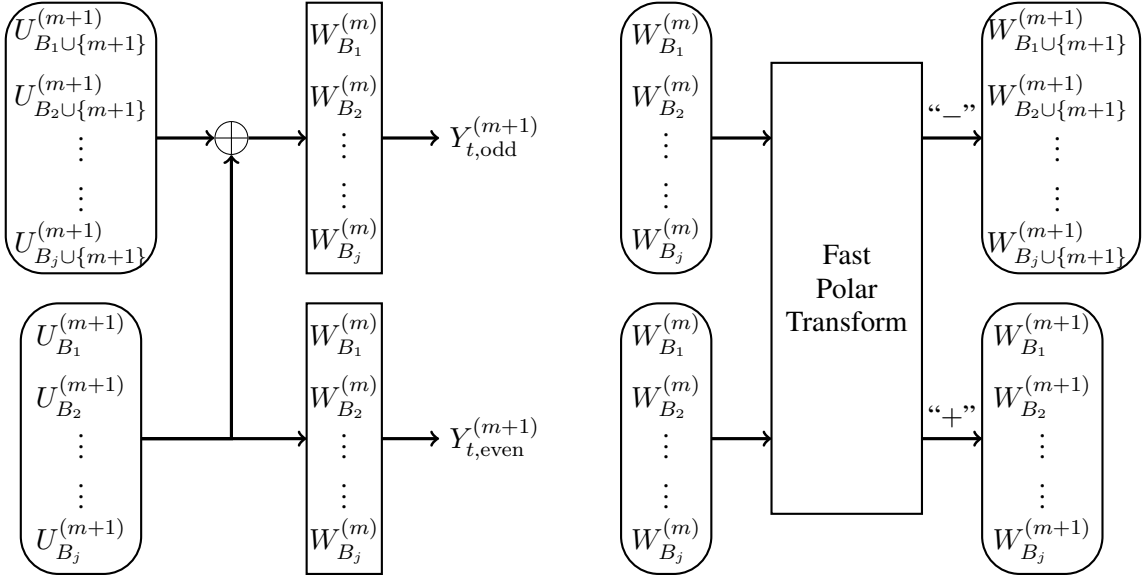

\begin{figure}[h]
\centering
\begin{subfigure}{.45\linewidth}
\begin{tikzpicture}
\draw
 node at (0, 4.5) [] (u1)  {$U_{B_i\cup\{m+1\}}^{(m+1)}$}
 node at (0,3) [] (u2)  {$U_{B_i}^{(m+1)}$}
 node at (2,4.5) [XOR,scale=1.2] (x1) {}
 node at (3.5,4.5) [block] (v1)  {$W_{B_i}^{(m)}$}
 node at (3.5,3) [block] (v2)  {$W_{B_i}^{(m)}$}
 node at (5.5,4.5) [] (y1)  {$Y_{B_i,\odd}^{(m+1)}$}
 node at (5.5,3) [] (y2)  {$Y_{B_i,\even}^{(m+1)}$};
 \draw[very thick,->](u1) -- node {}(x1);
 \draw[very thick,->](u2) -| node {}(x1);
 \draw[very thick,->](x1) -- node {}(v1);
 \draw[very thick,->](u2) -- node {}(v2);
 \draw[very thick,->](v1) -- node {}(y1);
 \draw[very thick,->](v2) -- node {}(y2);
\end{tikzpicture}
\end{subfigure}
\hspace*{0.2in}
\begin{subfigure}{.45\linewidth}
\begin{tikzpicture}
\draw
 node at (0,1.5) [] (u1)  {$W_{B_i}^{(m)}$}
 node at (0,0) [] (u2)  {$W_{B_i}^{(m)}$}
 node at (1.2,1.5) [] (v1)  {}
 node at (1.2,0) [] (v2)  {}
 node at (3.7,1.5) [] (x1)  {}
 node at (3.7,0) [] (x2)  {}
 node at (5.4,1.5) [] (y1)  {$\widetilde{W}_{B_i\cup\{m+1\}}^{(m+1)}$}
 node at (5.2,0) [] (y2)  {$\widetilde{W}_{B_i}^{(m+1)}$}
 node at (2.45,0.75) [text width=2cm,align=center] {Polar\\Transform};
 \draw[thick] (1.2,-0.5) rectangle (3.7,2);
 \draw[very thick,->](u1) -- node {}(v1);
 \draw[very thick,->](u2) -- node {}(v2);
 \draw[very thick,->](x1) -- node[above] {``$-$"}(y1);
 \draw[very thick,->](x2) -- node [above] {``$+$"} (y2);

\end{tikzpicture}
\end{subfigure}
\caption{$Y_{B_i,\odd}^{(m+1)}$ and $Y_{B_i,\even}^{(m+1)}$ are defined in \eqref{eq:lby}.
The outputs of $\widetilde{W}_{B_i\cup\{m+1\}}^{(m+1)}$ are bijections of $(Y_{B_i,\odd}^{(m+1)},Y_{B_i,\even}^{(m+1)})$, and the outputs of $\widetilde{W}_{B_i}^{(m+1)}$ are bijections of $(U_{B_i\cup\{m+1\}}^{(m+1)},Y_{B_i,\odd}^{(m+1)},Y_{B_i,\even}^{(m+1)})$, so $\widetilde{W}_{B_i\cup\{m+1\}}^{(m+1)}$ and $\widetilde{W}_{B_i}^{(m+1)}$ are ``$-$" and ``$+$" polar transforms of $W_{B_i}^{(m)}$, respectively.}
\label{fig:1and2}
\end{figure}
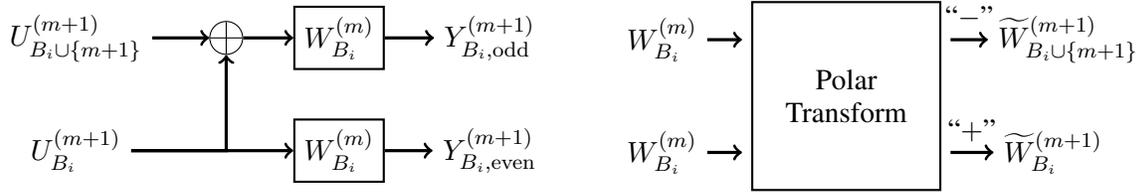

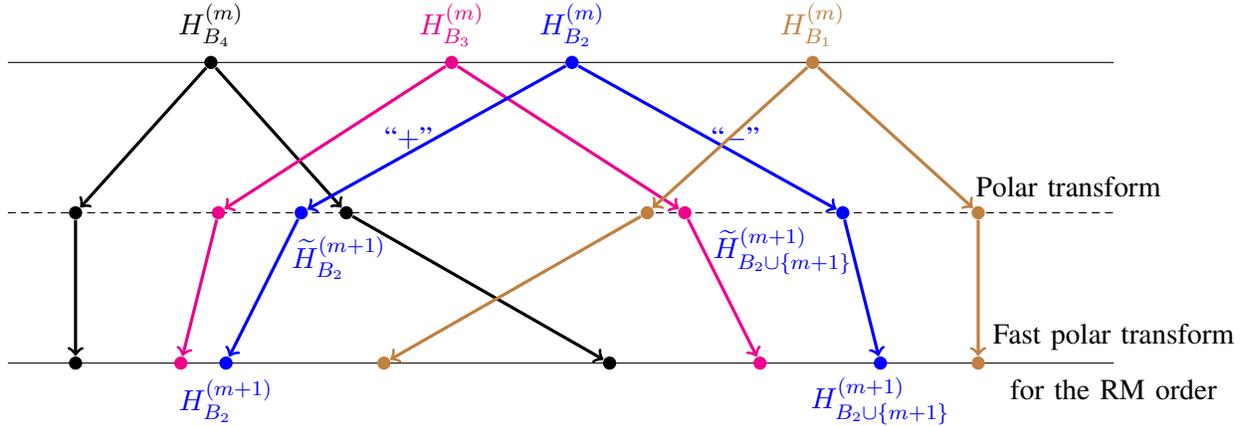
\begin{figure}
\centering
\begin{tikzpicture}[every node/.style={circle,inner sep=0pt, minimum size= 5pt}]
\draw (0.3,0) -- (15,0);
\draw [densely dashed] (0.3, 2) -- (15, 2);
\draw (0.3, 4) -- (15, 4);

\node at (14.4, 2.35) {Polar transform};
\node at (15, 0.35) {Fast polar transform};
\node at (15, -0.35) {for the RM order};

\node at (3, 4.5) {$H_{B_4}^{(m)}$};
\node at (6.2, 4.5) [color=magenta] {$H_{B_3}^{(m)}$};
\node at (7.8, 4.5) [color=blue] {$H_{B_2}^{(m)}$};
\node at (11, 4.5) [color=brown] {$H_{B_1}^{(m)}$};

\node at (3.2, -0.5) [color=blue] {$H_{B_2}^{(m+1)}$};
\node at (11.9, -0.5) [color=blue] {$H_{B_2\cup\{m+1\}}^{(m+1)}$};

\node at (4.7, 1.4) [color=blue] {$\widetilde{H}_{B_2}^{(m+1)}$};
\node at (10.6, 1.5) [color=blue] {$\widetilde{H}_{B_2\cup\{m+1\}}^{(m+1)}$};

\node at (3,4) [fill]  (bk0) {};
\node at (6.2,4) [fill= magenta]  (m0) {};
\node at (7.8,4) [fill= blue]  (bl0) {};
\node at (11,4) [fill= brown]  (br0) {};

\node at (1.2,2) [fill]  (bk1) {};
\node at (4.8,2) [fill] (bk2) {};

\node at (3.1,2) [fill, magenta] (m1) {};
\node at (9.3,2) [fill, magenta] (m2) {};

\node at (4.2,2) [fill, blue] (bl1) {};
\node at (11.4,2) [fill, blue] (bl2) {};

\node at (8.8,2) [fill, brown] (br1) {};
\node at (13.2,2) [fill, brown] (br2) {};

\node at (1.2, 0) [fill] (bk3) {};
\node at (8.3, 0) [fill] (bk4) {};

\node at (2.6,0) [fill, magenta] (m3) {};
\node at (10.3,0) [fill, magenta] (m4) {};

\node at (3.2,0) [fill, blue] (bl3) {};
\node at (11.9, 0) [fill, blue] (bl4) {};

\node at (5.3,0) [fill, brown] (br3) {};
\node at (13.2,0) [fill, brown] (br4) {};

\draw[very thick,->](bk0) -- node {}(bk1);
\draw[very thick,->](bk0) -- node {}(bk2);
\draw[very thick,->](bk1) -- node {}(bk3);
\draw[very thick,->](bk2) -- node {}(bk4);

\draw[very thick,->, color=magenta](m0) -- node {}(m1);
\draw[very thick,->, color=magenta](m0) -- node {}(m2);
\draw[very thick,->, color=magenta](m1) -- node {}(m3);
\draw[very thick,->, color=magenta](m2) -- node {}(m4);

\draw[very thick,->, color=blue](bl0) -- node [left] {``$+$"}(bl1);
\draw[very thick,->, color=blue](bl0) -- node [right] {``$-$"} (bl2);
\draw[very thick,->, color=blue](bl1) -- node {}(bl3);
\draw[very thick,->, color=blue](bl2) -- node {}(bl4);

\draw[very thick,->, color=brown](br0) -- node {}(br1);
\draw[very thick,->, color=brown](br0) -- node {}(br2);
\draw[very thick,->, color=brown](br1) -- node {}(br3);
\draw[very thick,->, color=brown](br2) -- node {}(br4);
\end{tikzpicture}
\caption{The fast polar transform with block size $4$.
The dots on the second line are the results of the standard polar transform, and the dots on the third line are the results of fast polar transform. In the fast polar transform, the worse (``$-$") bit-channel in the standard polar transform gets even worse, and the better (``$+$") bit-channel in the standard polar transform gets even better. Therefore, the gap between $H_{B_i\cup\{m+1\}}^{(m+1)}$ and $H_{B_i}^{(m+1)}$ is always larger than the gap between $\widetilde{H}_{B_i\cup\{m+1\}}^{(m+1)}$ and $\widetilde{H}_{B_i}^{(m+1)}$. Intuitively, this explains why RM codes polarize and do so even faster than polar codes.
}
\label{fig:gp}
\end{figure}

In order to prove these two theorems, we only need to show two results. We establish  first an {\it interlacing property:}
\begin{equation} \label{eq:jjw1}
\text{{\it Interlacing property:}} \quad
 H_{A_i}^{(m+1)} \le H_{A_i}^{(m)} \le H_{A_{i+1}}^{(m+1)}  \quad \forall i\in\{0,1,\dots,m\}.
\end{equation}
Second, for any $\epsilon>0$, there is $\delta(\epsilon)>0$ such that for any increasing chain of sets and any $i\in\{0,1,\dots,m\}$,
$$
H_{A_i}^{(m)} \in (\epsilon,1-\epsilon)
$$
implies that
\begin{equation} \label{eq:jjw2}
 H_{A_i}^{(m)} -  H_{A_i}^{(m+1)} > \delta(\epsilon)
\text{~and~} 
 H_{A_{i+1}}^{(m+1)} - H_{A_i}^{(m)} > \delta(\epsilon).
\end{equation}

It is clear that Theorem~\ref{thm:m1} follows immediately from \eqref{eq:jjw1}; see Fig.~\ref{fig:bvd} for an illustration. Now we prove Theorem~\ref{thm:m2} using \eqref{eq:jjw2} and Theorem~\ref{thm:m1}.
By \eqref{eq:jjw2} we know that as long as $H_{A_i}^{(m)}>\epsilon$ and $H_{A_{i+1}}^{(m)}<1-\epsilon$, we have $H_{A_{i+1}}^{(m)} - H_{A_i}^{(m)}>2\delta$; see Fig.~\ref{fig:bvd} for an illustration.
Let $j$ be the smallest index such that $H_{A_j}^{(m)}>\epsilon$, and let $j'$ be the largest index such that $H_{A_{j'}}^{(m)}<1-\epsilon$. Then
$$
\left|\left\{ i\in\{0,1,\dots,m\}:\epsilon < H_{A_i}^{(m)} < 1-\epsilon \right\} \right| = j'-j+1. 
$$
Since $H_{A_i}^{(m)}$ increases with $i$, we have 
$$
H_{A_{j'}}^{(m)}- H_{A_j}^{(m)}=\sum_{i=j}^{j'-1} (H_{A_{i+1}}^{(m)} - H_{A_i}^{(m)}) >2(j'-j)\delta.
$$
Since $H_{A_{j'}}^{(m)}- H_{A_j}^{(m)}$ is upper bounded by $1$, we have $j'-j<\frac{1}{2\delta}$. Therefore,
$$
\left|\left\{ i\in\{0,1,\dots,m\}:\epsilon < H_{A_i}^{(m)} < 1-\epsilon \right\} \right| < \frac{1}{2\delta}+1. 
$$
Thus we have proved Theorem~\ref{thm:m2} with the choice of $D(\epsilon)= \frac{1}{2\delta(\epsilon)}+1$.

Now we are left to explain how to prove \eqref{eq:jjw1}--\eqref{eq:jjw2}.
The proof is divided into two steps. First, we prove \eqref{eq:jjw1}--\eqref{eq:jjw2} for the special case of $A_{i+1}=A_i\cup\{m+1\}$. Then we show that by the symmetry of RM codes, $H_{A_i\cup\{j\}}^{(m+1)} \ge H_{A_i\cup\{m+1\}}^{(m+1)}$ for any $j\in[m]\setminus A_i$; see Lemma~\ref{lm:is}.
Below we focus on the explanation of the first part.

To prove \eqref{eq:jjw1}--\eqref{eq:jjw2} for the special case of $A_{i+1}=A_i\cup\{m+1\}$,
we use the recursive structure of RM code, and connect it back to that of polar codes. However, \eqref{eq:jjw1} is {\it not} a polar code triplet of the kind $W^- \le W \le W^+$ since we are working with the RM code ordering. The good news is that \eqref{eq:jjw1} gives in fact a larger spread than the one occurring for  triplets of polar codes. The rest of this section is dedicated to explaining the precise meaning of previous phrase. 
With this connection in mind, \eqref{eq:jjw1}--\eqref{eq:jjw2} will be derived from \eqref{eq:kbd} and Lemma~\ref{lm:sas}.

We now derive \eqref{eq:jjw1}--\eqref{eq:jjw2}.
For a given BMS channel $W$, we denote the channel mapping from $U^{(m)}$ to $Y^{(m,W)}$ as $W^{(m)}$.
Let us divide $Y^{(m+1)}:=(Y_{\mathbi{z}}^{(m+1)}:\mathbi{z}=(z_1,z_2,\dots,z_{m+1})\in\{0,1\}^{m+1})$ into two subsets:
\begin{equation} \label{eq:oddeven}
\begin{aligned}
Y_{\odd}^{(m+1)}:=(Y_{\mathbi{z}}^{(m+1)}:\mathbi{z}=(z_1,z_2,\dots,z_{m+1})\in\{0,1\}^{m+1},
z_{m+1} = 1) \\
Y_{\even}^{(m+1)}:= (Y_{\mathbi{z}}^{(m+1)}:\mathbi{z}=(z_1,z_2,\dots,z_{m+1})\in\{0,1\}^{m+1},
z_{m+1}=0).
\end{aligned}
\end{equation}
The main observation is that the conditional distribution of $Y_{\odd}^{(m+1)}$ given 
$(U_{A}^{(m+1)}+U_{A\cup\{m+1\}}^{(m+1)}, A\subseteq[m])$ is exactly $W^{(m)}$, and so is the conditional distribution of $Y_{\even}^{(m+1)}$ given $(U_{A}^{(m+1)}, A\subseteq[m])$.
The special case of $m=2$ is illustrated in Fig.~\ref{fig:m=2}. 
To see the connection to polar codes, we further analyze the relation between the bit-channels $\{W_A^{(3)}:A\subseteq[3]\}$ and $\{W_A^{(2)}:A\subseteq[2]\}$ in Fig.~\ref{fig:fceng}, where the outputs of the bit-channels in the figure are given by
\begin{equation} \label{eq:toe}
\begin{aligned}
    Y_{t,\odd}^{(m+1)} &:= (\{U_{B\cup\{m+1\}}^{(m+1)}+U_B^{(m+1)}:B\subseteq[m],|B|>t\},Y_{\odd}^{(m+1)}), \\
    Y_{t,\even}^{(m+1)} &:= (\{U_B^{(m+1)}:B\subseteq[m],|B|>t\},Y_{\even}^{(m+1)}),
\end{aligned}
\end{equation}
where $t=0,1,\dots,m.$
For $m=2$ and $t=0,1,2$ in Fig.~\ref{fig:fceng}, we have
$Y_{2,\odd}^{(3)} = Y_{\odd}^{(3)}, 
    Y_{2,\even}^{(3)} = Y_{\even}^{(3)},  
    Y_{1,\odd}^{(3)} = (U_{\{3,2,1\}}^{(3)}+U_{\{2,1\}}^{(3)},Y_{\odd}^{(3)}), 
    Y_{1,\even}^{(3)} = (U_{\{2,1\}}^{(3)},Y_{\even}^{(3)}),  
Y_{0,\odd}^{(3)} = (U_{\{3,2,1\}}^{(3)}+U_{\{2,1\}}^{(3)},U_{\{3,1\}}^{(3)}+U_{\{1\}}^{(3)},U_{\{3,2\}}^{(3)}+U_{\{2\}}^{(3)},Y_{\odd}^{(3)}), 
Y_{0,\even}^{(3)} = (U_{\{2,1\}}^{(3)},U_{\{1\}}^{(3)},U_{\{2\}}^{(3)},Y_{\even}^{(3)})$.

The bit-channels $\{W_A^{(2)}:A\subseteq[2]\}$ are divided into three layers according to the cardinality of the set $A$: The $0$-th layer is $\{W_{\emptyset}^{(2)}\}$, the only set with cardinality $0$;
the first layer is $\{W_{\{1\}}^{(2)}, W_{\{2\}}^{(2)}\}$, the sets with cardinality $1$; and the second layer is $\{W_{\{2,1\}}^{(2)}\}$, the only set with cardinality $2$;
In Fig.~\ref{fig:fceng}, we can see that the bit-channels in the next level $\{W_A^{(3)}:A\subseteq[3]\}$ are obtained by taking two copies of each layer in $\{W_A^{(2)}:A\subseteq[2]\}$ and performing the fast polar transform. More precisely, for the $0$-th and the second layer, we perform the standard polar transform, and for the first layer, we perform the fast polar transform with block size $2$. We will discuss the fast polar transform in more detail later in this section. 

Such a polarization procedure in fact describes the recursive structure between the bit-channels $\{W_A^{(m)}:A\subseteq[m]\}$ in the $m$-th level and the bit-channels $\{W_A^{(m+1)}:A\subseteq[m+1]\}$ in the $(m+1)$-th level for general values of $m$. More specifically, The bit-channels $\{W_A^{(m)}:A\subseteq[m]\}$ are divided into $m+1$ layers according to the cardinality of the set $A$: The $i$-th layer is $\{W_A^{(m)}:A\subseteq[m],|A|=i\}$, the sets with cardinality $i$, for $i=0,1,2,\dots,m$.
Then we take two copies of each layer $\{W_A^{(m)}:A\subseteq[m],|A|=i\}$ and perform the fast polar transform. The outcome of the ``$-$" fast polar transform is the bit-channels $\{W_{A\cup\{m+1\}}^{(m+1)}:A\subseteq[m],|A|=i\}$ in the next level, and the outcome of the ``$+$" fast polar transform is the bit-channels $\{W_A^{(m+1)}:A\subseteq[m],|A|=i\}$.
From the perspective of the bit-channels $\{W_A^{(m+1)}:A\subseteq[m+1]\}$ in the $(m+1)$-th level, except for the $0$-th and the $(m+1)$-th layers, each layer $\{W_A^{(m+1)}:A\subseteq[m+1],|A|=i\}$ is divided into two parts: The first part is $\{W_A^{(m+1)}:A\subseteq[m],|A|=i\}$, which is the ``$+$" fast polar transform of the $i$-th layer $\{W_A^{(m)}:A\subseteq[m],|A|=i\}$ in the $m$-th level. The second part is $\{W_{A\cup\{m+1\}}^{(m+1)}:A\subseteq[m],|A|=i-1\}$, which is the ``$-$" fast polar transform of the $(i-1)$-th layer $\{W_A^{(m)}:A\subseteq[m],|A|=i-1\}$ in the $m$-th level. 
As for the $0$-th and the $(m+1)$-th layers, each of them only contains a single bit-channel $W_{\emptyset}^{(m+1)}$ and $W_{[m+1]}^{(m+1)}$, respectively, where $W_{\emptyset}^{(m+1)}$ is the ``$+$" polar transform of $W_{\emptyset}^{(m)}$, and $W_{[m+1]}^{(m+1)}$ is the ``$-$" polar transform of $W_{[m]}^{(m)}$. See Fig.~\ref{fig:rs} for an illustration of $m=4$.

Next we explain the fast polar transform. Let us consider the bit-channels $\{W_A^{(m+1)}:A\subseteq[m+1]\}$ in the $(m+1)$-th level. According to the total order defined in Definition~\ref{def:tto}, the layers from top to down are the $(m+1)$-th layer, the $m$-th layer, \dots, all the way down to the $0$-th layer. Within each layer, all the sets containing the element $m+1$ appear after those not containing this element. Now let $B_1<B_2<\dots<B_j$ be all the sets in the $t$-th layer of $[m]$, i.e., they are all the subsets of $[m]$ with cardinality $t$. Then by the discussion above we know that the following sequence of subsets
\begin{equation} \label{eq:spo}
B_1\cup\{m+1\}<B_2\cup\{m+1\}<\dots<B_j\cup\{m+1\}<B_1<B_2<\dots<B_j
\end{equation}
are consecutive according to the total order on the subsets of $[m+1]$.
It is easy to check that given $Y_{t,\odd}^{(m+1)}$ and $Y_{t,\even}^{(m+1)}$ (see \eqref{eq:toe} for definitions), if a successive decoder decodes according to the order in \eqref{eq:spo}, then the bit-channels seen by this decoder are equivalent to
$$
W_{B_1\cup\{m+1\}}^{(m+1)}, W_{B_2\cup\{m+1\}}^{(m+1)}, \dots, W_{B_j\cup\{m+1\}}^{(m+1)},
W_{B_1}^{(m+1)}, W_{B_2}^{(m+1)}, \dots, W_{B_j}^{(m+1)}.
$$
For instance, $W_{B_2\cup\{m+1\}}^{(m+1)}$ is equivalent to the bit-channel mapping from $U_{B_2\cup\{m+1\}}^{(m+1)}$ to $(U_{B_1\cup\{m+1\}}^{(m+1)}, \linebreak[4]
Y_{t,\odd}^{(m+1)},  Y_{t,\even}^{(m+1)})$.
This is illustrated in Fig.~\ref{fig:bpt}.

In order to connect fast polar transform to the standard polar transform, we consider the following order of the sets in \eqref{eq:spo}:
\begin{equation} \label{eq:ljk}
B_1\cup\{m+1\}, B_1, B_2\cup\{m+1\}, B_2,\dots,B_j\cup\{m+1\},B_j.
\end{equation}
Assuming that we are still given $Y_{t,\odd}^{(m+1)}$ and $Y_{t,\even}^{(m+1)}$, but this time the successive decoder decodes in this order instead of the order in \eqref{eq:spo}. We denote the bit-channels seen by this successive decoder as 
$$
\widetilde{W}_{B_1\cup\{m+1\}}^{(m+1)}, \widetilde{W}_{B_1}^{(m+1)}, \widetilde{W}_{B_2\cup\{m+1\}}^{(m+1)}, \widetilde{W}_{B_2}^{(m+1)}, \dots, \widetilde{W}_{B_j\cup\{m+1\}}^{(m+1)}, \widetilde{W}_{B_j}^{(m+1)}.
$$
We further define
\begin{equation} \label{eq:lby}
\begin{aligned}
    Y_{B_i,\odd}^{(m+1)} &:= (\{U_{B\cup\{m+1\}}^{(m+1)}+U_B^{(m+1)}:B\subseteq[m],B<B_i\},Y_{\odd}^{(m+1)}), \\
    Y_{B_i,\even}^{(m+1)} &:= (\{U_B^{(m+1)}:B\subseteq[m],B<B_i\},Y_{\even}^{(m+1)}).
\end{aligned}
\end{equation}

According to Fig.~\ref{fig:1and2}, $\widetilde{W}_{B_i\cup\{m+1\}}^{(m+1)}$ and $\widetilde{W}_{B_i}^{(m+1)}$ are ``$-$" and ``$+$" polar transforms of $W_{B_i}^{(m)}$, respectively.
Then by \eqref{eq:kbd} and Lemma~\ref{lm:sas}, we know that 
\begin{equation} \label{eq:cdb1}
\widetilde{H}_{B_i\cup\{m+1\}}^{(m+1)} \ge H_{B_i}^{(m)} \ge \widetilde{H}_{B_i}^{(m+1)},
\end{equation}
and if $H_{B_i}^{(m)}\in(\epsilon, 1-\epsilon)$, then
\begin{equation} \label{eq:cdb2}
\widetilde{H}_{B_i\cup\{m+1\}}^{(m+1)} - H_{B_i}^{(m)} > \delta
\text{~~and~~} H_{B_i}^{(m)} - \widetilde{H}_{B_i}^{(m+1)} > \delta,
\end{equation}
where $\widetilde{H}=1-I(\widetilde{W})$.
Comparing the order in \eqref{eq:spo} and \eqref{eq:ljk}, we can see that every set that appears before $B_i\cup\{m+1\}$ in \eqref{eq:spo} also appears before $B_i\cup\{m+1\}$ in \eqref{eq:ljk}.
Therefore, for every $i\in\{1,2,\dots,j\}$, we have $H_{B_i\cup\{m+1\}}^{(m+1)}\ge \widetilde{H}_{B_i\cup\{m+1\}}^{(m+1)}$. Similarly, we also have $\widetilde{H}_{B_i}^{(m+1)}\ge H_{B_i}^{(m+1)}$. In other words, in the standard polar transform, we obtain a worse bit-channel through the ``$-$" transform and a better one through the ``$+$" transform. Then in the fast polar transform, we make the bit-channel obtained through the standard ``$-$" polar transform even worse and the bit-channel obtained through the standard ``$+$" polar transform even better. Therefore, the gap between $H_{B_i\cup\{m+1\}}^{(m+1)}$ and $H_{B_i}^{(m)}$ is even larger than the gap between $\widetilde{H}_{B_i\cup\{m+1\}}^{(m+1)}$ and $H_{B_i}^{(m)}$. Similarly, the gap between $H_{B_i}^{(m)}$ and $H_{B_i}^{(m+1)}$ is even larger than the gap between $H_{B_i}^{(m)}$ and $\widetilde{H}_{B_i}^{(m+1)}$; see Fig.~\ref{fig:gp} for an illustration. Combining this with \eqref{eq:cdb1}--\eqref{eq:cdb2}, we have shown that \eqref{eq:jjw1}--\eqref{eq:jjw2} hold for any $A_{i+1}=A_i\cup\{m+1\}$.

By now, we have explained how to prove Theorem~\ref{thm:m1} and Theorem~\ref{thm:m2}. The next step is to use these two theorems to prove Theorem~\ref{thm:m3}. To that end, we need the following strengthened form of Theorem~\ref{thm:m2}.

\begin{theorem}\label{thm:strong2}
For every BMS channel $W$ and every $0<\epsilon<0.1$, any $\delta_n=\poly(1/n)$ and $0<\gamma<1$, there is a constant $M(\epsilon,\delta_n,\gamma)$ such that for every $m>M(\epsilon,\delta_n,\gamma)$ and every increasing chain of sets $\emptyset=A_0 \subseteq A_1 \subseteq A_2 \subseteq \dots \subseteq A_m=[m]$,
\begin{equation} \label{eq:ccj}
\left|\left\{ i\in\{0,1,\dots,m\}:H_{A_i}^{(m)} > 1-\epsilon \right\} \cup
\left\{ i\in\{0,1,\dots,m\}:Z_{A_i}^{(m)} < \delta_n \right\} \right| \ge m-m^\gamma. 
\end{equation}
\end{theorem}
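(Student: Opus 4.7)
The plan is to strengthen Theorem~\ref{thm:m2} by folding in an analysis of the Bhattacharyya parameter, in the spirit of Ar\i kan-Telatar's polarization-speed argument. Partition the chain indices as $L = \{i : H_{A_i}^{(m)} \le \epsilon\}$, $M = \{i : \epsilon < H_{A_i}^{(m)} < 1-\epsilon\}$, and $G = \{i : H_{A_i}^{(m)} \ge 1-\epsilon\}$. Theorem~\ref{thm:m2} gives $|M| \le D(\epsilon)$, a constant, so it suffices to show that at most $m^\gamma - D(\epsilon)$ indices $i \in L$ satisfy $Z_{A_i}^{(m)} \ge \delta_n$. For any such $i$ with $\epsilon < 1/10$, the elementary bound $Z \le \sqrt{H(2-H)}$ gives $Z_{A_i}^{(m)} \le \sqrt{2\epsilon} < 1/2$, so $Z$ is already bounded away from $1$, but driving it below $\delta_n = \poly(1/n)$ requires additional polarization.

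The main tool is the $Z$-squaring along the ``$+$'' direction of the fast polar transform. Mirroring the classical $Z(W^+) = Z(W)^2$ of \eqref{eq:ks}, and because the fast polar ``$+$'' dominates the standard ``$+$'' via degradation as indicated in Fig.~\ref{fig:gp}, one has $Z_A^{(m+1)} \le (Z_A^{(m)})^2$ whenever $A \subseteq [m]$. Iterating across $k$ levels gives $Z_A^{(m+k)} \le (Z_A^{(m)})^{2^k}$, so $k = O(\log m)$ ``$+$'' steps suffice to push $Z$ below $\delta_n$ starting from $Z < 1/2$.

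To convert this into a statement at the fixed target level $m$, I would induct on $m$ using the recursive decomposition of Fig.~\ref{fig:rs}. A chain $A_0 \subseteq \dots \subseteq A_m$ splits at the unique index $j$ where $m$ is first added; after deleting $m$, the two pieces glue into a chain at level $m-1$. Interlacing and fast polar yield, for $i < j$, $H_{A_i}^{(m)} \le H_{A_i}^{(m-1)}$ and $Z_{A_i}^{(m)} \le (Z_{A_i}^{(m-1)})^2$; for $i \ge j$, an analogous ``$-$''-side relation to $A_i \setminus \{m\}$. The inductive hypothesis at level $m-1$ propagates good indices (those with small $Z$) up to level $m$ thanks to the squaring, while the count of genuinely new bad indices is controlled by an Ar\i kan-Telatar-style partition of the indices according to the pseudo-depth of ``$+$'' transforms in their induced polar-tree history. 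The main obstacle is quantifying the exception budget to match the polynomial rate $m^\gamma$ for arbitrary $\gamma \in (0,1)$: a naive induction in $m$ yields at best $E(m) = O(m)$, so one must exploit that the $Z$-squaring amortizes across many levels, combining this with the interlacing \eqref{eq:jjw1}--\eqref{eq:jjw2} so that all but $m^\gamma$ indices have accumulated enough ``$+$'' steps for $Z$ to drop below $\delta_n$.
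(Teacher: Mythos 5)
Your proposal identifies the right ingredients---the $L/M/G$ partition, the $|M|\le D(\epsilon)$ bound from Theorem~\ref{thm:m2}, the observation that $H<\epsilon<1/10$ forces $Z<1/2$ via \eqref{eq:os}, and the ``$+$''-side squaring $Z_A^{(m+1)}\le (Z_A^{(m)})^2$ from \eqref{eq:qqt}---but it is missing the one step that makes the proof work at a fixed level $m$. The paper's key move is to chain the \emph{two} cross-level inequalities in \eqref{eq:qqt}: for $A_i\subseteq A_{i+1}\subseteq[m]$ consecutive in the chain, $Z_{A_i}^{(m)}\le Z_{A_{i+1}}^{(m+1)}\le \bigl(Z_{A_{i+1}}^{(m)}\bigr)^2$, yielding the \emph{within-level, along-chain} squaring $Z_{A_i}^{(m)}\le \bigl(Z_{A_{i+1}}^{(m)}\bigr)^2$. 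This is what you do not have: your iterate $Z_A^{(m+k)}\le (Z_A^{(m)})^{2^k}$ walks up in level rather than down the chain at level $m$, and you then need your cross-level induction to bring it back, which you yourself flag as stalling at an $O(m)$ exception budget.

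With the within-level squaring in hand the rest is short and requires no recursion or amortization: let $i_1$ be the largest index with $H_{A_{i_1}}^{(m)}<\epsilon$ and $i_2$ the smallest with $H_{A_{i_2}}^{(m)}>1-\epsilon$; monotonicity (Theorem~\ref{thm:m1}) and Theorem~\ref{thm:m2} give $i_2-i_1-1\le D(\epsilon)$. Since $Z_{A_{i_1}}^{(m)}<1/2$, stepping down $j=\lceil m^\gamma/2\rceil$ places gives $\log_2 Z_{A_{i_1-j}}^{(m)}\le 2^{j}\log_2 Z_{A_{i_1}}^{(m)}\le -2^{m^\gamma/2}\le -dm$ for $m$ large, so every index $i\le i_1-\frac12 m^\gamma$ satisfies $Z_{A_i}^{(m)}<n^{-d}=\delta_n$. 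The bad set is then at most $\frac12 m^\gamma + D(\epsilon)+O(1)\le m^\gamma$. Without the within-level squaring your plan would have to recurse across levels and explicitly track how new bad indices enter at each level, which is exactly the obstacle you acknowledge is unresolved.
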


The proof of this theorem mainly relies on the fact that the Bhattacharyya parameter is close to $0$ if and only if the conditional entropy is close to $0$. More precisely, the proof relies on the following two well-known inequalities
in the polar coding literature (see Proposition 1 of \cite{Arikan09} for a proof):
\begin{align}
Z(X|Y) &\ge H(X|Y), \label{eq:je} \\
(1-H(X|Y))^2 &\le 1- (Z(X|Y))^2. \label{eq:os}
\end{align}
We switch from the conditional entropy in Theorem~\ref{thm:m2} to the Bhattacharyya parameter in Theorem~\ref{thm:strong2} and Theorem~\ref{thm:m3} for two reasons: First, the Bhattacharyya parameter $Z(X|Y)$ is an upper bound on the error probability of
the MAP decoder of $X$ given $Y$, i.e.\ (see \cite{Arikan09}),
\begin{equation}\label{eq:pez}
P_e(X|Y) \le Z(X|Y).
\end{equation}
This property makes it convenient for us to prove that the twin codes achieve capacity (Theorem~\ref{thm:cac}). Second, in the ``$+$" polar transform, the evolution of Bhattacharyya parameters follows a square law \eqref{eq:ks}. As a result, it is easier to obtain a better bound on the Bhattacharyya parameters than on the conditional entropy.

Once we prove Theorem~\ref{thm:strong2}, we  further use that there are $m!$ distinct increasing chains of sets for a given $m$.
Let us fix $m$ and list all the $m!$ distinct increasing chains of sets as follows:
\begin{align*}
\emptyset=A_0(1) \subseteq A_1(1) \subseteq A_2(1) \subseteq & \dots \subseteq A_m(1)=[m], \\
\emptyset=A_0(2) \subseteq A_1(2) \subseteq A_2(2) \subseteq & \dots \subseteq A_m(2)=[m], \\
\emptyset=A_0(3) \subseteq A_1(3) \subseteq A_2(3) \subseteq & \dots \subseteq A_m(3)=[m], \\
\vdots \hspace*{0.2in} \vdots \hspace*{0.2in} \vdots \hspace*{0.2in} & \vdots \\
\emptyset=A_0(m!) \subseteq A_1(m!) \subseteq A_2(m!) \subseteq & \dots \subseteq A_m(m!)=[m].
\end{align*}
In Theorem~\ref{thm:strong2}, we have shown that among each increasing chain of sets, almost all the bit-channels becomes either completely noisy or noiseless.
Let $\cA$ be the collection of all the ``bad" subsets of $[m]$, and let $\cS$ be the collection of all the ``bad" sets in the above $m!$ increasing chains (including multiplicity), where ``bad" means that the set does not belong to the left-hand side of \eqref{eq:ccj}.
Theorem~\ref{thm:strong2} tells us that the ``bad" sets in each chain is upper bounded by $m^{\gamma}$, so $|\cS|\le m^{\gamma} m!$. On the other hand, notice that each subset with cardinality $i$ appears $i!(m-i)!$ times in all the $m!$ increasing chains listed above, and that $i!(m-i)! \ge \lfloor m/2 \rfloor!(m-\lfloor m/2 \rfloor)!$ for all $i\in[m]$. Therefore, 
$|\cS|\ge \lfloor m/2 \rfloor!(m-\lfloor m/2 \rfloor)!|\cA|$. Combining the upper and lower bounds of $|\cS|$, we obtain an upper bound on $|\cA|$, and this proves Theorem~\ref{thm:m3}. Theorem~\ref{thm:cac} in turn follows directly from Theorem~\ref{thm:m3} (using  \eqref{eq:pez}).

In the next section, we fill out all the details and give full proofs of the main theorems. We first prove two technical lemmas that play essential roles in most of the proofs. More precisely, Lemma~\ref{lm:is} allows us to compare $H_A^{(m)}$ and $H_B^{(m)}$ when $|A|=|B|$, and its proof relies on the symmetry of RM codes. Lemma~\ref{lm:imp} formalizes \eqref{eq:jjw1}--\eqref{eq:jjw2} and also states their counterparts for Bhattacharyya parameters. Then the proofs of Theorem~\ref{thm:strong2} and all the theorems in Section~\ref{sect:mm} are given in the last section.

\section{Conclusion}
Recently, Hassani et al. gave theoretical results backing the conjecture that RM codes have an almost optimal scaling law over BSC channels under ML decoding \cite{Hassani18}, where optimal scaling law means that for a fixed linear code, the decoding error probability of ML decoder transitions from $0$ to $1$ as a function of the crossover probability of the BSC channel in the sharpest manner (i.e., comparable to random codes).
In this paper, we have demonstrated that RM codes polarize faster than polar codes --- see Fig.~\ref{fig:gp} --- even though we stated our bound in Theorem \ref{thm:m3} by exploiting the polar code bounds.   
A future research direction would thus be to use directly the fast polarization of RM codes to prove that RM codes have a better scaling law than polar codes and/or to prove that RM codes have an ``optimal'' scaling law.

Finally, this paper gives a second ordering of the matrix $G_m$ that polarizes, i.e., the RM coder ordering in addition to the polar code ordering (and the various other equivalent orderings that result from both of these). Are there many more\footnote{Clearly some ordering do not polarize, such as the down-to-top ordering in $G_m$.} orderings that polarize? Is the RM code ordering ``optimal''?

\section{Proofs}
\subsection{Two technical lemmas}
We first need to establish some symmetry properties of RM codes and their impact on the conditional mutual information. Denote by $S_m$ the symmetric group of order $m$. 
For $\pi\in S_m$ and $A\subseteq[m]$, define
$\pi(A) := \{ \pi(a):a\in A \}$.
Note that $S_m$ is contained in the automorphism group of RM codes, as any degree $\le k$ polynomial is a degree $\le k$ polynomial under a relabelling of its variables.

Let $A\subseteq[m]$, and let $\cB$ be a subset of the power set of $[m]$.
For any $\pi\in S_m$, i.e., any  relabelling of the elements of $[m]$, we have 
\begin{equation} \label{eq:sym1}
    H(U_A^{(m)}|Y^{(m)},\{U_B^{(m)}:B\in\cB\})
= H(U_{\pi(A)}^{(m)}|Y^{(m)},\{U_{\pi(B)}^{(m)}:B\in\cB\}).
\end{equation}
This equality leads to the following lemma.

\begin{lemma}\label{lm:is}
Let $W$ be a BMS channel.
Let $A\subset[m]$ and $i_1,i_2\in[m]$ satisfy that $i_1,i_2\notin A$ and $i_1<i_2$. Then
$$
H_{A\cup\{i_1\}}^{(m)}  \ge  H_{A\cup\{i_2\}}^{(m)}
\quad \text{~and~} \quad 
Z_{A\cup\{i_1\}}^{(m)}  \ge  Z_{A\cup\{i_2\}}^{(m)}.
$$
\end{lemma}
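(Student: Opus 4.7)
The plan is to exploit the symmetry relation \eqref{eq:sym1} of RM codes via the transposition $\pi\in S_m$ that swaps $i_1$ and $i_2$ and fixes everything else. Since $i_1,i_2\notin A$, we have $\pi(A)=A$ and hence $\pi(A\cup\{i_1\})=A\cup\{i_2\}$. Applying \eqref{eq:sym1} with the family $\cB=\{B\subseteq[m]:B<A\cup\{i_1\}\}$ yields
\[
H_{A\cup\{i_1\}}^{(m)} = H\bigl(U_{A\cup\{i_2\}}^{(m)}\bigm| Y^{(m)},\,\{U_{\pi(B)}^{(m)}:B<A\cup\{i_1\}\}\bigr),
\]
and the analogous identity for the Bhattacharyya parameter. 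Comparing with $H_{A\cup\{i_2\}}^{(m)}=H(U_{A\cup\{i_2\}}^{(m)}\mid Y^{(m)},U_{<A\cup\{i_2\}}^{(m)})$, both expressions have the same head variable and channel outputs, so the lemma reduces to the set containment
\[
\pi\bigl(\{B:B<A\cup\{i_1\}\}\bigr)\;\subseteq\;\{B:B<A\cup\{i_2\}\}
\]
combined with the monotonicity of $H$ and $Z$ under additional conditioning. Monotonicity is standard for $H$; for $Z$ I would invoke the Cauchy--Schwarz estimate $\sum_z\sqrt{P(y,z|0)P(y,z|1)}\le\sqrt{P(y|0)P(y|1)}$ and sum over $y$.

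To prove the set containment, I would first observe that within any fixed cardinality layer the total order of Definition~\ref{def:tto} coincides with colexicographic order, namely $A<B$ iff $\max(A\triangle B)\in B$. Colex order is invariant under adjoining (or deleting) a common element $i\notin A\cup B$ to both sides, since $(A\cup\{i\})\triangle(B\cup\{i\})=A\triangle B$. Using this, I would split on $B\cap\{i_1,i_2\}$: if this intersection is $\emptyset$ or $\{i_1,i_2\}$ then $\pi(B)=B$ and the chain $B<A\cup\{i_1\}<A\cup\{i_2\}$ suffices; if $B=B'\cup\{i_1\}$ with $i_2\notin B'$, removing the common $i_1$ gives $B'<A$ and re-adjoining the common $i_2$ yields $\pi(B)=B'\cup\{i_2\}<A\cup\{i_2\}$; finally if $B=B'\cup\{i_2\}$ with $i_1\notin B'$, then $\pi(B)=B'\cup\{i_1\}<B'\cup\{i_2\}=B$ (immediate from colex, as $\max\{i_1,i_2\}=i_2\in B$), which chains with $B<A\cup\{i_2\}$. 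The case $|B|>|A\cup\{i_1\}|$ is trivial by condition~1 of Definition~\ref{def:tto}.

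The main obstacle I anticipate is finding the right reformulation of the order within a layer; the colex description is what makes all four sub-cases uniform and avoids a delicate comparison of the decreasing sorted sequences promised by Definition~\ref{def:tto}. Once the conditional entropy statement is established, the Bhattacharyya statement follows in parallel from the same containment together with the Cauchy--Schwarz monotonicity noted above.
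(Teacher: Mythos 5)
Your proposal is correct and follows essentially the same route as the paper: the transposition $\pi$ swapping $i_1,i_2$, the symmetry identity \eqref{eq:sym1}, the reduction to the containment $\pi(\{B:B<A\cup\{i_1\}\})\subseteq\{B:B<A\cup\{i_2\}\}$, and monotonicity of $H$ and $Z$ under extra conditioning (the latter via Lemma~\ref{lm:ox}). The only superficial difference is bookkeeping in the containment argument: you split on $B\cap\{i_1,i_2\}$ using the colex description of the within-layer order, while the paper splits on whether $i_1\in B$; both verifications are sound.
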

\begin{proof}
Define $\pi\in S_m$ as
\begin{equation}\label{eq:pi}
\pi(i)=i \text{~for all~} i\neq i_1,i_2, \quad
\pi(i_1)=i_2, \quad \pi(i_2)=i_1.
\end{equation}
By \eqref{eq:sym1}, we have
\begin{align*}
H_{A\cup\{i_1\}}^{(m)} & = 
H \Big( U_{A\cup\{i_1\}}^{(m)} \Big| Y^{(m)},\{U_B^{(m)}:B\subseteq[m],B<(A\cup\{i_1\})\} \Big) \\
& = H \Big( U_{\pi(A\cup\{i_1\})}^{(m)} \Big| Y^{(m)},
\{U_{\pi(B)}^{(m)}:B\subseteq[m],B<(A\cup\{i_1\})\} \Big) \\
& = H \Big( U_{A\cup\{i_2\}}^{(m)} \Big| Y^{(m)},
\{U_{\pi(B)}^{(m)}:B\subseteq[m],B<(A\cup\{i_1\})\} \Big) \\
& \ge H \Big( U_{A\cup\{i_2\}}^{(m)} \Big| Y^{(m)},
\{U_B^{(m)}:B\subseteq[m],B<(A\cup\{i_2\})\} \Big) \\
& = H_{A\cup\{i_2\}}^{(m)},
\end{align*}
where the inequality follows from the fact that
\begin{equation}\label{eq:sct}
\{\pi(B): B\subseteq[m],B<(A\cup\{i_1\})\} \subseteq \{B: B\subseteq[m],B<(A\cup\{i_2\}) \}.
\end{equation}
Indeed, if $B<(A\cup\{i_1\})$ and $i_1\notin B$, then $\pi(B)\le B<(A\cup\{i_1\})<(A\cup\{i_2\})$.
If $B<(A\cup\{i_1\})$ and $i_1\in B$, then $(B\setminus\{i_1\})<A$, so 
$$
\pi(B)=\pi((B\setminus\{i_1\})\cup\{i_1\}) = \pi(B\setminus\{i_1\})\cup\{i_2\}\le (B\setminus\{i_1\})\cup\{i_2\}<A\cup\{i_2\}.
$$
Therefore we have shown that $B<(A\cup\{i_1\})$ implies $\pi(B)<(A\cup\{i_2\})$, which is exactly the set containment in \eqref{eq:sct}. This completes the proof of the lemma.
Using Lemma~\ref{lm:ox} and the same reasoning as above, one can easily show that $Z_{A\cup\{i_1\}}^{(m)}  \ge  Z_{A\cup\{i_2\}}^{(m)}$.
\end{proof}

\begin{lemma}\label{lm:imp}
For every BMS channel $W$, every positive integer $m$, every $A\subseteq[m]$ and every $j\in[m+1]\setminus A$, we have the interlacing property:
\begin{align}
H_{A\cup\{j\}}^{(m+1)} & \ge H_A^{(m)} \ge H_A^{(m+1)} ,  \label{eq:jw}  \\
Z_A^{(m+1)} & \le \left( Z_A^{(m)} \right)^2 , \quad \quad
 Z_A^{(m)} \le Z_{A\cup\{j\}}^{(m+1)}.
\label{eq:qqt}
\end{align}
Moreover, for any $\epsilon>0$, there is $\delta(\epsilon)>0$ such that
for any positive integer $m$, any $A\subseteq[m]$ and any $j\in[m+1]\setminus A$,
$$
H_A^{(m)} \in (\epsilon,1-\epsilon)
$$
implies that
\begin{equation} \label{eq:jf}
 H_A^{(m)} -  H_A^{(m+1)} > \delta(\epsilon)
\text{~and~} 
 H_{A\cup\{j\}}^{(m+1)} - H_A^{(m)} > \delta(\epsilon).
\end{equation}
\end{lemma}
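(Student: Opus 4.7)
\medskip

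\noindent\textbf{Proof proposal.} The plan is to follow the two-step scheme sketched in Section~\ref{sect:pst}: first handle the special index $j=m+1$, and then transport the inequalities to an arbitrary $j\in[m+1]\setminus A$ via the symmetry Lemma~\ref{lm:is}. For the symmetry step, the permutation $\pi\in S_{m+1}$ swapping $j$ and $m+1$ yields $H_{A\cup\{j\}}^{(m+1)}\ge H_{A\cup\{m+1\}}^{(m+1)}$ and $Z_{A\cup\{j\}}^{(m+1)}\ge Z_{A\cup\{m+1\}}^{(m+1)}$ (since $j<m+1$, we are in the situation of Lemma~\ref{lm:is}). Combined with $H_A^{(m)}\le H_{A\cup\{m+1\}}^{(m+1)}$ and $Z_A^{(m)}\le Z_{A\cup\{m+1\}}^{(m+1)}$, this gives the "interlacing from above" and the Bhattacharyya bound involving $A\cup\{j\}$. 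The inequalities $H_A^{(m)}\ge H_A^{(m+1)}$, $Z_A^{(m+1)}\le(Z_A^{(m)})^2$, and the $\delta(\epsilon)$-gap $H_A^{(m)}-H_A^{(m+1)}>\delta(\epsilon)$ involve only $A$ on both levels, so no symmetry argument is needed for them.

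\medskip

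For the core case $j=m+1$, I would use the $Y_{\odd}/Y_{\even}$ decomposition of \eqref{eq:oddeven} together with the observation that, conditionally on the appropriate linear combinations of the $U_B^{(m+1)}$, each half of $Y^{(m+1)}$ is distributed as $W^{(m)}$. This gives two independent copies of the channel $W_A^{(m)}$ on which we perform a Ar\i kan polar transform. Calling $\widetilde{W}_{A\cup\{m+1\}}^{(m+1)}$ and $\widetilde{W}_A^{(m+1)}$ the $"{-}"$ and $"+"$ polar transforms of $W_A^{(m)}$, inequalities \eqref{eq:kbd}, \eqref{eq:ks}, \eqref{eq:lsl} and Lemma~\ref{lm:sas} yield
\begin{align*}
\widetilde{H}_{A\cup\{m+1\}}^{(m+1)}\ge H_A^{(m)}\ge \widetilde{H}_A^{(m+1)},\qquad \widetilde{Z}_A^{(m+1)}=\bigl(Z_A^{(m)}\bigr)^2,\qquad \widetilde{Z}_{A\cup\{m+1\}}^{(m+1)}\ge Z_A^{(m)},
\end{align*}
together with the strict gap $\widetilde{H}_{A\cup\{m+1\}}^{(m+1)}-H_A^{(m)}>\delta(\epsilon)$ and $H_A^{(m)}-\widetilde{H}_A^{(m+1)}>\delta(\epsilon)$ whenever $H_A^{(m)}\in(\epsilon,1-\epsilon)$.

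\medskip

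The second key step is to compare the standard polar decoding order \eqref{eq:ljk} to the RM decoding order \eqref{eq:spo}. In the RM order, when decoding index $B_i$ the successive decoder has already seen \emph{all} sets $B_k\cup\{m+1\}$ for $k=1,\dots,j$, whereas in the standard polar order it has only seen $B_k\cup\{m+1\}$ for $k\le i$. Hence the RM-order conditioning strictly contains the standard-order conditioning for the "${+}$"-type channel, giving $H_A^{(m+1)}\le\widetilde{H}_A^{(m+1)}$ and $Z_A^{(m+1)}\le\widetilde{Z}_A^{(m+1)}=(Z_A^{(m)})^2$ by monotonicity of conditional entropy and of the Bhattacharyya parameter under extra conditioning. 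Symmetrically, for the "${-}$"-type channel $A\cup\{m+1\}$, the RM order conditions on strictly less than the standard order, yielding $H_{A\cup\{m+1\}}^{(m+1)}\ge\widetilde{H}_{A\cup\{m+1\}}^{(m+1)}$ and $Z_{A\cup\{m+1\}}^{(m+1)}\ge\widetilde{Z}_{A\cup\{m+1\}}^{(m+1)}$. Chaining these with the standard polar inequalities above gives \eqref{eq:jw}, \eqref{eq:qqt} and both $\delta$-gap statements in \eqref{eq:jf} for $j=m+1$, and the symmetry step extends them to arbitrary $j$.

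\medskip

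The main technical obstacle is the bookkeeping in the previous paragraph: one must verify that the past random variables of the RM-order decoder at step $B_i$ (resp.\ $B_i\cup\{m+1\}$) are a super-set (resp.\ subset) of those of the standard-polar-order decoder at the same index, and that the \emph{channel output} $(Y_{t,\odd}^{(m+1)},Y_{t,\even}^{(m+1)})$ available to both decoders is genuinely identical. This is essentially bijective bookkeeping of the linear combinations in \eqref{eq:toe}--\eqref{eq:lby}, but it is where any slip would propagate. Once the containments are established, conditional-entropy monotonicity and the analogous Bhattacharyya monotonicity $Z(X|Y,Y')\le Z(X|Y)$ finish the job, and the $\delta(\epsilon)$-gap is inherited verbatim from the standard polar gap given by Lemma~\ref{lm:sas}, since the fast-polar ordering only widens the spread.
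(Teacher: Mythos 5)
Your proposal is correct and follows essentially the same route as the paper: the $Y_{\odd}/Y_{\even}$ decomposition exhibits two i.i.d.\ copies of $W_A^{(m)}$, the Ar\i kan inequalities \eqref{eq:kbd}, \eqref{eq:ks}, \eqref{eq:lsl} and Lemma~\ref{lm:sas} handle the standard polar transform, the RM-order conditioning set is compared with the interleaved-order conditioning set (the paper encodes exactly this in the set containments \eqref{eq:if1}--\eqref{eq:if2}), and Lemma~\ref{lm:is} transports $j=m+1$ to general $j$. The only cosmetic difference is that the paper's formal proof skips the explicit $\widetilde{W}$ intermediaries and works directly with the conditional-entropy identity \eqref{eq:mtg} and Lemma~\ref{lm:ox}, but the underlying argument is identical to the one you describe.
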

\begin{proof}
Recall the definition of $Y_{\odd}^{(m+1)}$ and $Y_{\even}^{(m+1)}$ in \eqref{eq:oddeven}.
Let $y^{(m)}=(y_{\mathbi{z}}^{(m)}: \mathbi{z}\in\{0,1\}^m) \in \cY^n$ be a vector of length $n=2^m$ whose components take values in $\cY$, and this vector is indexed by $\mathbi{z}\in\{0,1\}^m$, which is similar to the random vector $Y^{(m)}$.
Let $u^{(m)}=(u_A^{(m)}: A\subseteq[m]) \in \{0,1\}^n$ be a binary vector of length $n=2^m$, and this vector is indexed by $A\subseteq[m]$, which is similar to the random vector $U^{(m)}$.
For $y^{(m)}\in \cY^n$, define the following three events:
\begin{align*}
\{Y^{(m)} = y^{(m)}\} & := \{Y_{\mathbi{z}}^{(m)} = y_{\mathbi{z}}^{(m)} \text{~for all~} 
\mathbi{z}\in\{0,1\}^m\},  \\
\{Y_{\odd}^{(m+1)} = y^{(m)}\} & := \{Y_{(\mathbi{z},1)}^{(m+1)} = y_{\mathbi{z}}^{(m)} \text{~for all~}  \mathbi{z}\in\{0,1\}^m\}, \\
\{Y_{\even}^{(m+1)} = y^{(m)}\} & := \{Y_{(\mathbi{z},0)}^{(m+1)} = y_{\mathbi{z}}^{(m)} \text{~for all~}  \mathbi{z}\in\{0,1\}^m\},
\end{align*}
where for $\mathbi{z}=(z_1,z_2,\dots,z_m)\in\{0,1\}^m$, 
$(\mathbi{z},1):=(z_1,z_2,\dots,z_m,1)$ and $(\mathbi{z},0):=(z_1,z_2,\dots,z_m,0)$.
The main observation is that for any $y^{(m)}\in \cY^n$ and any $u^{(m)}=(u_A^{(m)}: A\subseteq[m])  \in \{0,1\}^n$,
\begin{equation}\label{eq:ej}
\begin{aligned}
& P \Big( \{Y_{\odd}^{(m+1)}=y^{(m)}\} \Big| \{U_{A}^{(m+1)}+U_{A\cup\{m+1\}}^{(m+1)}=u_A^{(m)} \text{~for all~} A\subseteq[m]\} \Big) \\
= & P \Big( \{Y_{\even}^{(m+1)}=y^{(m)}\} \Big| \{U_{A}^{(m+1)}=u_A^{(m)} \text{~for all~} A\subseteq[m]\} \Big) \\
= & P \Big( \{Y^{(m)}=y^{(m)}\} \Big| \{U_{A}^{(m)}=u_A^{(m)} \text{~for all~} A\subseteq[m]\} \Big).
\end{aligned}
\end{equation} 
Since the two vectors $(U_{A}^{(m+1)}+U_{A\cup\{m+1\}}^{(m+1)}:A\subseteq[m])$
and $(U_{A}^{(m+1)}:A\subseteq[m])$ are independent,
$(Y_{\odd}^{(m+1)},\{U_{A}^{(m+1)}+U_{A\cup\{m+1\}}^{(m+1)}:A\subseteq[m]\})$ and 
$(Y_{\even}^{(m+1)},\{U_{A}^{(m+1)}:A\subseteq[m]\})$ are also independent.
By \eqref{eq:ej}, we can also obtain that
\begin{align}
H_A^{(m)} &= H \Big(U_A^{(m)} \Big| Y^{(m)}, U_{<A}^{(m)} \Big)
=H\Big(U_A^{(m+1)} \Big|Y_{\even}^{(m+1)},\{U_{A'}^{(m+1)}:A'\subseteq[m],A'<A\} \Big)
\nonumber\\
&=H \Big(U_{A}^{(m+1)}+U_{A\cup\{m+1\}}^{(m+1)} \Big|Y_{\odd}^{(m+1)},
\{U_{A'}^{(m+1)}+U_{A'\cup\{m+1\}}^{(m+1)}:A'\subseteq[m],A'<A\} \Big). \label{eq:pw}
\end{align}
Therefore, for any $A\subseteq[m]$,
\begin{equation}\label{eq:mtg}
\begin{aligned}
& H\Big(U_A^{(m+1)} \Big|U_{A\cup\{m+1\}}^{(m+1)}, Y^{(m+1)}, 
\{U_{A'}^{(m+1)}:A'\subseteq[m],A'<A\},
\{U_{A'\cup\{m+1\}}^{(m+1)}:A'\subseteq[m],A'<A\} \Big) \\
&+ H\Big(U_{A\cup\{m+1\}}^{(m+1)} \Big| Y^{(m+1)}, 
\{U_{A'}^{(m+1)}:A'\subseteq[m],A'<A\},
\{U_{A'\cup\{m+1\}}^{(m+1)}:A'\subseteq[m],A'<A\} \Big)\\
= & H\Big(U_A^{(m+1)},U_{A\cup\{m+1\}}^{(m+1)} \Big| Y^{(m+1)}, 
\{U_{A'}^{(m+1)}:A'\subseteq[m],A'<A\},
\{U_{A'\cup\{m+1\}}^{(m+1)}:A'\subseteq[m],A'<A\} \Big) \\
= & H\Big(U_A^{(m+1)},U_A^{(m+1)}+U_{A\cup\{m+1\}}^{(m+1)} \Big| Y_{\even}^{(m+1)}, 
\{U_{A'}^{(m+1)}:A'\subseteq[m],A'<A\}, \\
& \hspace*{2.1in} Y_{\odd}^{(m+1)},
\{U_{A'}^{(m+1)}+U_{A'\cup\{m+1\}}^{(m+1)}:A'\subseteq[m],A'<A\} \Big) \\
\overset{(a)}{=} & H\Big(U_A^{(m+1)} \Big| Y_{\even}^{(m+1)}, 
\{U_{A'}^{(m+1)}:A'\subseteq[m],A'<A\} \Big) \\
& + H\Big(U_A^{(m+1)}+U_{A\cup\{m+1\}}^{(m+1)} \Big|  Y_{\odd}^{(m+1)},
\{U_{A'}^{(m+1)}+U_{A'\cup\{m+1\}}^{(m+1)}:A'\subseteq[m],A'<A\} \Big) \\
=& 2H_A^{(m)},
\end{aligned}
\end{equation}
where equality $(a)$ holds because $(Y_{\odd}^{(m+1)},\{U_{A}^{(m+1)}+U_{A\cup\{m+1\}}^{(m+1)}:A\subseteq[m]\})$ and 
$(Y_{\even}^{(m+1)},\{U_{A}^{(m+1)}:A\subseteq[m]\})$ are independent.
It is also clear that
\begin{align*}
& H\Big(U_A^{(m+1)} \Big|U_{A\cup\{m+1\}}^{(m+1)}, Y^{(m+1)}, 
\{U_{A'}^{(m+1)}:A'\subseteq[m],A'<A\},
\{U_{A'\cup\{m+1\}}^{(m+1)}:A'\subseteq[m],A'<A\} \Big) \\
\le & H\Big(U_A^{(m+1)} \Big| Y_{\even}^{(m+1)}, 
\{U_{A'}^{(m+1)}:A'\subseteq[m],A'<A\} \Big) = H_A^{(m)},
\end{align*}
so we have
\begin{equation}\label{eq:yf}
\begin{aligned}
 H\Big(U_A^{(m+1)} \Big|U_{A\cup\{m+1\}}^{(m+1)}, Y^{(m+1)}, 
\{U_{A'}^{(m+1)}:A'\subseteq[m],A'<A\},
\{U_{A'\cup\{m+1\}}^{(m+1)}:A'\subseteq[m],A'<A\} \Big) \\
\le   H_A^{(m)}
\le  H\Big(U_{A\cup\{m+1\}}^{(m+1)} \Big| Y^{(m+1)}, 
\{U_{A'}^{(m+1)}:A'\subseteq[m],A'<A\},
\{U_{A'\cup\{m+1\}}^{(m+1)}:A'\subseteq[m],A'<A\} \Big).
\end{aligned}
\end{equation}
According to the ordering of sets defined in the previous section, it is easy to verify the following relation:
\begin{align}
& \Big(\{A\cup\{m+1\}\} \cup\{A':A'\subseteq[m],A'<A\}\cup
\{A'\cup\{m+1\}:A'\subseteq[m],A'<A\} \Big) \nonumber \\
 \subseteq & \{A':A'\subseteq[m+1],A'<A\}; \label{eq:if1} \\
& \{A':A'\subseteq[m+1],A'<(A\cup\{m+1\} ) \} \nonumber \\
\subseteq & \{A':A'\subseteq[m],A'<A\} \cup
\{A'\cup\{m+1\}:A'\subseteq[m],A'<A\}. \label{eq:if2}
\end{align}
As a consequence,
\begin{equation}\label{eq:pd}
\begin{aligned}
& H_{A}^{(m+1)} \\
& \le  H\Big(U_A^{(m+1)} \Big|U_{A\cup\{m+1\}}^{(m+1)}, Y^{(m+1)}, 
\{U_{A'}^{(m+1)}:A'\subseteq[m],A'<A\},
\{U_{A'\cup\{m+1\}}^{(m+1)}:A'\subseteq[m],A'<A\} \Big), \\
& H_{A\cup\{m+1\}}^{(m+1)} \\
& \ge
H\Big(U_{A\cup\{m+1\}}^{(m+1)} \Big| Y^{(m+1)}, 
\{U_{A'}^{(m+1)}:A'\subseteq[m],A'<A\},
\{U_{A'\cup\{m+1\}}^{(m+1)}:A'\subseteq[m],A'<A\} \Big).
\end{aligned}
\end{equation}
Combining these two inequalities with \eqref{eq:yf},
we have $H_{A\cup\{m+1\}}^{(m+1)} \ge H_A^{(m)} \ge H_A^{(m+1)}$. Then by Lemma~\ref{lm:is}, for any $j\in[m+1]\setminus A$, we have $H_{A\cup\{j\}}^{(m+1)} \ge H_{A\cup\{m+1\}}^{(m+1)} \ge H_A^{(m)} \ge H_A^{(m+1)}$.
This completes the proof of \eqref{eq:jw}. 

Next we prove \eqref{eq:qqt}.
Let 
\begin{align*}
& X_1:=U_A^{(m+1)} , \quad \quad
Y_1:= \Big(Y_{\even}^{(m+1)}, 
\{U_{A'}^{(m+1)}:A'\subseteq[m],A'<A\} \Big), \\
& X_2:= U_A^{(m+1)}+U_{A\cup\{m+1\}}^{(m+1)} , \quad\quad
Y_2:=  \Big( Y_{\odd}^{(m+1)},
\{U_{A'}^{(m+1)}+U_{A'\cup\{m+1\}}^{(m+1)}:A'\subseteq[m],A'<A\} \Big), \\
& X:=U_A^{(m)} , \quad \quad
Y:= \Big(Y^{(m)}, 
\{U_{A'}^{(m)}:A'\subseteq[m],A'<A\} \Big).
\end{align*}
Then $(X_1,Y_1)$ and $(X_2,Y_2)$ are i.i.d., and they have the same distribution as $(X,Y)$.
By \eqref{eq:ks} we have
\begin{align*}
& Z \left( U_A^{(m+1)} \Big| Y^{(m+1)},\{U_{A'}^{(m+1)}:A'\subseteq[m],A'<A\},
\{U_{A'\cup\{m+1\}}^{(m+1)}:A'\subseteq[m],A'<A\},U_{A\cup\{m+1\}}^{(m+1)} \right) \\
& = Z(X_1|Y_1,Y_2,X_1+X_2)
=(Z(X|Y))^2 = \left( Z_A^{(m)} \right)^2.
\end{align*}
According to \eqref{eq:if1} and Lemma~\ref{lm:ox},
\begin{equation} \label{eq:atl}
\begin{aligned}
& Z_A^{(m+1)} = Z(U_A^{(m+1)}|Y^{(m+1)},U_{<A}^{(m+1)}) \\
& \le Z(U_A^{(m+1)}| Y^{(m+1)},\{U_{A'}^{(m+1)}:A'\subseteq[m],A'<A\},
\{U_{A'\cup\{m+1\}}^{(m+1)}:A'\subseteq[m],A'<A\},U_{A\cup\{m+1\}}^{(m+1)}) \\
& = \left( Z_A^{(m)} \right)^2.
\end{aligned}
\end{equation}
By \eqref{eq:lsl}, we have
\begin{align*}
& Z \left( U_{A\cup\{m+1\}}^{(m+1)} \Big| Y^{(m+1)},\{U_{A'}^{(m+1)}:A'\subseteq[m],A'<A\},
\{U_{A'\cup\{m+1\}}^{(m+1)}:A'\subseteq[m],A'<A\} \right) \\
= & Z(X_1+X_2|Y_1,Y_2) \ge Z(X|Y) = Z_A^{(m)}.
\end{align*}
Combining this with \eqref{eq:if2} and Lemma~\ref{lm:ox}, we obtain that
\begin{align*}
& Z_{A\cup\{m+1\}}^{(m+1)} = Z(U_{A\cup\{m+1\}}^{(m+1)}|Y^{(m+1)},U_{<(A\cup\{m+1\})}^{(m+1)}) \\
& \ge Z \left( U_{A\cup\{m+1\}}^{(m+1)} \Big| Y^{(m+1)},\{U_{A'}^{(m+1)}:A'\subseteq[m],A'<A\},
\{U_{A'\cup\{m+1\}}^{(m+1)}:A'\subseteq[m],A'<A\} \right) \\
& \ge Z_A^{(m)}.
\end{align*}
By Lemma~\ref{lm:is}, for any $j\in[m+1]\setminus A$, we further have that $Z_{A\cup\{j\}}^{(m+1)} \ge Z_{A\cup\{m+1\}}^{(m+1)} \ge Z_A^{(m)}$.
Combining this with \eqref{eq:atl}, we complete the proof of \eqref{eq:qqt}.

Now we prove \eqref{eq:jf}. 
For every $\epsilon>0$, we use the same $\delta(\epsilon)>0$ as in Lemma~\ref{lm:sas}.
We assume  
$H_A^{(m)} \in (\epsilon,1-\epsilon)$ and use Lemma~\ref{lm:sas} to prove \eqref{eq:jf} under this assumption.
Since $(X_1,Y_1)$ and $(X_2,Y_2)$ are i.i.d. with the same distribution as $(X,Y)$, we have
$$
H(X_1|Y_1)=H(X_2|Y_2)=H(X|Y)=H_A^{(m)} \in (\epsilon,1-\epsilon).
$$
According to Lemma~\ref{lm:sas},
\begin{equation}\label{eq:fw}
H(X_1+X_2|Y_1,Y_2) \ge H_A^{(m)} + \delta(\epsilon).
\end{equation}
Also observe that 
\begin{align*}
& H(X_1+X_2|Y_1,Y_2)  \\
=& H\Big(U_{A\cup\{m+1\}}^{(m+1)} \Big| Y^{(m+1)}, 
\{U_{A'}^{(m+1)}:A'\subseteq[m],A'<A\},
\{U_{A'\cup\{m+1\}}^{(m+1)}:A'\subseteq[m],A'<A\} \Big).
\end{align*}
Therefore by \eqref{eq:mtg} and \eqref{eq:fw}, we have
\begin{align*}
& H\Big(U_{A\cup\{m+1\}}^{(m+1)} \Big| Y^{(m+1)}, 
\{U_{A'}^{(m+1)}:A'\subseteq[m],A'<A\},
\{U_{A'\cup\{m+1\}}^{(m+1)}:A'\subseteq[m],A'<A\} \Big) - H_A^{(m)} \\
= & H_A^{(m)}- H\Big(U_A^{(m+1)} \Big|U_{A\cup\{m+1\}}^{(m+1)}, Y^{(m+1)}, 
\{U_{A'}^{(m+1)}:A'\subseteq[m],A'<A\},
\{U_{A'\cup\{m+1\}}^{(m+1)}:A'\subseteq[m],A'<A\} \Big) \\
\ge & \delta(\epsilon).
\end{align*}
Combining this with \eqref{eq:pd} and Lemma~\ref{lm:is}, we conclude that for any $j\in[m+1]\setminus A$,
$$
 H_A^{(m)} -  H_A^{(m+1)} > \delta(\epsilon)
\text{~and~} 
 H_{A\cup\{j\}}^{(m+1)} - H_A^{(m)} \ge H_{A\cup\{m+1\}}^{(m+1)} - H_A^{(m)} > \delta(\epsilon).
$$
This completes the proof of the lemma.
\end{proof}

\begin{lemma}[Lemma 1.8 in \cite{Korada09}] \label{lm:ox}
Let $(X,Y,Y')$ be a triple of discrete random variables, where $X$ has Bernoulli-$1/2$ distribution. Then
$$
Z(X|Y,Y') \le Z(X|Y).
$$
\end{lemma}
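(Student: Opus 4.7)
The plan is to unpack the definition of the Bhattacharyya parameter and reduce the claim to a one-line application of the Cauchy--Schwarz inequality applied to the inner sum over $Y'$. First, using that $X$ is Bernoulli-$1/2$, we have $P_{Y\mid X}(y\mid x) = 2 P_{X,Y}(x,y)$, so
$$
Z(X\mid Y) \;=\; \sum_{y}\sqrt{P_{Y\mid X}(y\mid 0)\,P_{Y\mid X}(y\mid 1)} \;=\; 2\sum_{y}\sqrt{P_{X,Y}(0,y)\,P_{X,Y}(1,y)},
$$
and by the same token
$$
Z(X\mid Y,Y') \;=\; 2\sum_{y,y'}\sqrt{P_{X,Y,Y'}(0,y,y')\,P_{X,Y,Y'}(1,y,y')}.
$$

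Next I would factor the joint distribution as $P_{X,Y,Y'}(x,y,y') = P_{X,Y}(x,y)\,P_{Y'\mid X,Y}(y'\mid x,y)$. Substituting this into the expression for $Z(X\mid Y,Y')$ and pulling out the square root in $y$, we obtain
$$
Z(X\mid Y,Y') \;=\; 2\sum_{y}\sqrt{P_{X,Y}(0,y)\,P_{X,Y}(1,y)}\;\sum_{y'}\sqrt{P_{Y'\mid X,Y}(y'\mid 0,y)\,P_{Y'\mid X,Y}(y'\mid 1,y)}.
$$
This identity is the key structural step: it isolates the effect of the additional variable $Y'$ inside an inner sum that only depends on the conditional distribution of $Y'$ given $(X,Y)$.

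Finally, I would bound the inner sum by Cauchy--Schwarz:
$$
\sum_{y'}\sqrt{P_{Y'\mid X,Y}(y'\mid 0,y)\,P_{Y'\mid X,Y}(y'\mid 1,y)} \;\le\; \sqrt{\sum_{y'}P_{Y'\mid X,Y}(y'\mid 0,y)}\;\sqrt{\sum_{y'}P_{Y'\mid X,Y}(y'\mid 1,y)} \;=\; 1,
$$
the last equality being that each conditional distribution sums to $1$ over $y'$. Plugging this bound back yields $Z(X\mid Y,Y') \le 2\sum_{y}\sqrt{P_{X,Y}(0,y)\,P_{X,Y}(1,y)} = Z(X\mid Y)$, which is the desired inequality.

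There is essentially no obstacle: the only mild subtlety is handling $(x,y)$ for which $P_{X,Y}(x,y) = 0$, in which case $P_{Y'\mid X,Y}(\cdot\mid x,y)$ is undefined; however, such terms contribute $0$ on both sides under the convention that $0\cdot(\text{anything}) = 0$, so the argument goes through verbatim. The whole proof is two lines of algebra plus one invocation of Cauchy--Schwarz.
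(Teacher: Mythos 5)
Your proof is correct, and it is exactly the argument the paper has in mind: the paper does not spell out a proof, but simply remarks that the lemma ``can be proved by a straightforward application of the Cauchy--Schwarz inequality,'' which is precisely your factorization of $P_{X,Y,Y'}$ followed by Cauchy--Schwarz on the inner sum over $y'$.
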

This lemma can be proved by a straightforward application of the Cauchy-Schwarz inequality.

\subsection{Proof of Theorem~\ref{thm:strong2}}
Without loss of generality, assume that $\delta_n=n^{-d}$ for some positive constant $d$.
If $A\subseteq B\subseteq [m]$ and $|B|=|A|+1$, then by \eqref{eq:qqt},
\begin{equation}\label{eq:AB}
Z_{A}^{(m)} \le Z_{B}^{(m+1)} \le \left( Z_{B}^{(m)} \right)^2.
\end{equation}
For an increasing chain of sets $\emptyset=A_0 \subseteq A_1 \subseteq A_2 \subseteq \dots \subseteq A_m=[m]$, \eqref{eq:AB} implies that
$$
Z_{A_0}^{(m)} \le Z_{A_1}^{(m)} \le Z_{A_2}^{(m)} \dots \le Z_{A_m}^{(m)}.
$$
For a given $0<\epsilon<0.1$, define $i_1$ as the largest integer between $0$ and $m$ such that 
$H_{A_{i_1}}^{(m)}< \epsilon$, and define $i_2$ as the smallest integer between $0$ and $m$ such that 
$H_{A_{i_2}}^{(m)}> 1- \epsilon$. According to Theorem~\ref{thm:m1}, $H_{A_i}^{(m)}< \epsilon$ for all $i\le i_1$ and $H_{A_i}^{(m)}>1- \epsilon$ for all $i\ge i_2$.
By Theorem~\ref{thm:m2}, we know that 
\begin{equation}\label{eq:nn}
i_2-i_1-1\le D(\epsilon).
\end{equation}
Since $H_{A_{i_1}}^{(m)}<\epsilon<0.1$, by \eqref{eq:os} we obtain that 
\begin{equation}\label{eq:ffp}
Z_{A_{i_1}}^{(m)}< 1/2.
\end{equation}
According to \eqref{eq:AB},
$$
\log_2 (Z_{A_i}^{(m)}) \le 2 \log_2 (Z_{A_{i+1}}^{(m)}),
$$
and so 
$$
\log_2 (Z_{A_i}^{(m)}) \le 2^j \log_2 (Z_{A_{i+j}}^{(m)}).
$$
For a given $0<\gamma<1$, define $i_3:= \lfloor i_1-\frac{1}{2} m^\gamma \rfloor$. 
If $i_3\ge 0$, then
$$
\log_2 (Z_{A_{i_3}}^{(m)}) \le 2^{m^\gamma/2} \log_2 (Z_{A_{i_1}}^{(m)})\le - 2^{m^\gamma/2}
\le -dm,
$$
where the second inequality follows from \eqref{eq:ffp} and the last inequality holds when $m$ is large enough. Therefore, for all $i\le i_3$,
$$
Z_{A_i}^{(m)}\le Z_{A_{i_3}}^{(m)} \le 2^{-dm} = n^{-d}=\delta_n.
$$
Thus we have
\begin{align*}
\{0,1,\dots,i_3\} \subseteq \left\{ i\in\{0,1,\dots,m\}:Z_{A_i}^{(m)} <\delta_n \right\}, \\
\left\{ i\in\{0,1,\dots,m\}:H_{A_i}^{(m)} > 1-\epsilon \right\} = \{i_2,i_2+1,\dots,m\}.
\end{align*}
Combining this with \eqref{eq:nn}, we obtain that
\begin{align*}
\left|\left\{ i\in\{0,1,\dots,m\}:H_{A_i}^{(m)} > 1-\epsilon \right\} \cup
\left\{ i\in\{0,1,\dots,m\}:Z_{A_i}^{(m)} < \delta_n \right\} \right| \\
\ge i_3+1 + m-i_2+1 \ge i_1-\frac{1}{2} m^\gamma + m-i_2 +1 \ge m-\frac{1}{2} m^\gamma-D(\epsilon)
\ge m-m^\gamma,
\end{align*}
where the last inequality holds when $m$ is large enough.

On the other hand, if $i_3<0$, then $i_1<\frac{1}{2} m^\gamma$, and by \eqref{eq:nn}, 
$i_2<\frac{1}{2} m^\gamma+D(\epsilon)+1$. Therefore
\begin{align*}
& \left|\left\{ i\in\{0,1,\dots,m\}:H_{A_i}^{(m)} > 1-\epsilon \right\} \cup
\left\{ i\in\{0,1,\dots,m\}:Z_{A_i}^{(m)} < \delta_n \right\} \right| \\
& \ge  m-i_2+1  \ge m-\frac{1}{2} m^\gamma-D(\epsilon)
\ge m-m^\gamma.
\end{align*}
This completes the proof of Theorem~\ref{thm:strong2}.

\subsection{Proof of Theorem~\ref{thm:m3}}
We first observe that there is a one-to-one mapping between increasing chains of sets and permutations on $[m]$.
Indeed, given $\pi\in S_m$, we can obtain an increasing chain of sets $\emptyset=A_0 \subseteq A_1 \subseteq A_2 \subseteq \dots \subseteq A_m=[m]$ by setting $A_i=\{\pi(1),\pi(2),\dots,\pi(i)\}$ for all $i\in[m]$.
On the other hand, given an increasing chain of sets $\emptyset=A_0 \subseteq A_1 \subseteq A_2 \subseteq \dots \subseteq A_m=[m]$, we can obtain a permutation $\pi\in S_m$ by setting 
$\pi(i)=A_i \setminus A_{i-1}$ for all $i\in[m]$.
Thus there are $m!$ distinct increasing chains of sets for a given $m$.
Let us fix $m$ and list all the $m!$ distinct increasing chains of sets as follows:
\begin{align*}
\emptyset=A_0(1) \subseteq A_1(1) \subseteq A_2(1) \subseteq & \dots \subseteq A_m(1)=[m], \\
\emptyset=A_0(2) \subseteq A_1(2) \subseteq A_2(2) \subseteq & \dots \subseteq A_m(2)=[m], \\
\emptyset=A_0(3) \subseteq A_1(3) \subseteq A_2(3) \subseteq & \dots \subseteq A_m(3)=[m], \\
\vdots \hspace*{0.2in} \vdots \hspace*{0.2in} \vdots \hspace*{0.2in} & \vdots \\
\emptyset=A_0(m!) \subseteq A_1(m!) \subseteq A_2(m!) \subseteq & \dots \subseteq A_m(m!)=[m].
\end{align*}
Notice that for every $i\in\{0,1,2,\dots,m\}$, $|A_i(1)|=|A_i(2)|=|A_i(3)|=\dots=|A_i(m!)|=i$.
There are $\frac{m!}{i!(m-i)!}$ subsets of $[m]$ with cardinality $i$. By symmetry, each of them appears the same number of times in $(A_i(1),A_i(2),A_i(3),\dots,A_i(m!))$. Thus each subset with cardinality $i$ appears $i!(m-i)!$ times in $(A_i(1),A_i(2),A_i(3),\dots,A_i(m!))$. In other words, each subset $A\subseteq[m]$ appears
$|A|!(m-|A|)!$ times in $(A_i(j):i\in\{0,1,2,\dots,m\},j\in[m!])$.

For any $0<\epsilon<0.1$, define 
$$
\cS(\epsilon) := \left\{(i,j):i\in\{0,1,2,\dots,m\},j\in[m!], H_{A_i(j)}^{(m)} \le 1-\epsilon,
 Z_{A_i(j)}^{(m)} \ge \delta_n \right\}.
$$
Then by Theorem~\ref{thm:strong2}, we know that for any $0<\gamma<1/2$ and any given $j\in[m!]$,
$$
\left|\left\{ i\in\{0,1,\dots,m\}: H_{A_i(j)}^{(m)} \le 1-\epsilon,
 Z_{A_i(j)}^{(m)} \ge \delta_n 
\right\} \right| \le m^\gamma
\text{~~~for all~} m> M(\epsilon,\gamma).
$$
Consequently, for all $m> M(\epsilon,\gamma)$,
\begin{equation}\label{eq:sw}
|\cS(\epsilon)| \le (m!) m^\gamma.
\end{equation}
We further define
$$
\cA(\epsilon) := \left\{A\subseteq[m]: H_A^{(m)} \le 1-\epsilon,
Z_A^{(m)} \ge \delta_n \right\}.
$$
By the arguments above, we have
$$
|\cS(\epsilon)| = \sum_{A\subseteq \cA(\epsilon)} |A|!(m-|A|)!.
$$
It is easy to see that $i!(m-i)!\ge \lfloor m/2 \rfloor!(m-\lfloor m/2 \rfloor)!$ for all $i\in\{0,1,2,\dots,m\}$. Therefore
$$
|\cS(\epsilon)| \ge \lfloor m/2 \rfloor!(m-\lfloor m/2 \rfloor)! |\cA(\epsilon)|.
$$
Combining this with \eqref{eq:sw}, we obtain that
$$
|\cA(\epsilon)| \le \binom{m}{\lfloor m/2 \rfloor} m^\gamma.
$$
Consequently,
$$
\frac{|\cA(\epsilon)|}{2^m} \le m^\gamma \frac{\binom{m}{\lfloor m/2 \rfloor}}{2^m} .
$$
By Stirling's formula,
\begin{equation}\label{eq:stl}
\frac{\binom{m}{\lfloor m/2 \rfloor}}{2^m} = \sqrt{\frac{2}{\pi m}} (1+o_m(1)) .
\end{equation}
Since $\sqrt{2/\pi}<1$, we conclude that for all $m>M(\epsilon,\gamma)$,
$$
\frac{|\cA(\epsilon)|}{2^m}
 \le m^{\gamma-1/2}.
$$
This completes the proof of Theorem~\ref{thm:m3}.

\subsection{Proof of Theorem~\ref{thm:cac}}
We first show that the code rate of $\cT(m,\delta_n)$ approaches the channel capacity $I(W)$, i.e., 
$$
|\cG(m,\delta_n)| \ge 2^m (I(W)-o(1)).
$$
By \eqref{eq:kw}, for all $0<\epsilon<1$, we have
$$
(1-\epsilon) \left| \left\{A\subseteq[m]:  H_A^{(m)} > 1-\epsilon \right\} \right| <
\sum_{A\subseteq[m]} H_A^{(m)} = 2^m (1-I(W)).
$$
Therefore,
$$
\left| \left\{A\subseteq[m]:  H_A^{(m)} > 1-\epsilon \right\} \right| < \frac{1}{1-\epsilon}2^m (1-I(W)).
$$
According to Theorem~\ref{thm:m3}, for $0<\epsilon<0.1$, $0<\gamma<1/2$ and $m>M(\epsilon,\gamma)$,
\begin{align*}
\left| \cG(m,\delta_n) \right|  \ge &
\left| \left\{A\subseteq[m]:  H_A^{(m)} > 1-\epsilon \right\}
\cup \left\{A\subseteq[m]:  Z_A^{(m)} < \delta_n \right\}\right| \\
& - \left| \left\{A\subseteq[m]:  H_A^{(m)} > 1-\epsilon \right\}  \right| \\
\ge & 2^m(1-m^{\gamma-1/2}) - \frac{1}{1-\epsilon}2^m (1-I(W)).
\end{align*}
The last line can be made arbitrarily close to $2^m I(W)$ if we set $\epsilon$ to be small enough and $m$ to be large enough. Thus the code rate of $\cT(m,\delta_n)$ approaches $I(W)$.

Next we prove that the decoding error of $\cT(m,\delta_n)$ goes to $0$ under the successive decoder that is similar to the one used for polar codes, i.e., we decode $U_A^{(m)}$ one by one using the channel outputs $Y^{(m)}$ and the previously decoded inputs $U_{<A}^{(m)}$.
The decoding order is from small to large sets according to the order defined in Section~\ref{sect:intro}, i.e., we decode $U_A$ before decoding $U_B$ if $A<B$.
According to \eqref{eq:pez},
for every $A\in \cG(m,\delta_n)$, the error probability of decoding $U_A^{(m)}$ from $Y^{(m)}$ and $U_{<A}^{(m)}$ is at most
$$
P_e(U_A^{(m)}|Y^{(m)},U_{<A}^{(m)})
\le Z(U_A^{(m)}|Y^{(m)},U_{<A}^{(m)}) = Z_A^{(m)} <n^{-2}.
$$
By the union bound, the error probability of decoding the whole codeword under successive decoder is at most $n^{-1}$. Thus we conclude that the code $\cT(m,\delta_n)$ achieves the capacity of $W$.

\subsection{Proof of Theorem \ref{order2}}
Let $A \prec B$. Define $A':= \{a_1,\dots,a_{|B|}\}$ and note that by assumption, $A'$ is pointwise smaller than $B$.

We first apply \eqref{eq:jw} repeatedly to obtain    
\begin{align}
H_A^{(m)} \ge H_{\{a_1,\dots,a_{|A|-1}\}}^{(m)} \ge H_{\{a_1,\dots,a_{|A|-2}\}}^{(m)} \ge \dots \ge H_{A'}^{(m)}.
\end{align}
We then apply Lemma \ref{lm:is} repeatedly to obtain
\begin{align}
H_{A'}^{(m)} = H_{\{a_1,\dots,a_{|B|-1},a_{|B|}\}}^{(m)} \ge H_{\{a_1,\dots,a_{|B|-1},b_{|B|}\}}^{(m)} \ge H_{\{a_1,\dots,b_{|B|-1},b_{|B|}\}}^{(m)} \ge \dots \ge H_{B}^{(m)}.
\end{align}
Therefore $H_A^{(m)} \ge H_{A'}^{(m)} \ge H_{B}^{(m)}$.

\bibliographystyle{IEEEtran}
\bibliography{RM}

\section{Additional results}

In this section, we present three additional results. First, we provide a sufficient condition for RM codes to achieve capacity (called the gap property); second, we show the equivalence between source and channel coding using RM codes; third, we show that the twin code is indeed the same as the RM codes up to $n=16$ for the BSC.
In order to state and prove the first result, we need some more notation.
Note that according to our total order on the subsets $A\subseteq[m]$, for all $A$ with cardinality $|A|=i$, we have
$$
\{1,2,\dots,i\} \le A \le \{m-i+1,m-i+2,\dots,m\}. 
$$
The following corollary follows from Theorem~\ref{order2}.
\begin{corollary}\label{cr:od}
For every BMS channel $W$ and every $A\subseteq[m]$ with cardinality $|A|=i$,
\begin{align}
H_{\{1,2,\dots,i\}}^{(m)} \ge H_A^{(m)} \ge H_{\{m-i+1,m-i+2,\dots,m\}}^{(m)}. \label{eq:too} \\
Z_{\{1,2,\dots,i\}}^{(m)} \ge Z_A^{(m)} \ge Z_{\{m-i+1,m-i+2,\dots,m\}}^{(m)}. \label{eq:kml}
\end{align}
\end{corollary}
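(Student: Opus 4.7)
The plan is to reduce the corollary to Theorem~\ref{order2} (for $H$) together with a Bhattacharyya analog of it (for $Z$). Write $A = \{a_1 < a_2 < \dots < a_i\} \subseteq [m]$, and set $L := \{1, 2, \ldots, i\}$ and $U := \{m-i+1, \ldots, m\}$. The elementary observation $j \le a_j \le m - i + j$ for every $j \in [i]$ shows that $L$ is pointwise $\le A$ and $A$ is pointwise $\le U$. Since $|L| = |A| = |U| = i$, the definition of $\prec$ immediately gives $L \prec A \prec U$, and Theorem~\ref{order2} delivers \eqref{eq:too}.

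For \eqref{eq:kml}, the first step is to establish the Bhattacharyya analog of Theorem~\ref{order2}: if $A \prec B$, then $Z_A^{(m)} \ge Z_B^{(m)}$. The argument copies the proof of Theorem~\ref{order2} verbatim after two substitutions. In place of \eqref{eq:jw}, we use the $Z$-interlacing
$$
Z_A^{(m+1)} \le Z_A^{(m)} \le Z_{A\cup\{j\}}^{(m+1)},
$$
whose right inequality is the second part of \eqref{eq:qqt}, and whose left inequality is obtained from the first part of \eqref{eq:qqt} combined with $Z_A^{(m)} \in [0,1]$, so that $(Z_A^{(m)})^2 \le Z_A^{(m)}$. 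In place of the $H$ clause of Lemma~\ref{lm:is}, we use the $Z$ clause, which is also already stated in Lemma~\ref{lm:is}. Chaining these two tools exactly as in the proof of Theorem~\ref{order2} --- first reducing $A$ to the set $A' := \{a_1, \dots, a_{|B|}\}$ of cardinality $|B|$ by peeling off the largest element and invoking the interlacing across consecutive levels, then swapping indices one at a time to reach $B$ --- yields $Z_A^{(m)} \ge Z_{A'}^{(m)} \ge Z_B^{(m)}$. Applying this $Z$ analog to the chain $L \prec A \prec U$ then produces \eqref{eq:kml}.

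The only subtle point is the level-shift used to compare $Z_A^{(m)}$ with $Z_{A \setminus \{a_{|A|}\}}^{(m)}$ within a single level $m$: if $a_{|A|} < m$ one first uses the $Z$ clause of Lemma~\ref{lm:is} to swap $a_{|A|}$ with $m$, then applies \eqref{eq:qqt} to step down to level $m-1$ and back up to level $m$. This is routine bookkeeping, and the analogous manoeuvre is already implicit in the proof of Theorem~\ref{order2} for $H$, so no genuine obstacle arises.
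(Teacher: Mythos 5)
Your proof is correct and matches the paper's intended route: reduce both chains of inequalities to the partial-order monotonicity of Theorem~\ref{order2} by observing that $j\le a_j\le m-i+j$ gives $\{1,\dots,i\}\prec A\prec\{m-i+1,\dots,m\}$. The paper's one-line justification leaves the Bhattacharyya analog of Theorem~\ref{order2} implicit, and you correctly supply it from \eqref{eq:qqt} and the $Z$ clause of Lemma~\ref{lm:is}; the one slight overkill is the final swap with $m$, which is unnecessary because \eqref{eq:qqt} already allows any $j\in[m+1]\setminus A$, so the within-level step follows directly as $Z_{A'}^{(m)}\le Z_{A'\cup\{a_{|A|}\}}^{(m+1)}=Z_A^{(m+1)}\le(Z_A^{(m)})^2\le Z_A^{(m)}$ (this is precisely \eqref{eq:AB} in the paper).
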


For a fixed $m$ and $i\in[m]$, define
\begin{align}
H_{i,\max}^{(m)} & := H_{\{1,2,\dots,i\}}^{(m)}, \nonumber \\
H_{i,\min}^{(m)} & := H_{\{m-i+1,m-i+2,\dots,m\}}^{(m)}, \nonumber \\
H_{i,\avg}^{(m)} & := \frac{1}{\binom{m}{i}} \sum_{A\subseteq[m],|A|=i} H_A^{(m)}.
\label{eq:oo}
\end{align}
We further define 
$$
H_{0,\max}^{(m)} = H_{0,\min}^{(m)} = H_{0,\avg}^{(m)}  = H_{\emptyset}^{(m)}.
$$
By \eqref{eq:too}, for all $i\in\{0,1,2,\dots,m\}$,
$$
H_{i,\max}^{(m)} \ge H_{i,\avg}^{(m)} \ge H_{i,\min}^{(m)}.
$$

Applying Theorem~\ref{thm:m1} to the increasing chains of sets
$$
\emptyset \subseteq \{1\} \subseteq \{1,2\} \subseteq \{1,2,3\} \subseteq \dots\subseteq \{1,2,3,\dots,m\}
$$
and 
$$
\emptyset \subseteq \{m\} \subseteq \{m-1,m\} \subseteq \{m-2,m-1,m\} \subseteq \dots \subseteq
\{1,2,3,\dots,m\}
$$
respectively, we obtain the following corollary.
\begin{corollary}\label{cr:dh1}
For every BMS channel $W$ and every $m>0$, both sequences $\{H_{i,\max}^{(m)}\}_{i=0}^m$ and $\{H_{i,\min}^{(m)}\}_{i=0}^m$  increase with $i$, i.e.,
\begin{align}
H_{0,\max}^{(m)} \le H_{1,\max}^{(m)} \le H_{2,\max}^{(m)} \le \dots \le H_{m,\max}^{(m)}, \label{eq:mr3} \\
H_{0,\min}^{(m)} \le H_{1,\min}^{(m)} \le H_{2,\min}^{(m)} \le \dots \le H_{m,\min}^{(m)}, \label{eq:mr4} 
\end{align}
\end{corollary}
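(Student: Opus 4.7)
The plan is to apply Theorem~\ref{thm:m1} directly to two specifically chosen increasing chains of sets; both inequalities \eqref{eq:mr3} and \eqref{eq:mr4} are then immediate. Recall that Theorem~\ref{thm:m1} states that along any increasing chain $\emptyset = A_0 \subseteq A_1 \subseteq \dots \subseteq A_m = [m]$ with $|A_i|=i$, the conditional entropies $H_{A_i}^{(m)}$ are nondecreasing in $i$. Since the definitions of $H_{i,\max}^{(m)}$ and $H_{i,\min}^{(m)}$ pick out particular subsets at each cardinality level, the strategy is simply to verify that these subsets fit into valid increasing chains.

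For \eqref{eq:mr3}, I would set $A_i := \{1,2,\dots,i\}$ for $i = 0, 1, \dots, m$, with $A_0 = \emptyset$ and $A_m = [m]$. This is clearly an increasing chain of sets in the sense of the definition preceding Theorem~\ref{thm:m1}, since $|A_i| = i$ and $A_i = A_{i-1} \cup \{i\}$. By the definition \eqref{eq:oo}, $H_{A_i}^{(m)} = H_{\{1,2,\dots,i\}}^{(m)} = H_{i,\max}^{(m)}$, and so Theorem~\ref{thm:m1} yields $H_{0,\max}^{(m)} \le H_{1,\max}^{(m)} \le \dots \le H_{m,\max}^{(m)}$, which is exactly \eqref{eq:mr3}.

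For \eqref{eq:mr4}, I would set $A_i := \{m-i+1, m-i+2, \dots, m\}$ for $i = 0, 1, \dots, m$, with $A_0 = \emptyset$ and $A_m = [m]$. This too is an increasing chain of sets, since $|A_i| = i$ and $A_i = A_{i-1} \cup \{m-i+1\}$. By the definition of $H_{i,\min}^{(m)}$, we have $H_{A_i}^{(m)} = H_{i,\min}^{(m)}$, so another application of Theorem~\ref{thm:m1} delivers \eqref{eq:mr4}.

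There is no real obstacle in this proof; the corollary is designed precisely to restate Theorem~\ref{thm:m1} specialized to these two extremal chains. The only point that requires a moment's verification is the set-chain condition itself, namely that each $A_i$ contains $A_{i-1}$ and has cardinality exactly $i$, which is transparent in both constructions. Thus the entire argument fits into a few lines.
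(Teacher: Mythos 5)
Your proof is correct and is essentially identical to the paper's: both simply apply Theorem~\ref{thm:m1} to the two increasing chains $\emptyset\subseteq\{1\}\subseteq\{1,2\}\subseteq\dots\subseteq[m]$ and $\emptyset\subseteq\{m\}\subseteq\{m-1,m\}\subseteq\dots\subseteq[m]$, which realize $H_{i,\max}^{(m)}$ and $H_{i,\min}^{(m)}$ at each level.
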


This corollary together with Corollary~\ref{cr:od} immediately implies the following claim.
\begin{corollary} \label{cr:dh2}
For every BMS channel $W$ and every $m>0$,
\begin{align}
H_{i,\max}^{(m)} \ge H_A^{(m)}, \text{~for all~} A\subseteq[m], |A|\le i; \label{eq:mr1} \\
H_{i,\min}^{(m)} \le H_A^{(m)}, \text{~for all~} A\subseteq[m], |A|\ge i. \label{eq:mr2}
\end{align}
\end{corollary}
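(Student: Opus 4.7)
The plan is to derive Corollary~\ref{cr:dh2} by chaining together the two previously established corollaries: Corollary~\ref{cr:od} controls $H_A^{(m)}$ by the ``extremal'' sets of the \emph{same} cardinality as $A$, while Corollary~\ref{cr:dh1} is a monotonicity statement that lets us move between cardinalities. Neither step requires going back to the underlying polarization or interlacing machinery; the work has already been done, and this corollary is essentially a two-line consequence.

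For the first inequality \eqref{eq:mr1}, I will fix $A \subseteq [m]$ with $|A| = j \le i$. Applying \eqref{eq:too} with cardinality $j$ yields
\begin{equation*}
H_A^{(m)} \le H_{\{1,2,\dots,j\}}^{(m)} = H_{j,\max}^{(m)}.
\end{equation*}
Then, since $j \le i$, the monotonicity in \eqref{eq:mr3} of Corollary~\ref{cr:dh1} gives $H_{j,\max}^{(m)} \le H_{i,\max}^{(m)}$. Chaining these two inequalities produces $H_A^{(m)} \le H_{i,\max}^{(m)}$, as desired.

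For the second inequality \eqref{eq:mr2}, I will fix $A \subseteq [m]$ with $|A| = j \ge i$. Applying \eqref{eq:too} in the other direction gives
\begin{equation*}
H_A^{(m)} \ge H_{\{m-j+1,m-j+2,\dots,m\}}^{(m)} = H_{j,\min}^{(m)},
\end{equation*}
and then \eqref{eq:mr4} of Corollary~\ref{cr:dh1} together with $j \ge i$ yields $H_{j,\min}^{(m)} \ge H_{i,\min}^{(m)}$. Chaining these gives $H_A^{(m)} \ge H_{i,\min}^{(m)}$, completing the proof.

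There is no real obstacle here: the statement is a pure bookkeeping combination, and the only thing to be mildly careful about is the direction of inequalities and making sure Corollary~\ref{cr:od} is applied with the correct cardinality (namely $|A|$, not $i$) before invoking the layer-to-layer monotonicity of Corollary~\ref{cr:dh1}.
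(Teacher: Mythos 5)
Your proof is correct and matches exactly what the paper intends: the paper simply states that Corollary~\ref{cr:dh2} follows immediately from Corollaries~\ref{cr:od} and~\ref{cr:dh1}, and the two-step chaining you carry out (first bound $H_A^{(m)}$ by the extremal set of cardinality $|A|$ via Corollary~\ref{cr:od}, then move to layer $i$ via the monotonicity in Corollary~\ref{cr:dh1}) is the unique natural way to realize that claim. The inequality directions are all handled correctly.
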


Finally, the following corollary follows immediately from Theorem~\ref{thm:m2}:
\begin{corollary}\label{cr:dh3}
For every BMS channel $W$ and every $\epsilon>0$, there is a constant $D(\epsilon)$ (which is independent of $m$ and $W$) such that for every positive integer $m$,
\begin{align*}
\left|\left\{ i\in\{0,1,\dots,m\}:\epsilon < H_{i,\max}^{(m)} < 1-\epsilon \right\} \right| < D(\epsilon), \\
\left|\left\{ i\in\{0,1,\dots,m\}:\epsilon < H_{i,\min}^{(m)} < 1-\epsilon \right\} \right| < D(\epsilon).
\end{align*}
\end{corollary}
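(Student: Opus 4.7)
The plan is to apply Theorem~\ref{thm:m2} to two specifically chosen increasing chains of sets, one for each of the two bounds, with no further ingredients needed.

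For the first bound, I would instantiate Theorem~\ref{thm:m2} on the chain
\begin{equation*}
\emptyset \subseteq \{1\} \subseteq \{1,2\} \subseteq \{1,2,3\} \subseteq \dots \subseteq \{1,2,\dots,m\},
\end{equation*}
which is a valid increasing chain since $|A_i|=i$ and $A_i\subseteq A_{i+1}$. By the definition of $H_{i,\max}^{(m)}$ in \eqref{eq:oo}, we have $H_{A_i}^{(m)}=H_{\{1,2,\dots,i\}}^{(m)}=H_{i,\max}^{(m)}$ for each $i\in\{0,1,\dots,m\}$ (with the convention $H_{A_0}^{(m)}=H_{\emptyset}^{(m)}=H_{0,\max}^{(m)}$). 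Therefore Theorem~\ref{thm:m2} applied to this chain reads
\begin{equation*}
\left|\left\{ i\in\{0,1,\dots,m\}:\epsilon < H_{i,\max}^{(m)} < 1-\epsilon \right\} \right| \le D(\epsilon),
\end{equation*}
which (after at most an additive $+1$ adjustment of the constant to go from $\le$ to $<$) gives the first inequality with a constant that depends only on $\epsilon$, not on $m$ or $W$.

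For the second bound, I would run the identical argument on the ``reverse'' chain
\begin{equation*}
\emptyset \subseteq \{m\} \subseteq \{m-1,m\} \subseteq \{m-2,m-1,m\} \subseteq \dots \subseteq \{1,2,\dots,m\},
\end{equation*}
which is again an increasing chain with $|A_i|=i$. Here $H_{A_i}^{(m)}=H_{\{m-i+1,\dots,m\}}^{(m)}=H_{i,\min}^{(m)}$ for all $i$, so Theorem~\ref{thm:m2} applied to this second chain yields the matching bound on the number of indices at which $H_{i,\min}^{(m)}$ lies in $(\epsilon,1-\epsilon)$, again with the same $D(\epsilon)$.

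There is essentially no obstacle here: the content of the corollary is exactly that each of $H_{i,\max}^{(m)}$ and $H_{i,\min}^{(m)}$ arises as the conditional entropy sequence along a single explicit increasing chain, and Theorem~\ref{thm:m2} already controls the number of ``unpolarized'' indices along \emph{any} such chain by a universal $D(\epsilon)$. The only mild subtlety is cosmetic, namely matching the strict versus non-strict inequality in the two statements, which is handled by replacing $D(\epsilon)$ with $D(\epsilon)+1$ if necessary.
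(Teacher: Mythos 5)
Your argument is exactly the one the paper has in mind: the paper introduces those same two increasing chains $\emptyset\subseteq\{1\}\subseteq\{1,2\}\subseteq\dots\subseteq[m]$ and $\emptyset\subseteq\{m\}\subseteq\{m-1,m\}\subseteq\dots\subseteq[m]$ just above (to derive Corollary~\ref{cr:dh1} from Theorem~\ref{thm:m1}) and then states that Corollary~\ref{cr:dh3} ``follows immediately from Theorem~\ref{thm:m2},'' which is precisely your instantiation of Theorem~\ref{thm:m2} on those two chains. The strict-versus-non-strict remark is harmless bookkeeping, so the proposal is correct and matches the paper's route.
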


In Section~\ref{sect:proofs}, we also prove the following corollary of Theorem~\ref{thm:m1} that states that for every BMS channel $W$ and every $m>0$, the sequence $\{H_{i,\avg}^{(m)}\}_{i=0}^m$ increases with $i$:
\begin{corollary}\label{cr:avg}
$$
H_{0,\avg}^{(m)} \le H_{1,\avg}^{(m)} \le H_{2,\avg}^{(m)} \le \dots \le H_{m,\avg}^{(m)}. 
$$
\end{corollary}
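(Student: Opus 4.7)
The plan is to deduce Corollary~\ref{cr:avg} from Theorem~\ref{thm:m1} via a double-counting argument over all increasing chains of sets. The key observation is that Theorem~\ref{thm:m1} gives the inequality $H_{A_i}^{(m)} \le H_{A_{i+1}}^{(m)}$ along \emph{every} increasing chain, and by summing these pointwise inequalities over all $m!$ chains simultaneously, each layer $\{A : |A|=i\}$ contributes uniformly to produce the layer average $H_{i,\avg}^{(m)}$.

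More concretely, I would fix $i \in \{0,1,\dots,m-1\}$ and index the $m!$ increasing chains by permutations $\pi \in S_m$ via $A_i(\pi) = \{\pi(1),\pi(2),\dots,\pi(i)\}$, as already done in the proof of Theorem~\ref{thm:m3}. By Theorem~\ref{thm:m1}, for every $\pi$,
\begin{equation*}
H_{A_i(\pi)}^{(m)} \le H_{A_{i+1}(\pi)}^{(m)}.
\end{equation*}
Summing over all $\pi \in S_m$ yields
\begin{equation*}
\sum_{\pi \in S_m} H_{A_i(\pi)}^{(m)} \le \sum_{\pi \in S_m} H_{A_{i+1}(\pi)}^{(m)}.
\end{equation*}
Each fixed subset $A \subseteq [m]$ of cardinality $i$ arises as $A_i(\pi)$ for exactly $i!\,(m-i)!$ permutations (choose the ordering of the first $i$ elements and of the remaining $m-i$ elements), and similarly each $B$ with $|B|=i+1$ arises as $A_{i+1}(\pi)$ for exactly $(i+1)!\,(m-i-1)!$ permutations. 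Thus
\begin{equation*}
i!\,(m-i)! \sum_{|A|=i} H_A^{(m)} \;\le\; (i+1)!\,(m-i-1)! \sum_{|B|=i+1} H_B^{(m)}.
\end{equation*}

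Dividing both sides by $m!$ and recalling that $\binom{m}{i} \cdot i!\,(m-i)! = m!$, the left-hand side equals $H_{i,\avg}^{(m)}$ and the right-hand side equals $H_{i+1,\avg}^{(m)}$ by definition \eqref{eq:oo}. This gives $H_{i,\avg}^{(m)} \le H_{i+1,\avg}^{(m)}$ and completes the proof after iterating over $i$. There is no substantive obstacle here once Theorem~\ref{thm:m1} is in hand; the only care needed is the combinatorial bookkeeping to verify that the uniform counting across chains matches exactly the uniform averaging inside each layer, which is precisely what the factor $i!\,(m-i)!$ provides.
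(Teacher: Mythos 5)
Your proof is correct and is essentially the same double-counting argument the paper uses: the paper sums the pointwise inequality $H_A^{(m)} \le H_B^{(m)}$ directly over pairs $(A,B)$ with $A\subset B$, $|A|=i$, $|B|=i+1$ (counting that each $A$ appears $m-i$ times on the left and each $B$ appears $i+1$ times on the right), while you sum over all $m!$ increasing chains; the two tallies differ only by the common factor $i!(m-i-1)!$, which is the number of chains passing through a fixed pair $(A,B)$, so the resulting inequality is identical.
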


\subsection{A sufficient condition for RM codes to achieve capacity}
By Corollary~\ref{cr:dh1} and \ref{cr:dh3} we know that both sequences 
$\{H_{i,\max}^{(m)}\}_{i=0}^m$ and $\{H_{i,\min}^{(m)}\}_{i=0}^m$ are increasing sequences with sharp transitions from $0$ to $1$.
Let $\theta_{\max}^{(m)}$ and $\theta_{\min}^{(m)}$ be the locations of transition for these two sequences, respectively.
More precisely, let
\begin{align*}
\theta_{\max}^{(m)} := \max \left( \left\{ i\in\{0,1,\dots,m\}:H_{i,\max}^{(m)}\le 1/2 \right\} \right), \\
\theta_{\min}^{(m)} := \max \left( \left\{ i\in\{0,1,\dots,m\}:H_{i,\min}^{(m)}\le 1/2 \right\} \right),
\end{align*}
i.e., $\theta_{\max}^{(m)}$ is the largest index $i$ such that $H_{i,\max}^{(m)}\le 1/2$ and $\theta_{\min}^{(m)}$ is the largest index $i$ such that $H_{i,\min}^{(m)}\le 1/2$.
Clearly, we always have $\theta_{\max}^{(m)} < \theta_{\min}^{(m)}$.
Figure~\ref{fig:ilt} below illustrates the sharp transitions of $\{H_{i,\max}^{(m)}\}_{i=0}^m$ and $\{H_{i,\min}^{(m)}\}_{i=0}^m$ and the definition of $\theta_{\max}^{(m)}$ and $\theta_{\min}^{(m)}$.

Our next theorem shows that if $\theta_{\max}^{(m)}$ and $\theta_{\min}^{(m)}$ are close enough for large $m$, then RM codes achieve capacity of $W$. 
\begin{theorem}\label{thm:m4}
Let $W$ be a BMS channel.
If
\begin{equation}\label{eq:qqq}
\theta_{\min}^{(m)}-\theta_{\max}^{(m)}=o(\sqrt{m}),
\end{equation}
then Reed-Muller codes achieve capacity of $W$.
\end{theorem}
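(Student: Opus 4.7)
The plan is to exhibit, for each sufficiently large $m$, an order $r=r(m)$ such that the Reed--Muller code $\mathcal R(m,r)$ has rate at least $I(W)-o(1)$ and vanishing error under the successive decoder in the RM ordering. The order $r(m)$ will sit just below $\theta_{\max}^{(m)}$; the gap hypothesis $\theta_{\min}^{(m)}-\theta_{\max}^{(m)}=o(\sqrt m)$ is precisely what ensures that so few rows are lost in this shift.

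First, I would fix $\epsilon\in(0,1/2)$ and $\gamma\in(0,1/2)$ and apply Theorem~\ref{thm:strong2} with $\delta_n=n^{-2}$ to the canonical chain $A_i=\{1,2,\dots,i\}$, which by Corollary~\ref{cr:od} carries the layerwise maxima $H_{A_i}^{(m)}=H_{i,\max}^{(m)}$ and $Z_{A_i}^{(m)}=\max_{|A|=i}Z_A^{(m)}$. Corollary~\ref{cr:dh1} gives that $i\mapsto H_{A_i}^{(m)}$ is nondecreasing, and \eqref{eq:AB} together with $Z\le 1$ gives the same for $i\mapsto Z_{A_i}^{(m)}$. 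Theorem~\ref{thm:strong2} then yields a threshold $r(m)$ with $\theta_{\max}^{(m)}-r(m)=O(m^\gamma)=o(\sqrt m)$ such that $Z_{A_{r(m)}}^{(m)}\le n^{-2}$, and hence (using Corollary~\ref{cr:od} once more) $Z_A^{(m)}\le n^{-2}$ for every $|A|\le r(m)$.

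Second, I would estimate the rate $R(m)=\sum_{i=0}^{r(m)}\binom{m}{i}/2^m$. Using Corollary~\ref{cr:dh2} and \eqref{eq:je}, every $A$ in the good set $G=\{A:Z_A^{(m)}<n^{-2}\}$ satisfies $|A|\le\theta_{\min}^{(m)}$, for otherwise $H_A^{(m)}>1/2$ would contradict $H_A^{(m)}\le Z_A^{(m)}<n^{-2}$. The gap hypothesis and Stirling \eqref{eq:stl} give
\[
\sum_{i=\theta_{\max}^{(m)}+1}^{\theta_{\min}^{(m)}}\binom{m}{i}\le(\theta_{\min}^{(m)}-\theta_{\max}^{(m)})\binom{m}{\lfloor m/2\rfloor}=o(\sqrt m)\cdot O(2^m/\sqrt m)=o(2^m),
\]
so at most $o(2^m)$ elements of $G$ have $|A|>\theta_{\max}^{(m)}$. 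On the other hand, Theorem~\ref{thm:m3} combined with the identity $\sum_A H_A^{(m)}=2^m(1-I(W))$ (the latter used to bound $|\{A:H_A^{(m)}>1-\epsilon\}|$ from above by $2^m(1-I(W))/(1-\epsilon)$) yields $|G|/2^m\ge I(W)-O(\epsilon)-o(1)$. Combining these, $|\{A:|A|\le\theta_{\max}^{(m)}\}|/2^m\ge I(W)-O(\epsilon)-o(1)$, and the further shift from $\theta_{\max}^{(m)}$ down to $r(m)$ costs only $O(m^\gamma)\binom{m}{\lfloor m/2\rfloor}/2^m=o(1)$, so $R(m)\ge I(W)-O(\epsilon)-o(1)$. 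Since $\epsilon$ is arbitrary, this shows that for every $\nu>0$, the code $\mathcal R(m,r(m))$ has rate at least $I(W)-\nu$ once $m$ is large enough.

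Third, the successive-decoder error is controlled by the union bound used in the proof of Theorem~\ref{thm:cac}: $\sum_{|A|\le r(m)}Z_A^{(m)}\le 2^m\cdot n^{-2}=n^{-1}\to 0$. The main technical tension is the choice of $\gamma$: pushing $Z_{A_{r(m)}}^{(m)}$ below $n^{-2}$ via Theorem~\ref{thm:strong2} forces $\theta_{\max}^{(m)}-r(m)=\Omega(m^\gamma)$, while the rate argument requires $\theta_{\max}^{(m)}-r(m)=o(\sqrt m)$; these two are simultaneously met by any fixed $\gamma<1/2$, which is exactly the regime opened up by the hypothesis $\theta_{\min}^{(m)}-\theta_{\max}^{(m)}=o(\sqrt m)$.
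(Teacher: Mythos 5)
Your proof is correct and takes essentially the same approach as the paper's: pick $r(m)$ just below $\theta_{\max}^{(m)}$ (shifted by $o(\sqrt m)$) so that all Bhattacharyya parameters in the first $r(m)+1$ layers are below $n^{-2}$, union-bound the successive-decoder error, and then count the rate lost in the transition window using Stirling together with the entropy-sum identity \eqref{eq:kw}. The only stylistic difference is that you obtain the shift and the rate lower bound by invoking Theorem~\ref{thm:strong2} and Theorem~\ref{thm:m3} as black boxes, whereas the paper works directly from Corollaries~\ref{cr:dh1}--\ref{cr:dh3} and the doubly exponential decay in \eqref{eq:AB}; both routes rest on the same ingredients and yield the same bound.
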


A consequence of Theorem~\ref{thm:m4} is recorded in the following corollary.

\begin{corollary}\label{cr:c4}
If there exist a constant $0\le\gamma<1$ and a nonnegative integer-valued function $\Delta:\mathbb{Z}\to\mathbb{Z}$ such that 
$$
\Delta(m)=o(\sqrt{m})
$$ 
and
\begin{equation}\label{eq:jj}
H_{i,\max}^{(m)} \le H_{i+\Delta(m),\min}^{(m)} + \gamma
\text{~~for all~} 0\le i\le m-\Delta(m) \text{~and all~} m>0,
\end{equation}
then Reed-Muller codes achieve capacity of $W$.
\end{corollary}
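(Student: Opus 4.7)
The plan is to reduce the corollary to Theorem~\ref{thm:m4} by showing that the hypothesis forces $\theta_{\min}^{(m)} - \theta_{\max}^{(m)} = o(\sqrt{m})$. First I would fix $\epsilon \in (0,(1-\gamma)/2)$ so that $\gamma + \epsilon < 1-\epsilon$, and let $D := D(\epsilon)$ be the constant provided by Corollary~\ref{cr:dh3}. This initial choice of $\epsilon$ is the one place where the strict inequality $\gamma < 1$ is used: it ensures that an additive gap of $\gamma$ in the hypothesis still leaves room for a polarization-style dichotomy at threshold $\epsilon$ versus $1-\epsilon$.

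By Corollary~\ref{cr:dh1} both sequences $\{H_{i,\max}^{(m)}\}_{i=0}^{m}$ and $\{H_{i,\min}^{(m)}\}_{i=0}^{m}$ are nondecreasing in $i$, and by Corollary~\ref{cr:dh3} each has at most $D$ indices $i$ for which the corresponding entropy lies in $(\epsilon,1-\epsilon)$. Monotonicity then implies each sequence jumps from the regime $\le \epsilon$ to the regime $\ge 1-\epsilon$ inside a single interval of width at most $D$. Let me name two endpoints of these transition intervals: $L_{\max}^{(m)}$, the largest $i$ with $H_{i,\max}^{(m)} < 1-\epsilon$, and $K_{\min}^{(m)}$, the smallest $i$ with $H_{i,\min}^{(m)} > \epsilon$. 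Since $1/2 \in (\epsilon,1-\epsilon)$, every $i \in (\theta_{\max}^{(m)},L_{\max}^{(m)}]$ lies in the transition window for the max-sequence, giving $L_{\max}^{(m)} - \theta_{\max}^{(m)} \le D$; the symmetric argument, applied to indices in $[K_{\min}^{(m)},\theta_{\min}^{(m)}]$, gives $\theta_{\min}^{(m)} - (K_{\min}^{(m)}-1) \le D$.

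The next step is to use the hypothesis to link $K_{\min}^{(m)}$ and $L_{\max}^{(m)}$. If $i + \Delta(m) < K_{\min}^{(m)}$, then $H_{i+\Delta(m),\min}^{(m)} \le \epsilon$ by definition of $K_{\min}^{(m)}$, and the hypothesis yields $H_{i,\max}^{(m)} \le \gamma + \epsilon < 1-\epsilon$, which forces $i \le L_{\max}^{(m)}$. Hence $K_{\min}^{(m)} \le L_{\max}^{(m)} + \Delta(m) + 1$. Combining the three estimates,
\begin{equation*}
\theta_{\min}^{(m)} - \theta_{\max}^{(m)} \;\le\; \bigl(K_{\min}^{(m)} - 1 + D\bigr) - \bigl(L_{\max}^{(m)} - D\bigr) \;\le\; \Delta(m) + 2D.
\end{equation*}

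Since $\Delta(m) = o(\sqrt{m})$ and $D$ is a constant independent of $m$, the right-hand side is $o(\sqrt{m})$, so the hypothesis of Theorem~\ref{thm:m4} is satisfied and the corollary follows. I do not anticipate a real obstacle: the whole argument is bookkeeping on the already-established sharp polarization of the $\max$ and $\min$ sequences, and the only delicate point is the choice of $\epsilon$ strictly below $(1-\gamma)/2$, which would fail if one only had $\gamma \le 1$.
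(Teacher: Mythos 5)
Your argument is correct and proves the corollary; the overall strategy is the same as the paper's (reduce to Theorem~\ref{thm:m4} by bounding $\theta_{\min}^{(m)}-\theta_{\max}^{(m)}$ via the sharp transitions guaranteed by Corollaries~\ref{cr:dh1} and~\ref{cr:dh3} and the $\gamma$-hypothesis), but the bookkeeping is organized differently in a way that pays off. The paper fixes $\epsilon=(1-\gamma)/2$ exactly and introduces a single auxiliary index $j$ on the max side, then needs an explicit two-case split on whether $j>m-\Delta(m)$ to stay inside the range where~\eqref{eq:jj} applies. You instead take $\epsilon$ strictly below $(1-\gamma)/2$ — so that $\gamma+\epsilon<1-\epsilon$ strictly — and set up the two symmetric thresholds $L_{\max}^{(m)}$ and $K_{\min}^{(m)}$, one per sequence, which makes the interlocking of the two transition windows transparent and silently absorbs the boundary case: if $K_{\min}^{(m)}-\Delta(m)-1<0$, the inequality $K_{\min}^{(m)}\le L_{\max}^{(m)}+\Delta(m)+1$ holds trivially, and one always has $K_{\min}^{(m)}-\Delta(m)-1\le m-\Delta(m)$, so~\eqref{eq:jj} is applicable whenever it is invoked. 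It is worth adding one sentence to record this, since as written the step deducing $K_{\min}^{(m)}\le L_{\max}^{(m)}+\Delta(m)+1$ quietly takes $i=K_{\min}^{(m)}-\Delta(m)-1$ without checking it lies in $\{0,\dots,m-\Delta(m)\}$. The strict choice of $\epsilon$ also places the relevant entropies cleanly in the open interval $(\epsilon,1-\epsilon)$ required by Corollary~\ref{cr:dh3}, whereas the paper's $\epsilon=(1-\gamma)/2$ lands exactly on the boundary. Both routes give $\theta_{\min}^{(m)}-\theta_{\max}^{(m)}\le\Delta(m)+O(D(\epsilon))=o(\sqrt{m})$, as required.
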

Taking $\gamma=0$ and $\Delta(m)=1$ for all $m>0$ in the corollary above, we immediately obtain the following result.
\begin{corollary}[Gap property]\label{cr:sv}
If
\begin{equation}\label{eq:nmv}
H_{i,\max}^{(m)} \le H_{i+1,\min}^{(m)} 
\text{~~for all~} 0\le i\le m-1 \text{~and all~} m>0,
\end{equation}
then Reed-Muller codes achieve capacity of $W$.
\end{corollary}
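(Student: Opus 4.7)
The plan is to deduce Corollary~\ref{cr:sv} as a direct specialization of the immediately preceding Corollary~\ref{cr:c4}. Taking $\gamma = 0$ and letting $\Delta(m) \equiv 1$ be the constant function, I note that $\Delta(m) = 1 = o(\sqrt{m})$ trivially holds, while hypothesis~\eqref{eq:jj} of Corollary~\ref{cr:c4} collapses under these choices to $H_{i,\max}^{(m)} \le H_{i+1,\min}^{(m)}$ for all $0 \le i \le m-1$, which is exactly the gap property~\eqref{eq:nmv}. Corollary~\ref{cr:c4} then gives the claim at once.

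For a self-contained derivation I would instead appeal directly to Theorem~\ref{thm:m4}, by showing that~\eqref{eq:nmv} forces $\theta_{\min}^{(m)} - \theta_{\max}^{(m)} \le 1$ for every $m$. Set $i^{\ast} := \theta_{\max}^{(m)}$. The edge case $i^{\ast} \ge m-1$ is automatic from $\theta_{\min}^{(m)} \le m$, which is immediate from the definition. Otherwise $i^{\ast}+1 \le m$, so by the maximality in the definition of $\theta_{\max}^{(m)}$ we have $H_{i^{\ast}+1,\max}^{(m)} > 1/2$; applying~\eqref{eq:nmv} at the index $i^{\ast}+1$ gives
\[
H_{i^{\ast}+2,\min}^{(m)} \;\ge\; H_{i^{\ast}+1,\max}^{(m)} \;>\; \tfrac{1}{2},
\]
whence $\theta_{\min}^{(m)} \le i^{\ast}+1$. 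In every case $\theta_{\min}^{(m)} - \theta_{\max}^{(m)} \le 1 = o(\sqrt{m})$, so hypothesis~\eqref{eq:qqq} of Theorem~\ref{thm:m4} is met and the theorem yields that RM codes achieve the capacity of $W$.

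There is no genuine obstacle internal to this corollary: the gap property squeezes the two monotone envelopes $\{H_{i,\max}^{(m)}\}_{i=0}^m$ and $\{H_{i,\min}^{(m)}\}_{i=0}^m$ into one-layer proximity, which is vastly tighter than the $o(\sqrt{m})$ margin that Theorem~\ref{thm:m4} asks for. All of the real content accordingly sits inside Theorem~\ref{thm:m4}, whose own proof must combine the sharp polarization result of Theorem~\ref{thm:m3} with the Stirling estimate~\eqref{eq:stl}: since binomial mass outside an $O(\sqrt{m})$-window around $m/2$ is negligible compared to $\binom{m}{\lfloor m/2\rfloor}$, a sub-$\sqrt{m}$ transition window in the weight profile forces the twin code of Theorem~\ref{thm:cac} to coincide, up to $o(2^m)$ rows, with the RM code selected purely by cardinality.
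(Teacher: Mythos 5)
Your primary derivation is exactly the paper's proof: Corollary~\ref{cr:sv} is obtained by specializing Corollary~\ref{cr:c4} to $\gamma=0$ and $\Delta(m)\equiv 1$, under which hypothesis~\eqref{eq:jj} collapses to the gap property~\eqref{eq:nmv}. Nothing more is said or needed in the paper.

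Your second, self-contained derivation is correct and genuinely a bit cleaner than going through Corollary~\ref{cr:c4}. With $\gamma=0$ the gap property directly pins down the transition thresholds: maximality of $\theta_{\max}^{(m)}$ gives $H_{\theta_{\max}^{(m)}+1,\max}^{(m)}>1/2$, and one application of~\eqref{eq:nmv} combined with monotonicity of $\{H_{i,\min}^{(m)}\}_i$ (Corollary~\ref{cr:dh1}) forces $H_{\theta_{\max}^{(m)}+2,\min}^{(m)}>1/2$, hence $\theta_{\min}^{(m)}\le\theta_{\max}^{(m)}+1$. That $\le 1$ bound is far stronger than the $o(\sqrt m)$ that Theorem~\ref{thm:m4} requires, and your argument bypasses the $\epsilon$/$D(\epsilon)$ accounting that the general $\gamma>0$ case of Corollary~\ref{cr:c4} has to carry around; the price is that the argument is bespoke to $\gamma=0$ and does not generalize. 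The closing paragraph speculating on what the proof of Theorem~\ref{thm:m4} ``must'' look like is not part of the corollary's proof and can be dropped, but it is roughly in the spirit of what the paper actually does there.
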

We have numerically verified \eqref{eq:nmv} on BEC for code length up to $256$; see Fig.~\ref{fig:ant} and Section~\ref{sect:sim} for a detailed discussion. We also proved \eqref{eq:nmv} up to dimension 16 for the BSC.

\begin{figure}[h] 
\includegraphics[width=\textwidth]{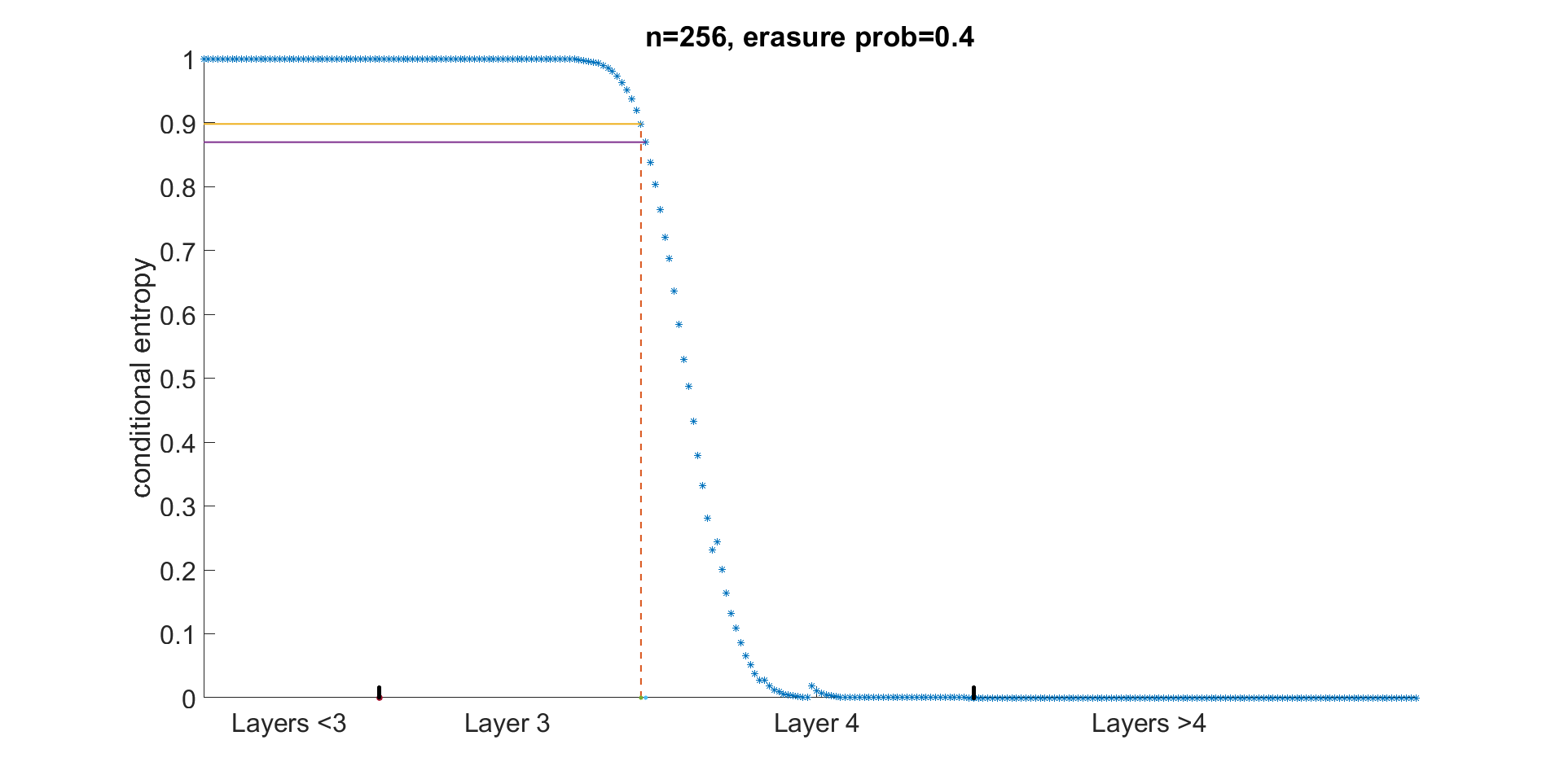}
\caption{Simulation results for Reed-Muller codes with length $n=256$ over a BEC, where the channel erasure probability is set to be $0.4$. This figure illustrates the gap property between Layer 3 and Layer 4.}
\label{fig:ant}
\end{figure}

\begin{figure}[h] 
{\centering \textbf{Sharp transitions of $\{H_{i,\max}^{(m)}\}_{i=0}^m$ and $\{H_{i,\min}^{(m)}\}_{i=0}^m$}\par\medskip}
\includegraphics[width=\textwidth]{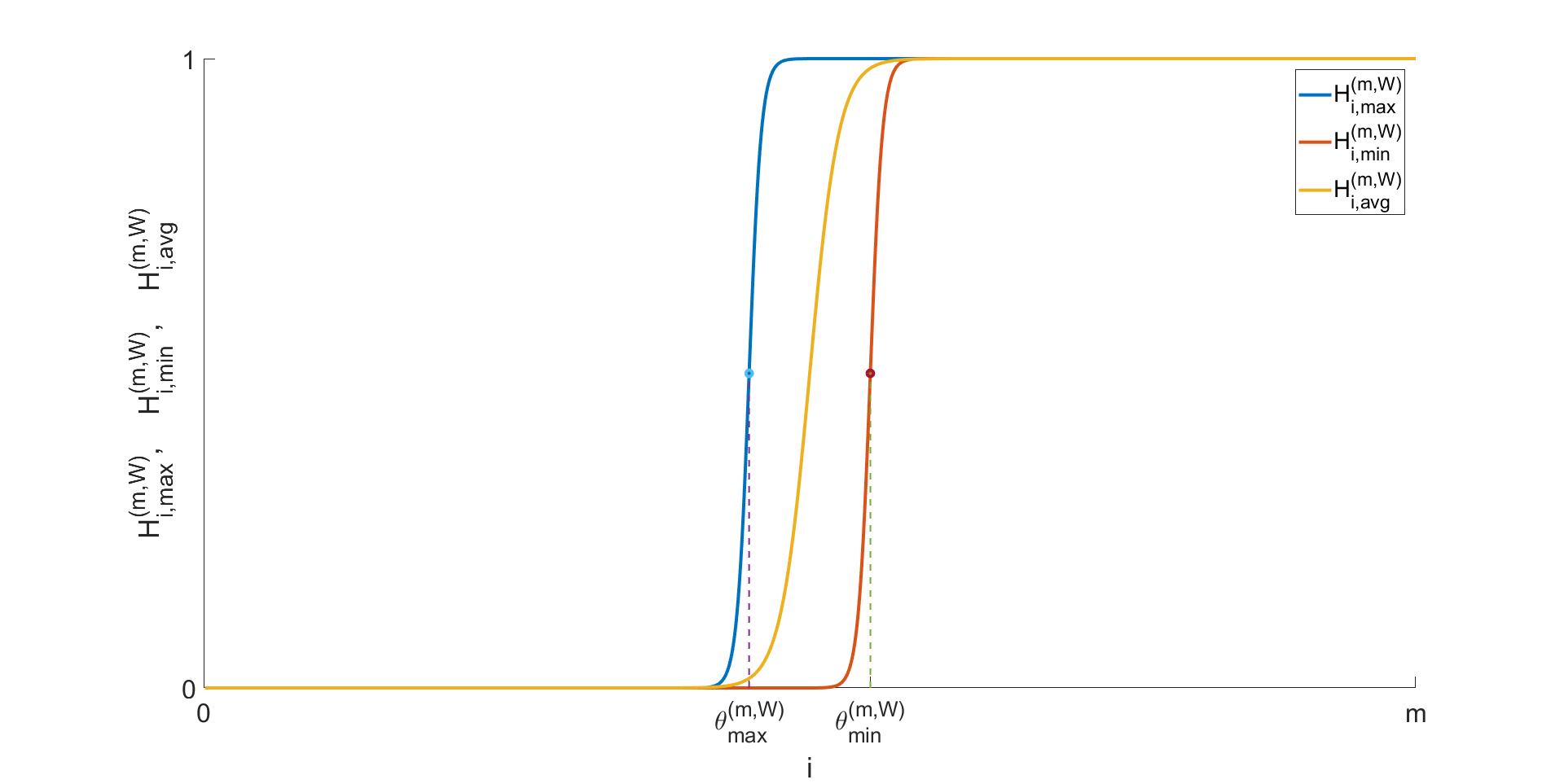}
\caption{All three sequences $\{H_{i,\max}^{(m)}\}_{i=0}^m,\{H_{i,\min}^{(m)}\}_{i=0}^m$ and $\{H_{i,\avg}^{(m)}\}_{i=0}^m$ increase with $i$. Both sequences 
$\{H_{i,\max}^{(m)}\}_{i=0}^m$ and $\{H_{i,\min}^{(m)}\}_{i=0}^m$ have sharp transition from $0$ to $1$, and the locations of transition are denoted as $\theta_{\max}^{(m)}$ and $\theta_{\min}^{(m)}$ respectively. If $\theta_{\min}^{(m)}-\theta_{\max}^{(m)}=o(\sqrt{m})$, then Reed-Muller codes achieve capacity of $W$.} 
\label{fig:ilt}
\end{figure}

\subsection{Equivalence between source and channel coding using RM codes}
We use BSC$(p)$ to denote Binary Symmetric Channel with crossover probability $p$. The capacity of BSC$(p)$ is $1-h(p)$, where $h(p):=-p\log_2 p-(1-p)\log_2 (1-p)$ is the binary entropy function. It is well known that the optimal rate to losslessly compress i.i.d. Bernoulli-$p$ source is $h(p)$.
In this paper we show that for RM codes, the channel coding problem for BSC is equivalent to the source coding problem:
\begin{theorem}\label{thm:m5}
Reed-Muller codes achieve capacity of BSC$(p)$ if and only if they can be used to losslessly compress i.i.d. Bernoulli-$p$ random variables with optimal rate $h(p)$.
\end{theorem}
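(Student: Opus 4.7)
The proof combines two classical structural facts: (i) the \emph{channel--source duality} for binary linear codes, and (ii) the \emph{self-duality of RM codes} $\cR(m,r)^\perp = \cR(m, m-r-1)$.

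First, I would verify (i) in detail. For a binary linear code $C \subseteq \{0,1\}^n$ of dimension $k$ and parity-check matrix $H$, the syndrome map $x \mapsto Hx^\top$ provides a rate-$(n-k)/n$ linear compressor, decompressed (for $p<1/2$) by returning the minimum-weight $\hat{x}$ in the coset $\{y : H y^\top = Hx^\top\}$. Decompression of $X \sim \mathrm{i.i.d.\ Bernoulli}(p)$ succeeds precisely when $X$ is the minimum-weight element of its coset modulo $C$. On the channel side, by the symmetry of $\BSC(p)$ and linearity of $C$, the MAP error probability of $C$ does not depend on the transmitted codeword and equals the probability that an i.i.d.\ Bernoulli-$p$ noise vector fails to be the minimum-weight representative of its coset. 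The two events coincide, and the rates align: $(n-k)/n \to h(p)$ iff $k/n \to 1-h(p) = I(\BSC(p))$.

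Second, I would apply (ii). A parity-check matrix of $\cR(m,r)$ is (up to row operations) a generator matrix of $\cR(m, m-r-1)$, so linear compression of an i.i.d.\ Bernoulli-$p$ source using $\cR(m, m-r-1)$ is equivalent, via the previous step, to MAP decoding of $\cR(m,r)$ over $\BSC(p)$. Since the map $r \mapsto m-r-1$ is an involution on $\{0,1,\dots,m-1\}$ and the family of RM codes is closed under this involution, a sequence of RM codes achieves the optimal compression rate $h(p)$ of Bernoulli-$p$ if and only if the dual sequence (again RM codes) achieves the capacity $1-h(p)$ of $\BSC(p)$. This establishes the claimed equivalence.

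The plan does not face a serious obstacle, as the result is essentially a corollary of these two standard facts. The only minor technicality is ensuring that any target rate $\rho \in (0,1)$ can be realized in the limit by $\dim\cR(m, r_m)/2^m \to \rho$, which follows from Gaussian concentration of the partial binomial sums $\sum_{i \le r_m}\binom{m}{i}/2^m$ when $r_m = m/2 + \Theta(\sqrt{m})$ via Stirling's approximation.
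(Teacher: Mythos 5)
Your proof is correct and takes essentially the same route as the paper: both invoke the channel--source duality for linear codes (syndrome compression paired with coset-leader/MAP decoding) together with the self-duality $\cR(m,r)^\perp = \cR(m,m-r-1)$, which the paper cites as following arguments of Weiss and Allard. The only difference is one of formulation: you make the equivalence explicit via the coset-leader characterization on both sides (the error event in each case is ``the noise/source realization is not the minimum-weight element of its coset''), whereas the paper argues at the level of the decoding problem itself, observing that decoding $U^{k}$ from $U^{k}V_{m,r}+S^{n}$ is a bijective re-parametrization of decoding $S^{n}$ from the same observation, and that the posterior of $S^{n}$ given $Y$ depends on $Y$ only through the syndrome $S^{n}(V_{m,m-r-1})^{T}$ because the message part $U^{k}V_{m,r}$ is uniform over the code. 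Your coset-leader phrasing is tied to ML on $\BSC(p)$ with $p<1/2$, while the paper's sufficiency argument is distribution-level and slightly more general, but for this theorem the two are interchangeable. Your appended remark about realizing any target rate by $\sum_{i\le r_m}\binom{m}{i}/2^m$ using $r_m = m/2+\Theta(\sqrt m)$ is a legitimate point the paper leaves implicit.
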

The proof of this theorem uses arguments similar to \cite{Weiss62,Allard72}; see Section~\ref{sect:p5}.

\subsection{Twin code is the same as RM code up to $n=16$ for BSC} \label{sect:tisr}
We show that the twin code is the same as the RM code up to $n=16$ for BSC. Our claim follows immediately from the following proposition:
\begin{proposition} \label{prop:jww}
For BSC channels and $m\le 4$, if two subsets $A,B\subseteq[m]$ satisfy that $|A|>|B|$, then $H_A^{(m)}\ge H_B^{(m)}$.
\end{proposition}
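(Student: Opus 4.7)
The plan is first to reduce Proposition~\ref{prop:jww} to verifying the gap property \eqref{eq:nmv} for $m \le 4$ on the BSC, and then to verify that restricted gap property by direct (computer-assisted) computation. For the reduction: if $|A| > |B|$, Corollary~\ref{cr:dh2} gives $H_A^{(m)} \ge H_{|A|,\min}^{(m)}$ and $H_B^{(m)} \le H_{|B|,\max}^{(m)}$, and chaining
\[
H_{|B|,\max}^{(m)} \le H_{|B|+1,\min}^{(m)} \le H_{|B|+1,\max}^{(m)} \le \dots \le H_{|A|,\min}^{(m)}
\]
via \eqref{eq:nmv} (interleaved with the trivial $H_{i,\min}^{(m)} \le H_{i,\max}^{(m)}$) yields $H_B^{(m)} \le H_A^{(m)}$.

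By Lemma~\ref{lm:is}, $H_{i,\max}^{(m)} = H_{\{1,2,\ldots,i\}}^{(m)}$ and $H_{i,\min}^{(m)} = H_{\{m-i+1,\ldots,m\}}^{(m)}$, so the reduced task is to verify $H_{\{1,\ldots,i\}}^{(m)} \le H_{\{m-i,\ldots,m\}}^{(m)}$ for each $0 \le i \le m-1$ and each $m \in \{1,2,3,4\}$, i.e.\ at most $1+2+3+4 = 10$ scalar inequalities. For the verification on $\mathrm{BSC}(p)$, I would use density evolution via the Plotkin recursion to compute the LLR distribution of each bit-channel $W_A^{(m)}$ for $m \le 4$. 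Since the BSC LLR distribution is symmetric, and the ``$+$'' and ``$-$'' polar transforms described in Section~\ref{sect:pst} preserve symmetry, each bit-channel's LLR distribution is supported on a manageable number of mass points (at most $O(2^n) = O(2^{16})$ for $m=4$), and $H_A^{(m)}(p)$ becomes an explicit algebraic function of $p$ that can be evaluated.

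The main obstacle is that these functions $H_A^{(m)}(p)$ involve compositions of the binary entropy $h(\cdot)$ with rational expressions in $p$, so a closed-form algebraic comparison is unwieldy. My plan is therefore to rely on a computer-assisted verification: evaluate both sides on a fine grid in $p \in [0, 1/2]$, and use uniform Lipschitz bounds on the involved functions (or interval arithmetic) to rigorously promote the grid comparison to a proof for all $p \in [0,1/2]$. The endpoint cases $p=0$ and $p=1/2$ are immediate, since both sides collapse to $0$ and $1$ respectively; only the interior requires numerical attention. Given the small count of inequalities and the modest alphabet size, the procedure is entirely tractable and is in fact the verification alluded to in Section~\ref{sect:tisr}.
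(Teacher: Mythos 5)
Your reduction of Proposition~\ref{prop:jww} to the gap inequalities $H_{i,\max}^{(m)} \le H_{i+1,\min}^{(m)}$, $0 \le i \le m-1$, via Corollary~\ref{cr:od}/\ref{cr:dh2} matches the paper's first step exactly. Where you diverge is in verifying those reduced inequalities: you propose a computer-assisted check (density evolution to get the exact polynomial-in-$p$ expressions, then a grid comparison made rigorous by Lipschitz/interval bounds), whereas the paper's verification is entirely analytic. The paper first observes that the $i=0$ and $i=m-1$ inequalities hold for free for every $m$, since $W_{\emptyset}^{(m)},W_{\{m\}}^{(m)}$ and $W_{[m-1]}^{(m)},W_{[m]}^{(m)}$ are exactly standard $+/-$ polar transforms of $W_{\emptyset}^{(m-1)}$ and $W_{[m-1]}^{(m-1)}$, so \eqref{eq:kbd} applies; this leaves one nontrivial inequality for $m=3$ and two for $m=4$. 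For those it invokes the source-coding dual (Theorem~\ref{thm:m5}) to rewrite each side as a conditional entropy of parity checks of i.i.d.\ Bernoulli-$p$ noise bits, and then closes each gap with a pure symmetry argument: swapping two of the underlying $X_i$'s that appear identically in the conditioning sums sends $Y_4\mapsto Y_4+Y_5$, giving $H(Y_4\mid Y_1,Y_2,Y_3)=H(Y_4+Y_5\mid Y_1,Y_2,Y_3)\ge H(Y_4+Y_5\mid Y_1,\ldots,Y_4)=H(Y_5\mid Y_1,\ldots,Y_4)$. Both routes are valid for $m\le 4$, but the paper's is self-contained, requires no numerics, and exposes structure (which inequalities are ``polar'' and which need the swap trick), while yours treats each inequality as a black-box scalar comparison. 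If you do go the numerical route, be careful that a plain grid evaluation is not a proof; you would need to actually carry out the interval-arithmetic or Lipschitz-bound step you sketch, and you should also exploit the $i=0,\,m-1$ freebies to cut the workload to three inequalities rather than ten.
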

The proof is given in Section~\ref{sect:p6}.

\section{Proofs of the additional results}\label{sect:proofs}

\subsection{Proof of Corollary~\ref{cr:avg}}
Let us fix an $i\in\{0,1,\dots,m-1\}$.
We consider all the distinct pair of subsets $A,B\subseteq[m]$ such that $A\subseteq B, |A|=i, |B|=i+1$.
Obviously for each fixed subset $A$, there are $m-i$ possible choices of $B$, and for each fixed subset $B$, there are $i+1$ possible choices of $A$.
By Theorem~\ref{thm:m1} we know that 
\begin{equation}\label{eq:lq}
H_A^{(m)} \le H_B^{(m)}
\end{equation}
for all $A\subseteq B$.
Summing \eqref{eq:lq} over all distinct pairs of subsets $A,B\subseteq[m]$ such that $A\subseteq B, |A|=i, |B|=i+1$, we obtain that
$$
(m-i)\sum_{A\subseteq[m],|A|=i} H_A^{(m)} \le (i+1) \sum_{B\subseteq[m],|B|=i+1} H_B^{(m)}.
$$
According to the definition \eqref{eq:oo} of $H_{i,\avg}^{(m)}$, this is equivalent to
$$
H_{i,\avg}^{(m)} \le H_{i+1,\avg}^{(m)}.
$$
This completes the proof of Corollary~\ref{cr:avg}.

\subsection{Proof of Theorem~\ref{thm:m4} and Corollary~\ref{cr:c4}}

\vspace*{0.1in}\underline{\em Proof of Theorem~\ref{thm:m4}:} 
Take $\epsilon=0.1$ and the corresponding constant $D(0.1)$ in Corollary~\ref{cr:dh3}.
Let
\begin{equation}\label{eq:ab}
\beta:= \lfloor \theta_{\max}^{(m)} - D(0.1) \rfloor, \quad \quad
\alpha := \lfloor \beta - m^{1/3} \rfloor  .
\end{equation}
We claim that the family of Reed-Muller codes $\cR(m,\alpha)$ achieves the capacity of $W$ when $m$ goes to infinity. Notice that by definition \eqref{eq:ab}, $\alpha$ is a function of $m$ and $W$.
To prove this claim we only need to show two facts: 
First, the decoding error of $\cR(m,\alpha)$ goes to $0$.
Second, the rate of $\cR(m,\alpha)$ approaches $I(W)$, i.e.,
\begin{equation}\label{eq:rcd}
 \sum_{i=0}^\alpha \binom{m}{i} \ge n (I(W) - o(1)).
\end{equation}

We start by showing that the decoding error goes to $0$.
According to Corollary~\ref{cr:dh1} and Corollary~\ref{cr:dh3},
$$
H_{\{1,2,\dots,\beta\}}^{(m)} =H_{\beta,\max}^{(m)}\le 0.1
$$
Then by \eqref{eq:os} we obtain that 
$$
Z_{\{1,2,\dots,\beta\}}^{(m)} < 1/2.
$$
By \eqref{eq:AB},
$$
\log_2 \left(Z_{\{1,2,\dots,\alpha\}}^{(m)}\right)
 \le 2^{\beta-\alpha} \log_2 \left( Z_{\{1,2,\dots,\beta\}}^{(m)} \right)
\le 2^{m^{1/3}} \log_2 \left( Z_{\{1,2,\dots,\beta\}}^{(m)} \right)
< -2^{m^{1/3}}.
$$
Since $2^{m^{1/3}} \gg 2m$, we have
$$
Z_{\{1,2,\dots,\alpha\}}^{(m)} \le 2^{-2m} = n^{-2}.
$$
Recall that the code length $n=2^m$. 
Note that \eqref{eq:AB} also implies that $Z_{\{1,2,\dots,i\}}^{(m)} \le Z_{\{1,2,\dots,i+1\}}^{(m)}$ for all $0\le i<m$. Therefore, for all $i\le \alpha$,
$$
Z_{\{1,2,\dots,i\}}^{(m)} \le Z_{\{1,2,\dots,\alpha\}}^{(m)} \le  n^{-2}.
$$
Combining this with \eqref{eq:kml}, we conclude that for all $A\subseteq [m]$ with cardinality $|A|\le \alpha$,
$$
Z_A^{(m)} \le Z_{\{1,2,\dots,|A|\}}^{(m)} \le  n^{-2}.
$$
We again use the successive decoder.
By the union bound, the decoding error of the whole codeword is upper bounded by the sum of decoding errors of each individual input:
\begin{align*}
P_e(\cR(m,\alpha)) & \le \sum_{A\subseteq[m], |A| \le \alpha} P_e \Big(U_A^{(m)} \Big|Y^{(m)},U_{<A}^{(m)} \Big) \\
& \le \sum_{A\subseteq[m], |A| \le \alpha} Z \Big(U_A^{(m)} \Big|Y^{(m)},U_{<A}^{(m)} \Big) \\
& \le \sum_{A\subseteq[m], |A| \le \alpha} n^{-2} \le n^{-1},
\end{align*}
where the second inequality follows from \eqref{eq:pez}.
Therefore the decoding error of $\cR(m,\alpha)$ does go to $0$.

Next we prove \eqref{eq:rcd}.
Let
$$
t(\epsilon):= \lceil \theta_{\min}^{(m)} + D(\epsilon) \rceil.
$$
According to Corollary~\ref{cr:dh1} and Corollary~\ref{cr:dh3},
$$
H_{t(\epsilon),\min}^{(m)} \ge 1-\epsilon.
$$
By Corollary~\ref{cr:dh2}, for all $A\subseteq[m]$ with cardinality $|A|\ge t(\epsilon)$,
$$
H_A^{(m)} \ge H_{t(\epsilon),\min}^{(m)} \ge 1-\epsilon.
$$
By \eqref{eq:kw}, we have
$$
(1-\epsilon) \sum_{i=t(\epsilon)}^m \binom{m}{i}   \le
\sum_{A\subseteq[m], |A|\ge t(\epsilon)} H_A^{(m)} \le 
\sum_{A\subseteq[m]} H_A^{(m)} 
= n(1-I(W)).
$$
Therefore,
\begin{equation} \label{eq:hhw}
\sum_{i=t(\epsilon)}^m \binom{m}{i}   \le  \frac{1}{1-\epsilon} n(1-I(W)).
\end{equation}
Since $\binom{m}{i} \le \binom{m}{\lfloor m/2 \rfloor}$ for all $i$ and by \eqref{eq:stl}, 
$\binom{m}{\lfloor m/2 \rfloor} < \frac{n}{\sqrt{m}}$ for large $m$, we have
\begin{align*}
\sum_{i=\alpha+1}^{t(\epsilon)-1} \binom{m}{i}
& \le (t(\epsilon)-\alpha-1) \frac{n}{\sqrt{m}}
\le (\theta_{\min}^{(m)} + D(\epsilon) - (\theta_{\max}^{(m)} - D(0.1)-m^{1/3})) 
\frac{n}{\sqrt{m}}  \\
& = ( o(\sqrt{m}) + D(\epsilon) + D(0.1) + m^{1/3} )
\frac{n}{\sqrt{m}} =o(n).
\end{align*}
Combining this with \eqref{eq:hhw}, we obtain that 
$$
\sum_{i=0}^\alpha \binom{m}{i} \ge n- \frac{1}{1-\epsilon} n(1-I(W)) -o(n).
$$
The right-hand side can be made arbitrarily close to $nI(W)$ as long as we set $\epsilon$ to be small enough and $m$ to be large enough. This completes the proof of \eqref{eq:rcd} and establishes Theorem~\ref{thm:m4}. 
\hfill \qedsymbol

\vspace*{0.1in}\underline{\em Proof of Corollary~\ref{cr:c4}:} 
Take $\epsilon=\frac{1-\gamma}{2}$ and the corresponding constant $D(\frac{1-\gamma}{2})$ in Corollary~\ref{cr:dh3}.
Let 
$$
j:= \min \left\{i\in\{0,1,2,\dots,m\} : H_{i,\max}^{(m)} \ge \frac{1+\gamma}{2} \right\}.
$$
Since $H_{i,\max}^{(m)}$ increases with $i$, we have $j\ge \theta_{\max}^{(m)}$. By Corollary~\ref{cr:dh3} we have
$$
0\le j - \theta_{\max}^{(m)} \le D(\frac{1-\gamma}{2}) +1.
$$
If $j>m-\Delta(m)$, then 
$$
\theta_{\min}^{(m)} - \theta_{\max}^{(m)}
\le m- \theta_{\max}^{(m)} \le m-j +D(\frac{1-\gamma}{2}) +1 < \Delta(m) +D(\frac{1-\gamma}{2}) +1 = o(\sqrt{m}),
$$
and by Theorem~\ref{thm:m4}, RM codes achieve capacity of $W$.
On the other hand, if $j\le m-\Delta(m)$, then by \eqref{eq:jj},
$$
 H_{j+\Delta(m),\min}^{(m)} \ge H_{j,\max}^{(m)} - \gamma
\ge \frac{1+\gamma}{2} - \gamma =  \frac{1-\gamma}{2}.
$$
Therefore according to Corollary~\ref{cr:dh3},
$$
j+\Delta(m) \ge \theta_{\min}^{(m)} - D(\frac{1-\gamma}{2}).
$$
Thus we obtain that
$$
\theta_{\min}^{(m)} - \theta_{\max}^{(m)} \le
\Big( j+\Delta(m) + D(\frac{1-\gamma}{2}) \Big) - \Big( j-D(\frac{1-\gamma}{2})-1 \Big)
= \Delta(m) + 2D(\frac{1-\gamma}{2}) +1 = o(\sqrt{m}),
$$
and by Theorem~\ref{thm:m4}, RM codes achieve capacity of $W$.
This completes the proof of Corollary~\ref{cr:c4}. \hfill \qedsymbol

\subsection{Proof of Theorem~\ref{thm:m5}} \label{sect:p5}
Our proof follows similar arguments to \cite{Weiss62,Allard72}.

Let $V_{m,r}$ be the generator matrix of Reed-Muller code $\cR(m,r)$, and let $k(m,r)$ be its dimension.
Let $U^{k(m,r)}$ be $k(m,r)$ i.i.d. Bernoulli-$1/2$ random variables and let $S^n$ be $n:=2^m$ i.i.d. Bernoulli-$p$ random variables, where $U^{k(m,r)}$ and $S^n$ are independent. Then Reed-Muller codes achieve capacity of BSC$(p)$ if and only if for each $m$ there is an $r$ such that
$$
\frac{k(m,r)}{n} = 1-h(p)-o(1)
$$
and that $U^{k(m,r)}$ can be decoded from $U^{k(m,r)} V_{m,r} + S^n$ with high probability.
Note that decoding $U^{k(m,r)}$ from $U^{k(m,r)} V_{m,r} + S^n$ is equivalent to decoding $S^n$ from $U^{k(m,r)} V_{m,r} + S^n$. Using the fact that the dual code of $\cR(m,r)$ is $\cR(m,m-r-1)$ \cite{Macwilliams77}, it is easy to verify that this is further equivalent to decoding $S^n$ from 
$$
(U^{k(m,r)} V_{m,r} + S^n)(V_{m,m-r-1})^T = S^n (V_{m,m-r-1})^T.
$$
Therefore Reed-Muller codes achieve capacity of BSC$(p)$ if and only if we can recover $S^n$ from $S^n (V_{m,m-r-1})^T$ with high probability, where the rate of compression is 
$$
\frac{n-k(m,r)}{n} = h(p)+o(1).
$$
This completes the proof of Theorem~\ref{thm:m5}.

\subsection{Proof of Proposition~\ref{prop:jww}} \label{sect:p6}
The cases of $m\le 2$ are trivial, so we only prove the cases of $m=3$ and $m=4$.
Let us start with $m=3$.
By Corollary~\ref{cr:od}, we only need to show that $H_{[3]}^{(3)}\ge H_{[2]}^{(3)}, H_{\{2,3\}}^{(3)} \ge H_{\{1\}}^{(3)}, H_{\{3\}}^{(3)}\ge H_{\emptyset}^{(3)}$.
In Section~\ref{sect:pst}, we already showed that $W_{[3]}^{(3)}$ and $W_{[2]}^{(3)}$ are the ``$-$" and ``$+$" polar transforms of $W_{[2]}^{(2)}$, respectively, and that $W_{\{3\}}^{(3)}$ and $W_{\emptyset}^{(3)}$ are the ``$-$" and ``$+$" polar transforms of $W_{\emptyset}^{(2)}$, respectively. Therefore, $H_{[3]}^{(3)}\ge H_{[2]}^{(3)}$ and $H_{\{3\}}^{(3)}\ge H_{\emptyset}^{(3)}$ follow immediately (and this extends to any dimension, i.e.,  the first and last transitions are always ordered due to polar codes).
Now let us prove 
\begin{equation}\label{eq:hre}
H_{\{2,3\}}^{(3)} \ge H_{\{1\}}^{(3)}
\end{equation}
using the equivalence between source and channel coding 
(see Section~\ref{sect:p5} for the discussion of the equivalence). Suppose that $X_1,X_2,\dots,X_8$ are i.i.d. Bernoulli-$p$ random variables, where $p$ is the crossover probability of the BSC channel.
Let $Y_1=\sum_{i=1}^8 X_i, Y_2=X_1+X_2+X_3+X_4, Y_3=X_1+X_2+X_5+X_6, Y_4=X_1+X_3+X_5+X_7, Y_5=X_1+X_2$. Then \eqref{eq:hre} is equivalent to
$H(Y_4|Y_1,Y_2,Y_3) \ge H(Y_5|Y_1,Y_2,Y_3,Y_4)$.
Notice that both $X_1$ and $X_2$ appear in $Y_1,Y_2,Y_3$. Therefore,
\begin{align*}
H(Y_4|Y_1,Y_2,Y_3) &= H(X_1+X_3+X_5+X_7|Y_1,Y_2,Y_3)
= H(X_2+X_3+X_5+X_7|Y_1,Y_2,Y_3) \\
& = H(Y_4+Y_5|Y_1,Y_2,Y_3) \ge H(Y_4+Y_5|Y_1,Y_2,Y_3,Y_4) =H(Y_5|Y_1,Y_2,Y_3,Y_4).
\end{align*}
This completes the proof of \eqref{eq:hre}.

For the case of $m=4$, again by Corollary~\ref{cr:od}, we only need to show that $H_{[4]}^{(4)}\ge H_{[3]}^{(4)}, H_{\{2,3,4\}}^{(4)} \ge H_{\{1,2\}}^{(4)},
H_{\{3,4\}}^{(4)} \ge H_{\{1\}}^{(4)}, H_{\{4\}}^{(4)}\ge H_{\emptyset}^{(4)}$.
In particular, $H_{[4]}^{(4)}\ge H_{[3]}^{(4)}$ and $H_{\{4\}}^{(4)}\ge H_{\emptyset}^{(4)}$ follow immediately from the discussions in Section~\ref{sect:pst}, so we only need to show the other two inequalities. We still use the equivalence between source and channel coding. Suppose that $X_1,X_2,\dots,X_{16}$ are i.i.d. Bernoulli-$p$ random variables, where $p$ is the crossover probability of the BSC channel.
Let 
\begin{align*}
Y_1 &= \sum_{i=1}^{16} X_i, \\
Y_2 &= X_1+X_2+X_3+X_4+X_5+X_6+X_7+X_8, \\
Y_3 &= X_1+X_2+X_3+X_4+X_9+X_{10}+X_{11}+X_{12}, \\
Y_4 &= X_1+X_2+X_5+X_6+X_9+X_{10}+X_{13}+X_{14}, \\
Y_5 &= X_1+X_3+X_5+X_7+X_9+X_{11}+X_{13}+X_{15}, \\
Y_6 &= X_1+X_2+X_3+X_4.
\end{align*}
Then $H_{\{3,4\}}^{(4)} \ge H_{\{1\}}^{(4)}$ is equivalent to $H(Y_5|Y_1,Y_2,Y_3,Y_4) \ge H(Y_6|Y_1,Y_2,Y_3,Y_4,Y_5)$.
Notice that both $X_1$ and $X_2$ appear in $Y_1,Y_2,Y_3,Y_4$. Therefore,
\begin{align*}
H(Y_5|Y_1,Y_2,Y_3,Y_4) &= H(X_1+X_3+X_5+X_7+X_9+X_{11}+X_{13}+X_{15}|Y_1,Y_2,Y_3,Y_4) \\
&= H(X_2+X_3+X_5+X_7+X_9+X_{11}+X_{13}+X_{15}|Y_1,Y_2,Y_3,Y_4).
\end{align*}
Similarly, both $X_3$ and $X_4$ appear in $Y_1,Y_2,Y_3$, and neither of them appears in $Y_4$. Therefore,
\begin{align*}
 & H(X_2+X_3+X_5+X_7+X_9+X_{11}+X_{13}+X_{15}|Y_1,Y_2,Y_3,Y_4)\\
= & H(X_2+X_4+X_5+X_7+X_9+X_{11}+X_{13}+X_{15}|Y_1,Y_2,Y_3,Y_4).
\end{align*}
Thus we conclude that 
\begin{align*}
H(Y_5|Y_1,Y_2,Y_3,Y_4) =  H(X_2+X_4+X_5+X_7+X_9+X_{11}+X_{13}+X_{15}|Y_1,Y_2,Y_3,Y_4) \\
= H(Y_5+Y_6|Y_1,Y_2,Y_3,Y_4) \ge = H(Y_5+Y_6|Y_1,Y_2,Y_3,Y_4,Y_5)
= H(Y_6|Y_1,Y_2,Y_3,Y_4,Y_5).
\end{align*}
This completes the proof of $H_{\{3,4\}}^{(4)} \ge H_{\{1\}}^{(4)}$. $H_{\{2,3,4\}}^{(4)} \ge H_{\{1,2\}}^{(4)}$ can be proved in the same way, and we omit its proof here.

\section{Simulation results}\label{sect:sim}
In this section we present some Monte Carlo simulation results over BEC channels.
More specifically, for a given code length $n=2^m$, we label all the subsets of $[m]$ as $A_1<A_2<\dots<A_n$ according to the order defined in Section~\ref{sect:intro}, and we use simulation to see how the conditional entropy $H_{A_i}^{(m,\BEC)}$ vary with $i$.
The number of iterations in our Monte Carlo simulation is $100000$.

We can see from Fig.~\ref{fig:vb1} and Fig.~\ref{fig:vb2} that $H_{A_i}^{(m,\BEC)}$ roughly decreases with $i$, meaning that RM codes also pick rows with the smallest conditional entropy, which is similar to polar codes.
We can also see a sharp transition of the conditional entropy from $1$ to $0$, which is necessary for RM codes to achieve capacity. Moreover, the transition becomes sharper for larger code length.

To compare the sharpness of transition with polar codes, we plot the sorted conditional entropy of polar codes in Fig.~\ref{fig:vb3}. We can see from Fig.~\ref{fig:vb2} and Fig.~\ref{fig:vb3} that for the same code length and channel erasure probability, the transition of RM codes is much sharper than that of polar codes. This is consistent with the properties of the fast polar transform detailed in Section \ref{sect:pst}. 

Finally, we remark that the condition \eqref{eq:nmv} in Corollary~\ref{cr:sv} is numerically verified on BEC for code length up to $256$ for various channel erasure probabilities.

\begin{figure}[h] 
\includegraphics[width=\textwidth]{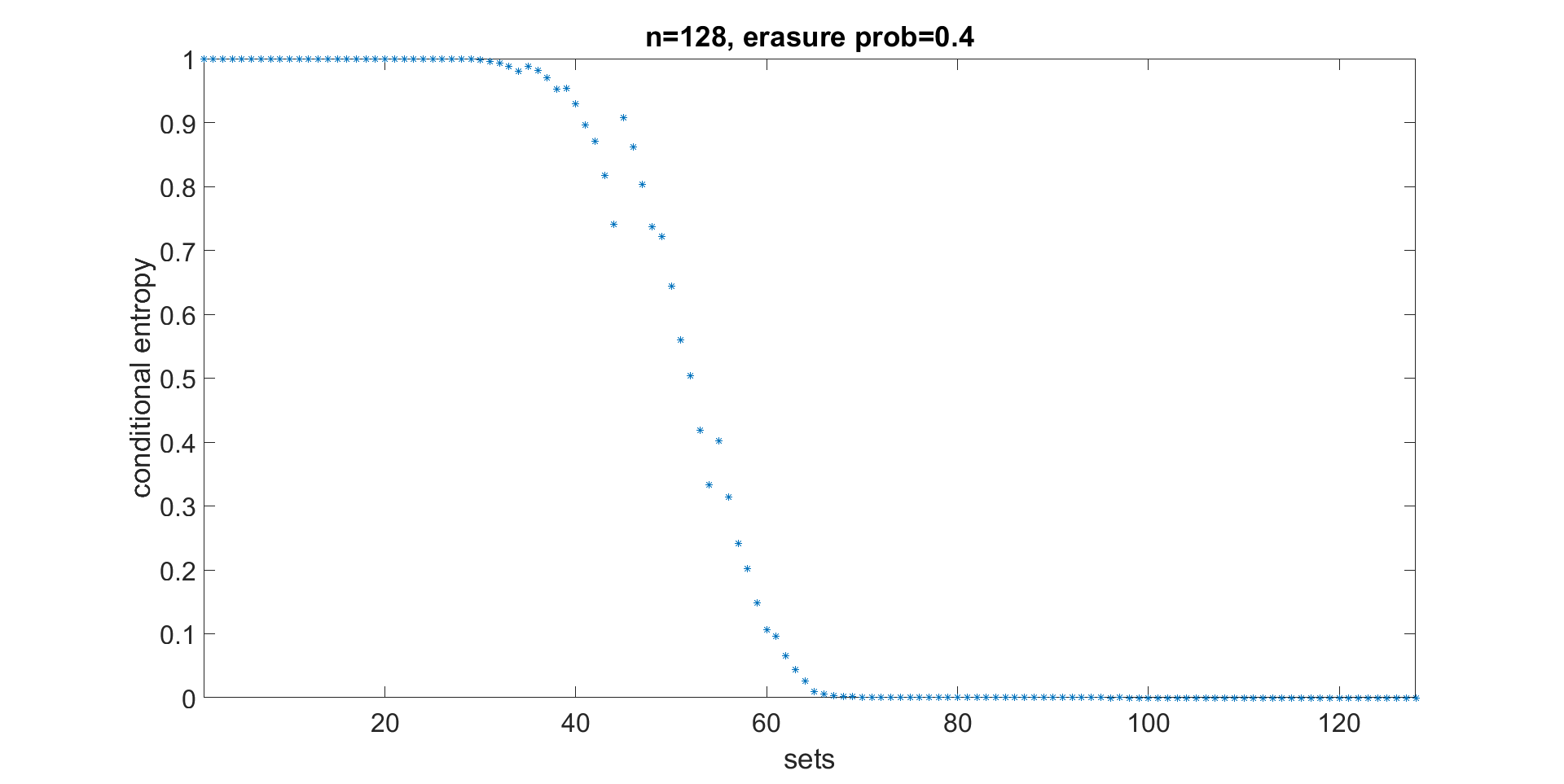}
\caption{Simulation results for Reed-Muller codes with length $n=128$. The channel erasure probability is set to be $0.4$. The $x$-axis corresponds to the index of the sets $i$ ranging from $1$ to $128$, and the $y$-axis corresponds to the conditional entropy $H_{A_i}^{(m,\BEC)}$.}
\label{fig:vb1}
\end{figure}

\begin{figure}[h] 
\includegraphics[width=\textwidth]{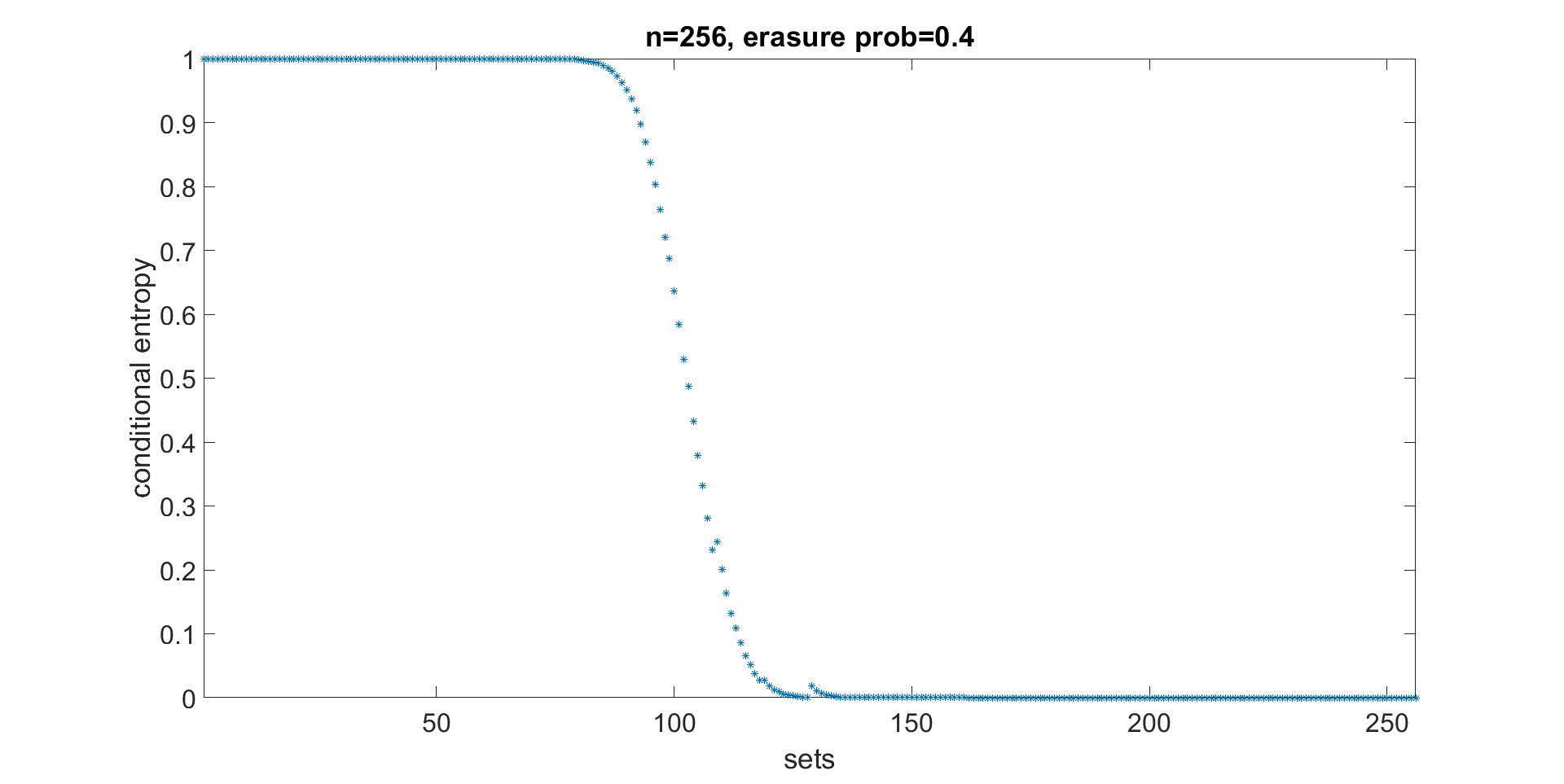}
\caption{Simulation results for Reed-Muller codes with length $n=256$. The channel erasure probability is set to be $0.4$. The $x$-axis corresponds to the index of the sets $i$ ranging from $1$ to $256$, and the $y$-axis corresponds to the conditional entropy $H_{A_i}^{(m,\BEC)}$.}
\label{fig:vb2}
\end{figure}

\begin{figure}[h] 
\includegraphics[width=\textwidth]{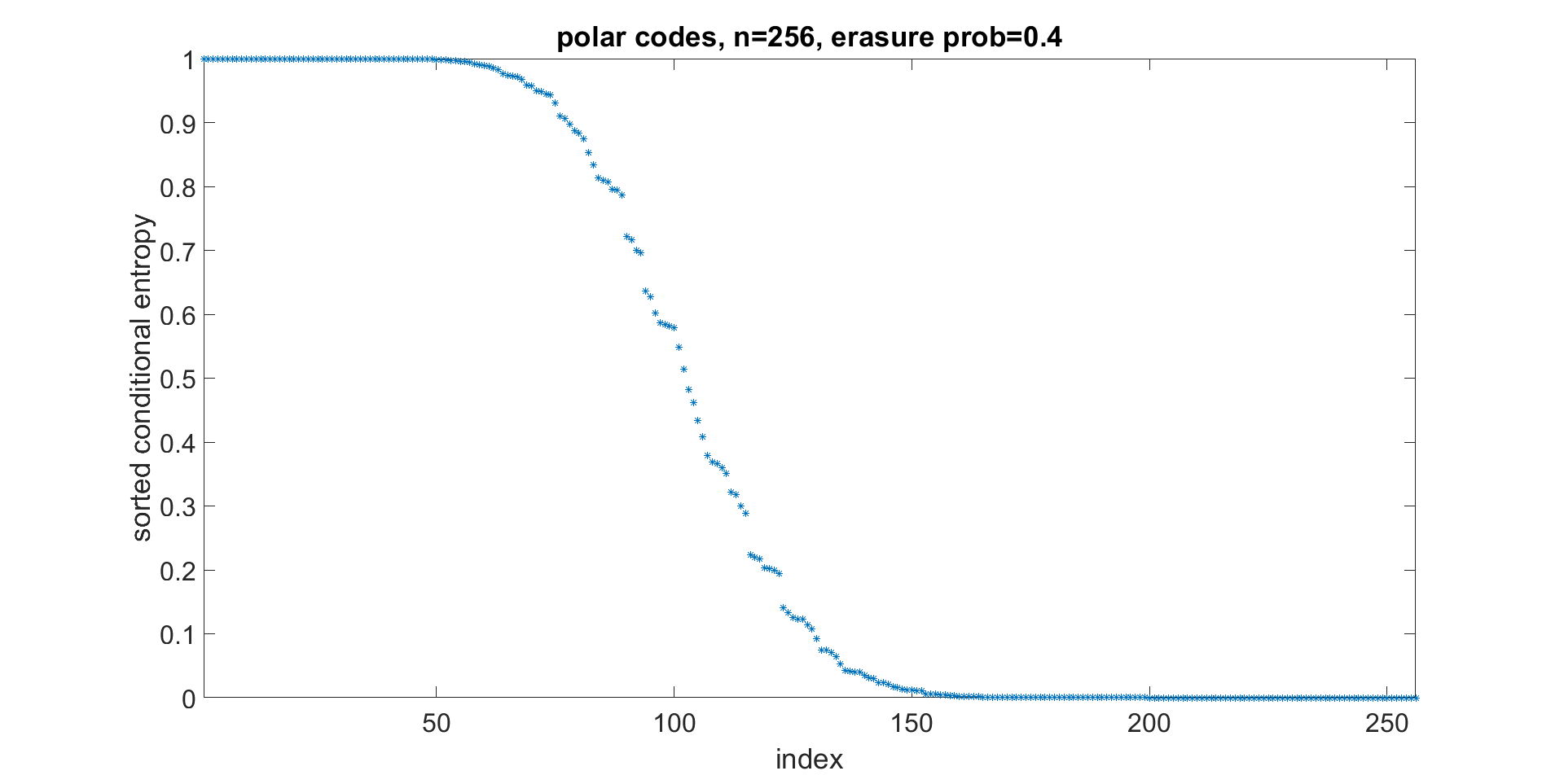}
\caption{Sorted conditional entropy of polar codes with length $n=256$. The channel erasure probability is set to be $0.4$. The transition from $1$ to $0$ is much slower than that of RM codes with the same length.}
\label{fig:vb3}
\end{figure}

\end{document}